\documentclass[pra,aps,floatfix,amsmath,superscriptaddress,twocolumn]{revtex4}
\usepackage{amssymb,enumerate}
\usepackage{graphicx}
\usepackage{graphics}
\usepackage{amsmath}
\usepackage{amsthm,bbm}
\usepackage{color}
\usepackage{dsfont}
\usepackage{hyperref}

\usepackage{mathtools}

\bibliographystyle{plain}


\def\>{\rangle}
\def\<{\langle}

\def\id{\mathsf{id}}
\def\mE{\mathcal{E}}
\def\mB{\mathcal{B}}
\def\mF{\mathcal{F}}

\def\mG{\mathcal{G}}
\def\sH{\mathcal{H}}

\renewcommand{\qedsymbol}{\nobreak \ifvmode \relax \else
	\ifdim \lastskip<1.5em \hskip-\lastskip \hskip1.5em plus0em
	minus0.5em \fi \nobreak \vrule height0.75em width0.5em
	depth0.25em\fi}

\renewcommand{\ge}{\geqslant}
\renewcommand{\le}{\leqslant}
\renewcommand{\geq}{\geqslant}
\renewcommand{\leq}{\leqslant}

\newtheorem{theorem}{Theorem}
\newtheorem{corollary}{Corollary}
\newtheorem{lemma}{Lemma}

\newtheorem{definition}{Definition}
\newtheorem*{lemma*}{Lemma}

\theoremstyle{remark}
\newtheorem{remark}{Remark}

\theoremstyle{definition}

\newcommand{\bea}{\begin{eqnarray}}
\newcommand{\eea}{\end{eqnarray}}
\newcommand{\be}{\begin{equation}}
\newcommand{\ee}{\end{equation}}
\newcommand{\ba}{\begin{equation}\begin{aligned}}
\newcommand{\ea}{\end{aligned}\end{equation}}

\def\be{\begin{equation}}
\def\ee{\end{equation}}

\newcommand{\mU}{\mathcal{U}}

\newcommand{\mH}{\mathcal{H}}

\newcommand{\lr}{\rangle\langle}
\newcommand{\la}{\langle}
\newcommand{\ra}{\rangle}
\newcommand{\tr}{{\rm Tr}}

\newcommand{\mbf}[1]{\mathbf{#1}}
\newcommand{\mbb}[1]{\mathbb{#1}}




\def\>{\rangle}
\def\<{\langle}

\def\E{ {\mathcal E} }

\def\H{ {\mathcal H} }

\def\U {{\mathcal U}}

\def\G {{\mathcal G}}

\def\F{ {\mathcal F} }

\def\B{ {\mathcal B} }

\def\D{ {\mathcal D} }
\def\I{ \mathbbm{1} }

\def\r{\boldsymbol{r}}
\def\s{\boldsymbol{s}}
\def\p{\mathbf{p}}
\def\q{\mathbf{q}}
\def\e{\mathbf{e}}

\def\t{\boldsymbol{t}}


\newcommand{\Trr}[1]{\operatorname{Tr}\!\left[#1\right]}

\usepackage{soul,color}


\begin{document}
	
	
	\title{Quantum majorization and a complete set of entropic conditions\\for quantum thermodynamics}
	
	\author{Gilad \surname{Gour}}\email{gour@ucalgary.ca}
\affiliation{
Department of Mathematics and Statistics,
University of Calgary, AB, Canada T2N 1N4} 
\affiliation{
Institute for Quantum Science and Technology,
University of Calgary, AB, Canada T2N 1N4} 

\author{David \surname{Jennings}}\email{david.b.jennings@gmail.com}
\affiliation{Department of Physics, University of Oxford, Oxford, OX1 3PU, United Kingdom}
\affiliation{Department of Physics, Imperial College London, London SW7 2AZ, United Kingdom}
\author{Francesco \surname{Buscemi}}
\affiliation{Department of Computer Science and Mathematical Informatics, Nagoya University, Chikusa-ku, Nagoya,
	464-8601, Japan}
\email{buscemi@is.nagoya-u.ac.jp}

\author{Runyao \surname{Duan}}
\affiliation{Centre  for  Quantum  Software  and  Information,
Faculty  of  Engineering  and  Information  Technology,
University of Technology Sydney,  NSW 2007,  Australia}
\email{runyao.duan@uts.edu.au}

\author{Iman  \surname{Marvian}}
\affiliation{MIT Research Laboratory of Electronics, USA}
\email{imarvian@gmail.com}

	\date{\today}

	\begin{abstract}
 What does it mean for one quantum process to be more disordered than another? Here we provide a precise answer to this question in terms of a quantum-mechanical generalization of majorization. The framework admits a complete description in terms of single-shot entropies, and provides a range of significant applications. These include applications to the comparison of quantum statistical models and quantum channels, to the resource theory of asymmetry, and to quantum thermodynamics. In particular, within quantum thermodynamics, we apply our results to provide the first complete set of necessary and sufficient conditions for arbitrary quantum state transformations under thermodynamic processes, and which rigorously accounts for quantum-mechanical properties, such as coherence. Our framework of generalized thermal processes extends thermal operations, and is based on natural physical principles, namely, energy conservation, the existence of equilibrium states, and the requirement that quantum coherence be accounted for thermodynamically. In the zero coherence case we recover thermo-majorization   while in the asymptotic coherence regime we obtain a constraint that takes the form of a Page-Wootters clock condition.  
		\end{abstract}

	\maketitle
\section{Introduction}	

Irreversibility -- the loss of order and the increase of disorder --  is a fundamental and ubiquitous feature of physics that is typically described through thermodynamics and thermodynamic entropy. However, its scope goes above and beyond what one would ordinarily consider thermodynamic in nature. For example, the use of quantum entanglement within photonic quantum computing is subject to a form of irreversibility that need not be attached to either a particular energy scale or an equilibrium environment. Increasingly, a broader notion of irreversibility has been developed, that has been shown to include thermodynamic irreversibility as a special case, and has also allowed us to study intrinsically quantum mechanical order (such as entanglement or coherence) in contrast to classically ordered systems. Majorization is at the core of this development.

Majorization is a fundamental tool that finds application across a wide range of subjects from economics and statistics, to physics, chemistry and pure mathematics~\cite{Marshall-Olkin,here-there}. At its core lies a notion of ``deviations from uniformity", and the theory ties together mathematical techniques in convexity, combinatorics and partial orders.  

An example of its use is in statistical mechanics of a physical system with $N$ energy levels. If we assume, for the sake of discussion, that the system is fully degenerate in energy, its thermal equilibrium state is described by the probability distribution $\boldsymbol{\gamma}=(\frac{1}{N}, \dots, \frac{1}{N})$ over the energy levels. Given any two other probability distributions $\p = (p_1, \dots, p_N)$ and $\q=(q_1, \dots ,q_N)$ one might wish to say if one is more or less out of equilibrium than the other. Majorization provides a concrete way of stating this. The distribution $\p$ is more ordered than $\q$ (or ``$\p$ majorizes $\q$", written $\q \prec \p$) if $\q= D \p$ for some doubly stochastic matrix $D$ \cite{Marshall-Olkin}. A crucial property of majorization is that it can be equivalently formulated in terms of a \textit{complete set of monotones}. For example, it is well-known that $\q \prec \p$ if and only if $\sum_k f(p_k) \ge \sum_k f(q_k)$ for all continuous real-valued convex functions $f$. Together these imply that the value of any continuous convex function $f$ on statistical distributions can never increase under doubly stochastic transformations. Such functions are therefore \textit{monotones} that quantify the deviation from equilibrium; moreover, they constitute a \textit{complete} set of monotones because the comparison of their values provides a sufficient condition for the existence of a doubly stochastic transformation.

Majorization also finds extensive use in various parts of quantum information theory, such as in entanglement theory \cite{Nielsen99} and recent formulations of resource theories (See e.g. \cite{gour-mueller}). In particular, it plays a central role in the recent thermodynamic frameworks using quantum information theory \cite{janzing2000thermodynamic, horodecki2013fundamental, aberg2013truly, brandao2011resource,brandao2013second, gour-mueller, narasimhachar,lostaglio2015description, cwiklinski2014limitations,lostaglio2014quantum,lostaglioStochastic, weilenmann, korzekwa2015extraction,chiribella2015, mueller2017}. In particular it was found that state transformations with zero coherences in energy are fully characterised by thermo-majorization \cite{horodecki2013fundamental} (see also earlier works \cite{ruch,Dahl1999}), which is a natural generalisation of majorization~\cite{janzing2000thermodynamic,HO13,Ren16}. However it was shown in \cite{lostaglio2015description} that such thermo-majorization results are insufficient for describing quantum coherence under thermal operations, and that novel coherence measures are required. Low temperature coherence regimes were shown to admit solvable analysis \cite{narasimhachar}, general coherence bounds were developed \cite{cwiklinski2014limitations}, and a framework for coherence based on the concept of asymmetry under time-translations was proposed \cite{lostaglio2015description, lostaglio2014quantum}. However a complete specification of the structure of non-equilibrium quantum states was still lacking.

A natural question is therefore whether there exists a generalization of majorization (or thermo-majorization) that can accommodate such intrinsically quantum-mechanical orderings. Several candidate generalizations exist~\cite{Marshall-Olkin,FBarxiv,BG16}, however the one most relevant to our present work is called \emph{matrix majorization}~\cite{Dahl1999}, which is a specialization to linear algebra of ideas coming from the theory of statistical comparison (see Ref.~\cite{torgersen} and references therein). Given two matrices of real numbers $A$ and $B$, we say that $A$ matrix majorizes $B$, and write $B \prec_m A$, if and only if $B = AX$ for some row stochastic matrix $X$. It is easy to see that this is a generalization of majorization: for the two-row matrices $A=\begin{bmatrix}\p \\ \e\end{bmatrix}$ and $B=\begin{bmatrix}\q \\ \e\end{bmatrix}$ (with $\e\equiv(1,1,...,1)$) the relation $B \prec_m A$ is equivalent to 
$\q\prec\p$. Similarly, other variants of majorization, like thermo-majorization, are special cases of matrix majorization.  However, such an ordering is inherently classical, being ultimately based on stochasticity, as opposed to coherent quantum processes. A central component of our present work is to generalize matrix majorization in a natural way into the quantum-mechanical setting, and to provide applications to a number of topics, in particular a fully quantum-mechanical framework for thermodynamics.

\section{Definition of Quantum Majorization}

Our generalization of matrix majorization, which we call \emph{quantum majorization}, defines a relation on bipartite quantum states, and consequently, due to the channel-state duality property of quantum theory (i.e., the Choi isomorphism~\cite{Choi}), also defines a relation on quantum processes, i.e., completely positive and trace-preserving (CPTP) maps.

\begin{definition}\label{maindef}
Let $\rho^{AB}\in\B(\H_A\otimes\H_B)$ and $\sigma^{AC}\in\B(\H_A\otimes\H_C)$ be two quantum bipartite states.
We say that $\rho^{AB}$ \emph{quantum majorizes} $\sigma^{AC}$, and write $\sigma^{AC} \prec_q \rho^{AB}$, if there exists a CPTP map $\E:\B(\H_B) \to \B(\H_C)$ such that $ \id \otimes \E (\rho^{AB}) = \sigma^{AC}$. 
\end{definition}
\begin{remark}
The preorder $\sigma^{AC} \prec_q \rho^{AB}$ is not symmetric with respect to the action of $\mE$. It means that $\rho^{AB}$ quantum majorizes $\sigma^{AC}$ \emph{on $B$}. However, in the remaining of this paper, it will be clear from the text that the action of $\mE$ is on system $B$.
\end{remark}

It is clear from Definition~\ref{maindef} that $\rho^A=\sigma^A$ is a necessary condition, called the \emph{compatibility condition}, for the ordering of states to hold since $\E$ is trace-preserving,  and when it holds the two states are said to be \emph{compatible}. Moreover, in the special case that the marginals satisfy $\rho^A=\sigma^A=\frac{1}{d_A}\I^A$, we can express the bipartite states as the Choi matrices $\rho^{AB}=\id\otimes\D\left(\phi_{+}^{AA'}\right)$ and $\sigma^{AC}=\id\otimes\F\left(\phi_{+}^{AA'}\right)$, where $\D: \B(\H_{A'}) \to \B(\H_B)$ and $\F:\B(\H_{A'}) \to \B(\H_C)$ are two quantum processes (CPTP maps), and $\phi_{+}^{AA'}$ is the projection on     the maximally entangled state $|\phi_{+}^{AA'}\rangle=\frac{1}{\sqrt{d_A}} \sum_{i=1}^{d_A} |ii\rangle$, where $\{|i\rangle\}_{i=1}^{d_A}$ is an orthonormal basis for $A$   . Therefore, in this case the condition $ \id \otimes \E (\rho^{AB}) = \sigma^{AC}$ becomes equivalent to the \emph{degradability} of $\D$ into $\F$, that is, $\F = \E \circ \D$, and we denote it simply by $\F \prec_q \D$. Notice that notions equivalent to quantum majorization have previously been considered in Refs.~\cite{shmaya,chefles,FBCMP,BSD2014,Jenc2015} in the contexts of quantum statistics and quantum information theory.

\begin{figure}
	\centering
	\includegraphics[width=1\linewidth]{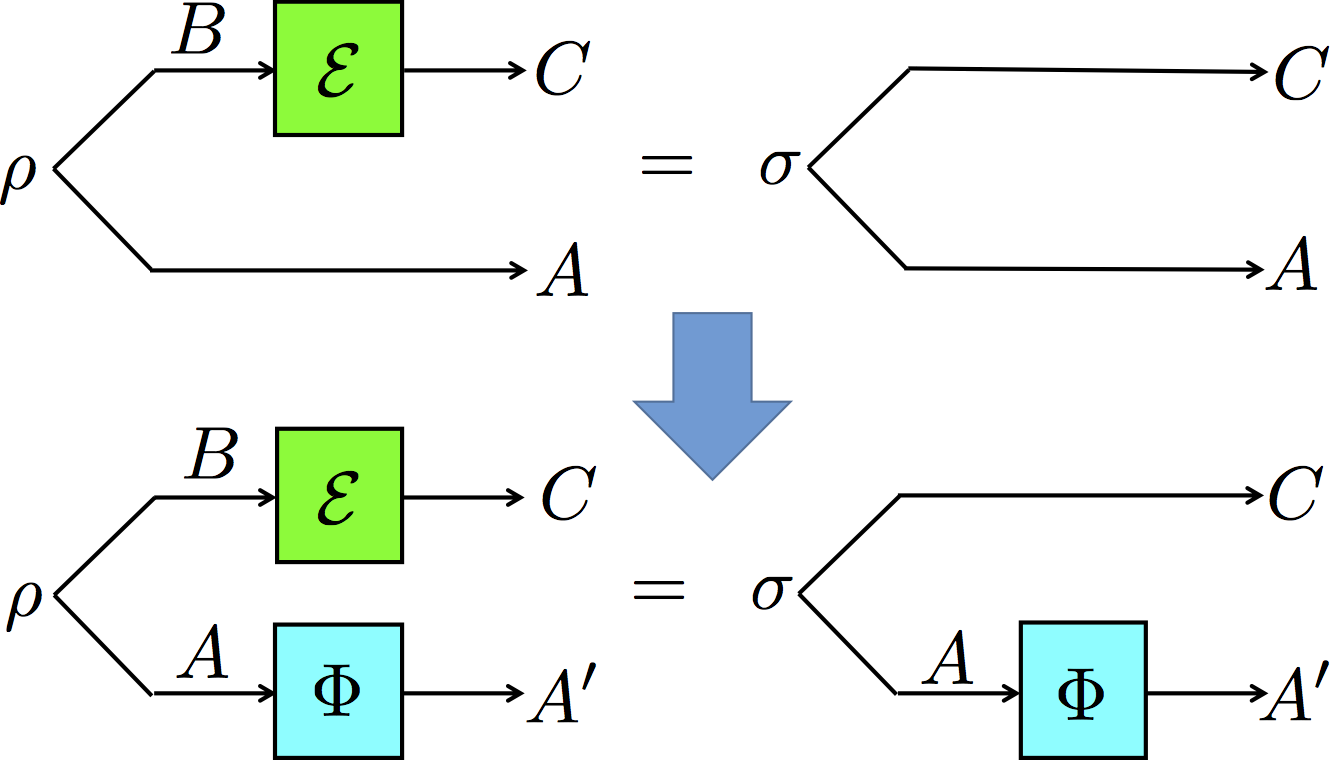}
	\caption{The condition of quantum majorization $\sigma^{AC}\prec_{q}\rho^{AB}$ implies the infinite set of relations $(\Phi\otimes\id)(\sigma^{AC})\prec_{q}(\Phi\otimes\id)(\rho^{AB})$, where $\Phi$ is any CPTP map acting on system $A$ (cfr. Eq.~(\ref{eq:necessary-conds}) in the main text). Theorem~\ref{mainthm} provides a complete set of monotones for quantum majorization, expressed as entropic functions of the bipartite state and the channel $\Phi$ acting on it.}
	\label{fig:unique-pic}
\end{figure}

Quantum majorization hence generalizes classical stochasticity and captures the notion that the process $\F$ is in some sense ``more disordered" than $\D$, since it can be obtained from $\D$ via $\E$. However, it does not say whether $\E$ is experimentally easy to perform. Typically, in resource theories it is important to place such additional restrictions, and demand that $\E$ is a ``free" operation of the theory. Many resource theories, such as entanglement theory, do not admit a simple specification, however, as we shall see shortly, in both the resource theories of asymmetry and thermodynamics, such a restriction of $\E$ to lie in a subset of free (symmetric or thermodynamic) processes can be made with a natural modification of our core result. 

\section{Characterization of Quantum Majorization}

Given the two bipartite states $\rho^{AB}$ and $\sigma^{AC}$, how can we determine if $\rho^{AB}$  quantum majorizes $\sigma^{AC}$? One simple and intuitive necessary condition, that follows from the data processing inequality, is that $$S(A|B)_{\rho}\leq S(A|C)_{\sigma}\;,$$ where $S(A|B)=S(A,B)-S(B)$ is the conditional entropy, and $S(\rho)=-\tr[\rho\log\rho]$ is the von-Neumann entropy. The intuition is that, if $\sigma^{AC} \prec_q \rho^{AB}$, then information about system $A$ is more accessible from system $B$ than from system $C$. Hence, the uncertainty of $A$ given $B$, i.e., $S(A|B)$, can only be smaller than the uncertainty of $A$ given $C$, i.e., $S(A|C)$. However, only one entropic condition is far from being sufficient to completely characterize quantum majorization.

In order to produce more necessary conditions, one can use a similar intuition to generate infinitely many necessary conditions that follows from the following observation (see Fig.~\ref{fig:unique-pic}): 
\be\label{eq:necessary-conds}
\sigma^{AC} \prec_q \rho^{AB}\implies\Phi\otimes\id \left(\sigma^{AC}\right) \prec_q \Phi\otimes\id \left(\rho^{AB}\right)
\ee
for any quantum process $\Phi: \B(\H_A) \to\B(\H_{A'})$. 
Note that $\Phi$ is acting on system $A$ while $\mE$ in Definition~\ref{maindef} is acting on system $B$.
We therefore conclude that if $\sigma^{AC} \prec_q \rho^{AB}$ then, for any quantum process $\Phi$, we must have:
\be\label{condent}
S(A'|B)_{\Phi\otimes\id\left(\rho^{AB}\right)}\leq S(A'|C)_{\Phi\otimes\id\left(\sigma^{AC}\right)}\;.
\ee
While the conditions above are necessary, again they are not sufficient, and even in the purely classical case: there exist \emph{classical} states $\rho^{AB}$ and $\sigma^{AC}$ such that $\sigma^{AC} \not\prec_q \rho^{AB}$, even though the above equation holds of all $\Phi$ (and any dimensions of $A'$)~\cite{KM1977,ElG1977,Buscemi}. 

On the other hand, in the following central result of our paper, we show that if one replaces the conditional (von-Neumann) entropy in~\eqref{condent} with the \textit{conditional min-entropy}~\cite{Ren05}, then the inequalities in~\eqref{condent} indeed provide, if all simultaneously satisfied, a sufficient condition for quantum majorization. Moreover, we can restrict $\Phi$ to be an entanglement breaking channel, and bound the dimension of system $A'$ to be no greater than the dimension of system $C$. Similar results, dubbed ``reverse data-processing theorems,'' have been obtained before~\cite{FBarxiv,Buscemi,Jenc2015,FB-nww}, although in a different framework involving extra ancillas and a classical reference system, while the present relations are fully quantum and do not need additional external systems.

The conditional min-entropy, $H_{\rm min} (A|B)_\Omega$, of a bipartite state $\Omega^{AB}$, is defined as~\cite{Ren05}
\be\label{eq:min-ent-def}
H_{\rm min} (A|B)_\Omega : = -\log\inf_{\tau^B \ge 0} \{ \Trr{\tau^B} : \I^A \otimes \tau^B \ge \Omega^{AB}\}.
\ee
It is known to be a single-shot analog of the conditional (von-Neumann) entropy. This analogy is particularly motivated by the fully quantum asymptotic equipartition property~\cite{Tom09}, which states that in the asymptotic limit of many copies of $\Omega^{AB}$, the smooth version of $H_{\min}(A|B)$ approaches the conditional (von-Neumann) entropy.  The conditional min-entropy has numerous applications in single-shot quantum information (e.g., Ref.~\cite{Toma-book} and references therein), quantum hypothesis testing (e.g.~\cite{Buscemi,BG16} and references therein), and quantum resource theories~\cite{Gour17}.

\begin{theorem}\label{mainthm}
Let $\rho^{AB}\in \B(\H_A \otimes \H_B) $ and $\sigma^{AC}\in \B(\H_A \otimes \H_C)$ be two compatible bipartite quantum states. Denote the dimension of any system $X$ as $d_X\in \mathbb{N}$. The following are equivalent:
\begin{enumerate}
\item The state $\rho^{AB}$ quantum majorizes $\sigma^{AC}$, 
\begin{equation}\label{bdef}
\sigma^{AC}\prec_q \rho^{AB} . 
\end{equation}
\item For any quantum process (CPTP linear map) $\Phi: \B(\H_A) \to\B(\H_{A'})$, with $d_{A'}=d_C$, 
\be\label{bhmin}
H_{\min}(A'|B)_{\Phi\otimes\id\left(\rho^{AB}\right)}
\leq H_{\min}(A'|C)_{\Phi\otimes\id\left(\sigma^{AC}\right)}
\ee
\item Eq.~\eqref{bhmin} holds for any measure-and-prepare quantum channel 
$\Phi:\B(\H_A)\to\B(\H_{A'})$ of the form: 
\be\label{formphi1}
\Phi\left(\eta^A \right)=\sum_{j=1}^{d_{A}^{2}}\tr\left[M_{j}^A\eta^A \right]\omega_{j}^{A'}\;,
\ee
where $\{M_{j}^A\}$ is an arbitrary, but fixed, informationally complete POVM on system $A$, while the states $\{\omega_{j}^{A'}\}$ can freely vary.
\item For any $\Phi:\B(\H_A) \to\B(\H_{A'})$ of the form~\eqref{formphi1} the following holds:
\be\label{eq2}
2^{-H_{\min}(A'|B)_{\Phi\otimes\id\left(\rho^{AB}\right)}}
\geq d_C\;\la\phi_+^{A'C}|\Phi\otimes\id\left(\sigma^{AC}\right)|\phi_+^{A'C}\ra
\ee
where $|\phi_+^{A'C}\>$ is a maximally entangled state between systems $A'$ and $C$.
\end{enumerate}
\end{theorem}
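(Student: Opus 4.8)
The plan is to prove the chain of implications (1)$\Rightarrow$(2)$\Rightarrow$(3)$\Rightarrow$(4)$\Rightarrow$(1), so that the only substantial work is the final implication (4)$\Rightarrow$(1), the ``reverse data-processing'' direction. The implication (1)$\Rightarrow$(2) is immediate from the data-processing inequality for the conditional min-entropy applied to the channel $\id\otimes\mE$ that realizes $\sigma^{AC}=\id\otimes\mE(\rho^{AB})$, together with the observation~\eqref{eq:necessary-conds} that $\Phi\otimes\id$ commutes past this action. The implication (2)$\Rightarrow$(3) is trivial since (3) is a special case of (2) (the channels in~\eqref{formphi1} have output dimension $d_{A'}=d_C$ once we note that any such $\Phi$ can be taken with $d_{A'}=d_C$; if not one pads). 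The equivalence (3)$\Leftrightarrow$(4) is a direct unfolding of definitions: for $\Phi$ of the measure-and-prepare form, one shows that $2^{-H_{\min}(A'|C)_{\Phi\otimes\id(\sigma^{AC})}}$, optimized over the free states $\{\omega_j^{A'}\}$, is controlled by the overlap with the maximally entangled state. More precisely, I would use the standard identity that for any bipartite $\Omega^{A'C}$ with $d_{A'}=d_C$, one has $\max_{\Lambda}\,d_C\,\la\phi_+|\,(\id\otimes\Lambda)(\Omega^{A'C})\,|\phi_+\ra = 2^{-H_{\min}(A'|C)_\Omega}$, where the maximum is over CPTP maps $\Lambda$ on $C$ — this is the well-known operational characterization of conditional min-entropy as a (sub-normalized) fidelity/overlap with maximal entanglement. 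Combining this with the freedom in choosing $\{\omega_j^{A'}\}$ in $\Phi$ converts the min-entropy inequality in~(3) into the overlap inequality in~(4), up to tracking that the optimization over $\Lambda$ on $C$ can be absorbed into the choice of the prepared states.

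The heart of the proof is (4)$\Rightarrow$(1), where one must construct the CPTP map $\mE:\B(\H_B)\to\B(\H_C)$ from the assumed inequalities. I would phrase the existence of $\mE$ as a feasibility problem for a semidefinite program: writing $J_\mE\in\B(\H_B\otimes\H_C)$ for the Choi matrix of the sought map, the constraints are $J_\mE\ge 0$, $\PTr{C}{J_\mE}=\I^B$, and $\id\otimes\mE(\rho^{AB})=\sigma^{AC}$, the last being a linear constraint $\Tr_B[(\rho^{AB})^{T_B}\otimes\I^C\,(\I^A\otimes J_\mE)]=\sigma^{AC}$ (up to transpose conventions I would fix carefully). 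Since this feasibility SDP has no interior obstruction beyond the linear constraints, by SDP (Lagrangian) duality it is infeasible if and only if there is a Hermitian multiplier $W^{AC}$ for the equality constraint witnessing infeasibility; the compatibility condition $\rho^A=\sigma^A$ ensures the problem is not trivially excluded. The strategy is then to show that such a witness $W^{AC}$, once it exists, can be post-processed into a measure-and-prepare channel $\Phi$ of the form~\eqref{formphi1} that \emph{violates}~\eqref{eq2}, contradicting assumption~(4). Concretely, the dual witness $W^{AC}$ decomposes (via its action on the $A$ system, expanded in the dual basis of the informationally complete POVM $\{M_j^A\}$) into coefficients that single out the prepared states $\omega_j^{A'}$, while the $C$-part of $W$ dictates the effective map on $C$ against which the overlap is evaluated; choosing $\Phi$ according to this data turns the SDP duality gap into precisely the gap between the two sides of~\eqref{eq2}.

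The main obstacle I anticipate is exactly this translation step: showing that the abstract SDP dual witness for infeasibility of the channel-existence problem can always be realized by a channel $\Phi$ of the restricted measure-and-prepare form~\eqref{formphi1} with a \emph{fixed} informationally complete POVM, and with output dimension capped at $d_C$. Two points need care here. First, one must argue that restricting to rank-one-output (measure-and-prepare) $\Phi$'s loses nothing — this should follow because the min-entropy quantity on the $B$ side is already being maximized over the prepared states, and a general $\Phi$ can be simulated by a measure-and-prepare one without decreasing the relevant overlap; equivalently, the extreme points of the relevant optimization are measure-and-prepare. Second, the dimension bound $d_{A'}\le d_C$: the natural output dimension coming from a general dual witness $W^{AC}$ is at most $\operatorname{rank}(W^{AC})\le d_A d_C$, so one must use the maximally-entangled-overlap structure of~\eqref{eq2} to compress the $A'$ register down to $d_C$ — I expect this to come from the fact that the overlap $\la\phi_+^{A'C}|\cdots|\phi_+^{A'C}\ra$ only ``sees'' a $d_C$-dimensional subspace of $A'$ correlated with $C$, so any larger $A'$ can be projected without affecting the inequality. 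Handling transpose conventions consistently between the Choi isomorphism, the definition of $|\phi_+\ra$, and the SDP duality will be the routine-but-error-prone bookkeeping throughout.
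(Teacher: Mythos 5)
Your skeleton (1)$\Rightarrow$(2)$\Rightarrow$(3)$\Rightarrow$(4)$\Rightarrow$(1) is sound, and the three easy implications are handled essentially as in the paper: (1)$\Rightarrow$(2) is data processing for $H_{\min}$, (2)$\Rightarrow$(3) is specialization, and (3)$\Rightarrow$(4) follows because $2^{-H_{\min}(A'|C)_\Omega}=d_C\max_{\Lambda}\la\phi_+|(\id\otimes\Lambda)(\Omega^{A'C})|\phi_+\ra\ge d_C\la\phi_+|\Omega^{A'C}|\phi_+\ra$ (take $\Lambda=\id$) --- note the paper only needs this one-sided bound, not a full absorption of $\Lambda$ into the prepared states, so your worry about that absorption is moot for the chain of implications.

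The genuine gap is in (4)$\Rightarrow$(1), and it is exactly the step you flag but do not carry out: converting an abstract infeasibility witness $W^{AC}$ for the Choi-matrix SDP into a \emph{measure-and-prepare} channel with a \emph{fixed} informationally complete POVM and output dimension exactly $d_C$ that violates~\eqref{eq2}. In your formulation the witness is a generic Hermitian operator on $\H_A\otimes\H_C$, and both the restriction to the form~\eqref{formphi1} and the compression of $A'$ down to $d_C$ remain assertions ("I expect this to come from\dots"). There is also a constraint-qualification issue: Farkas-type duality for SDP feasibility needs a closedness or Slater condition that you do not verify. The paper sidesteps all of this by reversing the order of operations. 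It first expands $\rho^{AB}=\sum_k Q_k^A\otimes\widetilde\rho_k^B$ and $\sigma^{AC}=\sum_k Q_k^A\otimes\widetilde\sigma_k^C$ in the dual basis $\{Q_k^A\}$ of the fixed IC-POVM $\{M_j^A\}$, reducing~\eqref{bdef} to the existence of a single CPTP map with $\mE(\rho_j^B)=\sigma_j^C$ for a \emph{finite family} indexed by $j$. It then applies the hyperplane separation theorem (for the closed bounded convex set of achievable expectation vectors $r_{ij}=\Trr{\mE(\rho_i^B)X_j^C}$, the continuous image of the compact set of channels --- no Slater condition needed). The separating functional is automatically a family of Hermitian operators $Z_i^C$ living on $C$; subtracting a multiple of $\I^C$ and rescaling turns each into a density matrix $\omega_i^C$, which after the transpose/ricochet trick becomes the prepared state $\omega_i^{A'}$ on $A'\cong C$. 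Thus the measure-and-prepare form and the bound $d_{A'}=d_C$ are built into the construction rather than extracted afterwards, and the left-hand side of~\eqref{eq2} appears via the K\"onig--Renner--Schaffner identity $\max_{\mE}\la\phi_+|(\id\otimes\mE)(\Omega^{A'B})|\phi_+\ra=\tfrac{1}{d_{A'}}2^{-H_{\min}(A'|B)_\Omega}$. If you want to keep your SDP-duality framing, you would need to show that the dual witness can always be decomposed as $W^{AC}=\sum_j Q_j^A\otimes Z_j^C$ and then reproduce precisely this shifting-and-rescaling argument; at that point you will have rederived the paper's proof.
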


\begin{remark}
In the classical case, both $\rho^{AB}=\sum_{x,y}p_{xy}|x\lr x|\otimes|y\lr y|\equiv P$ and $\sigma^{AC}=\sum_{x,z}q_{xz}|x\lr x|\otimes|z\lr z|\equiv Q$ are diagonal, where $P$ (and $Q$) is the matrix whose components are the probabilities $p_{xy}$ ($q_{xz}$). Therefore, the relation $\sigma^{AC}=\id\otimes\mE\left(\rho^{AB}\right)$ can be expressed as $Q=SP$, where $S$ is a column stochastic matrix, so that $Q^T\prec_m P^T$~\footnote{The relation $Q=SP$ is equivalent to $Q^T=P^TS^T$, with $S^T$ being a row stochastic matrix.}. Dahl obtained in~\cite{Dahl1999} that $P$ matrix-majorizes $Q$ if and only if for all sub-linear functionals $f$, that can be written as a maximum of a finite number of linear functionals, the following holds:
\be
\sum_{j}f(\mbf{p}_j)\geq \sum_{k}f(\mbf{q}_k)\;,
\ee
where  $\mbf{p}_j$ and $\mbf{q}_k$ are the rows of $P$ and $Q$, respectively. Since classically $2^{-H_{\min}(A|B)}$ is a sub-linear functional (see more details in the supplementary material), our theorem above provides the same result for the classical case, with a slight improvement that $f$ can be restricted to sub-linear functionals that can be written as a maximum of at most $d_C$ linear functionals.
\end{remark}

\begin{remark}
The conditions in Eqs.~(\ref{bhmin},\ref{formphi1}) are given in a form of monotones; i.e. functions that behave monotonically under certain operations (in our case under quantum majorization). In quantum resource theories monotones quantify resources as they do not increase under free operations. As we will see below, the conditional min-entropies that appear in Theorem~\ref{mainthm} can be used to quantify asymmetry in the resource theory of quantum reference frames, and athermality in quantum thermodynamics. Since Eq.(\ref{bhmin}) (or Eq.(\ref{eq2})) has to hold for any set of density matrices $\{\omega_{j}^{A'}\}$, quantum majorization is characterized in Theorem~\ref{mainthm} with infinite number of monotones. This is a consequence of the fact that we consider exact transformations, and typically an exact (algebraic) solution to such an SDP feasibility problem is NP-hard. However, we show in the supplemental material that if we allow for a small error, then determining if $\rho^{AB}$ quantum majorizes $\sigma^{AC}$ can be solved efficiently using semidefinite programming.
\end{remark}

If only system $A$ is classical, that is
the states $\rho^{AB}=\sum_{i}p_i|i\lr i|\otimes\rho_i$ and $\sigma^{AC}=\sum_{i}p_i|i\lr i|\otimes\sigma_i$ are classical-quantum states,
we get that~\eqref{bdef} is equivalent to 
\be\label{sc}
\sigma_i=\mE(\rho_i)
\ee 
for all $i$ such that $p_i>0$. This is a classic problem in quantum hypothesis testing, and the results presented here complement previous results in the same direction~\cite{AU1980,chefles,FBCMP,Matsu2010,Buscemi,Jenc2015,Teiko2012}. In particular, it can be shown (see Supplementary Material) that the theorem above implies the following corollary:
\begin{corollary}
	There exists $\mE$ satisfying~\eqref{sc} if and only if for any set of $n$ density matrices $\{\omega_{i}^{A}\}_{i=1}^{n}$ we have
$H_{\min}(A|B)_{\Omega}\leq H_{\min}(A|C)_{\Omega}$, where
\be
\Omega^{ABC}=\frac{1}{n}\sum_{i=1}^{n}\omega_{i}^{A}\otimes\rho_{i}^{B}\otimes\sigma_{i}^{C}\;.
\ee
The same relation holds if the uniform distribution $1/n$ is replaced with any other arbitrary distribution $q_i$, with the only condition that $q_i>0$.
\end{corollary}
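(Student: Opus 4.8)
The plan is to derive the corollary by specializing Theorem~\ref{mainthm} to the classical--quantum pair $\rho^{AB}=\sum_{i=1}^n q_i\,|i\lr i|\otimes\rho_i^B$ and $\sigma^{AC}=\sum_{i=1}^n q_i\,|i\lr i|\otimes\sigma_i^C$, where $A$ is taken to be an $n$-dimensional classical register and $\{q_i\}$ is any strictly positive distribution. First I would observe that, because $A$ is classical, $\id\otimes\mE(\rho^{AB})=\sum_i q_i\,|i\lr i|\otimes\mE(\rho_i)$, which equals $\sigma^{AC}$ if and only if $\mE(\rho_i)=\sigma_i$ for every $i$ with $q_i>0$, i.e.\ for all $i$ since all $q_i>0$. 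Hence ``there exists $\mE$ satisfying \eqref{sc}'' is precisely the relation $\sigma^{AC}\prec_q\rho^{AB}$, and in particular whether such an $\mE$ exists does not depend on the choice of the (strictly positive) weights $q_i$; this lets me treat the uniform and the general-$q_i$ versions of the corollary in one stroke.

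Next I would compute how a channel acting on the $A$-system transforms these states. For any CPTP map $\Phi:\B(\H_A)\to\B(\H_{A'})$ one has $\Phi\otimes\id(\rho^{AB})=\sum_i q_i\,\omega_i^{A'}\otimes\rho_i^B$ with $\omega_i^{A'}:=\Phi(|i\lr i|)$ a density matrix, and similarly $\Phi\otimes\id(\sigma^{AC})=\sum_i q_i\,\omega_i^{A'}\otimes\sigma_i^C$; conversely, any $n$-tuple of density matrices $\{\omega_i^{A'}\}$ on $\H_{A'}$ is realized in this way by the measure-and-prepare channel $\Phi(\eta)=\sum_i\la i|\eta|i\ra\,\omega_i^{A'}$. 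Thus $\Phi\otimes\id(\rho^{AB})$ and $\Phi\otimes\id(\sigma^{AC})$ are exactly the $AB$- and $AC$-marginals of the tripartite state $\Omega^{ABC}=\sum_i q_i\,\omega_i^A\otimes\rho_i^B\otimes\sigma_i^C$ from the statement, and $H_{\min}(A'|B)$ and $H_{\min}(A'|C)$ depend only on those marginals.

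With these two observations in hand, I would note that $\Phi\otimes\id$ applied to either c--q state depends on $\Phi$ only through the $n$ density matrices $\Phi(|i\lr i|)$, so condition~(2) of Theorem~\ref{mainthm} -- which demands the min-entropy inequality for all CPTP $\Phi:\B(\H_A)\to\B(\H_{A'})$ with $d_{A'}=d_C$ -- is equivalent to demanding $H_{\min}(A|B)_\Omega\le H_{\min}(A|C)_\Omega$ for all $n$-tuples $\{\omega_i^A\}$ of density matrices on a $d_C$-dimensional space. This gives the ``if'' direction: such inequalities imply $\sigma^{AC}\prec_q\rho^{AB}$ by Theorem~\ref{mainthm}, hence $\mE$ exists by the first paragraph. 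For the ``only if'' direction, if $\mE$ exists then $\sigma^{AC}=\id\otimes\mE(\rho^{AB})$, and since $\Phi$ and $\mE$ act on disjoint tensor factors, $\Phi\otimes\id(\sigma^{AC})=\id\otimes\mE\bigl(\Phi\otimes\id(\rho^{AB})\bigr)$ for every $\Phi$; the data-processing inequality for the conditional min-entropy then gives $H_{\min}(A'|C)_{\Phi\otimes\id(\sigma^{AC})}\ge H_{\min}(A'|B)_{\Phi\otimes\id(\rho^{AB})}$ for every $\Phi$, of arbitrary output dimension, which re-expresses as $H_{\min}(A|B)_\Omega\le H_{\min}(A|C)_\Omega$ for every tuple $\{\omega_i^A\}$, of any dimension. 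I expect the one delicate point -- the closest thing to an obstacle -- to be this dimension bookkeeping: Theorem~\ref{mainthm} supplies the equivalence only for $d_{A'}=d_C$, so one must note separately that the ``if'' direction needs only $d_C$-dimensional tuples (covered by the theorem) while the ``only if'' direction for arbitrary-dimensional tuples comes directly from data processing rather than from the theorem.
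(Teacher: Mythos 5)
Your proposal is correct, but it takes a different route from the paper's. The paper's actual proof of this corollary is Lemma~\ref{mainlemma} of the Supplementary Material, which establishes the equivalence directly (for $d_A=d_C$ and any fixed $q_i>0$) via a convex-separation argument: the existence of a CPTP map achieving $\mE(\rho_i)=\sigma_i$ is rewritten as membership of a vector in a closed bounded convex set, the separating-hyperplane theorem converts this into a family of linear inequalities indexed by tuples of states $\{\omega_i\}$, and the K\"onig--Renner--Schaffner identification of $2^{-H_{\min}(A|B)}$ with an optimal singlet fraction turns those inequalities into the min-entropy conditions; Theorem~\ref{mainthm} is then \emph{derived from} this lemma. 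You go in the opposite direction: you take Theorem~\ref{mainthm} as a black box and specialize it to the classical--quantum pair $\rho^{AB}=\sum_i q_i|i\rl i|\otimes\rho_i$, $\sigma^{AC}=\sum_i q_i|i\rl i|\otimes\sigma_i$, observing that for such states the action of $\Phi\otimes\id$ depends on $\Phi$ only through the tuple $\{\Phi(|i\rl i|)\}$, so that condition~(2) of the theorem collapses exactly to the corollary's min-entropy inequalities over tuples $\{\omega_i^{A'}\}$. This is logically sound and matches the main text's remark that ``the theorem above implies the following corollary''; your handling of the two delicate points --- that equality of the c--q states forces $\mE(\rho_i)=\sigma_i$ only where $q_i>0$, and that the ``only if'' direction for tuples of arbitrary dimension must come from data processing rather than from the theorem (which fixes $d_{A'}=d_C$) --- is exactly right. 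What your route buys is brevity: no re-running of the separation argument. What the paper's route buys is that the lemma is the engine behind the theorem itself, so proving the corollary as a standalone lemma first avoids any appearance of circularity and yields the general-$q_i$ statement with no extra work.
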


\section{A complete set of entropic conditions for the Resource Theory of Asymmetry}
  
So far we considered the relation $\sigma^{AC}=\id\otimes\mE\left(\rho^{AB}\right)$ with arbitrary CPTP map $\mE:\B(\mH_B)\to\B(\mH_C)$.
We now impose additional constraint on $\mE$, requiring it to be $G$-covariant with respect to a compact group $G$. That is, $\mE$ is $G$-covariant with respect to two unitary representations of $G$ on systems $B$ and $C$, denoted respectively by $\{V_{g}\}_{g\in G}$ and $\{U_{g}\}_{g\in G}$, if
\be
U_{g} \E (\rho) U_{g^{-1}} = \E (V_{g} \rho V_{g^{-1}})\quad\forall\;g\in G\;.
\ee
We write $\sigma^{AC}\prec_{q}^{G}\rho^{AB}$, if $\sigma^{AC}=\id\otimes\mE\left(\rho^{AB}\right)$ with a $G$-covariant CPTP map $\mE$.

Theorem~\ref{mainthm} can be easily upgraded to accommodate $G$-covariant maps. Particularly,
it can be shown that $\sigma^{AC}\prec_{q}^{G}\rho^{AB}$ if and only if
\be
H_{\min}(A'|B)_{\mG\left[\Phi\otimes\id\left(\rho^{AB}\right)\right]}\leq H_{\min}(A'|C)_{\mG\left[\Phi\otimes\id\left(\sigma^{AC}\right)\right]}
\ee
for all CPTP entanglement breaking maps $\Phi$ of the form~\eqref{formphi1}. Here $\mG:\B\left(\mH_{A'}\otimes\mH_{C}\right)\to\B\left(\mH_{A'}\otimes\mH_{C}\right)$ is the bipartite $G$-twirling map given by
\be
\mG[\tau^{A'C}]=\int dg\;(\overline{U}_g\otimes U_g)\;\tau^{A'C}\;(\overline{U}_{g}^{\dag}\otimes U_{g}^{\dag})\;,
\ee
where the over bar denotes the complex conjugation made with respect to the basis fixed by the choice of the maximally entangled state $|\phi_{+}^{A'C}\>$.

In the special case in which both $\rho^{AB}=|0\lr 0|^A\otimes\rho^B$ and $\sigma^{AC}=|0\lr 0|^A\otimes\sigma^C$ are product states, our theorem is simplified to the following statement: $\rho^B$ can be converted to $\sigma^C$ by a $G$-covariant map if and only if for any density matrix $\eta^{A'}$, 
\be
H_{\min}(A'|B)_{\mG[\eta^{A'}\otimes \rho^{B}]}\leq H_{\min}(A'|C)_{\mG[\eta^{A'}\otimes\sigma^{C}]}\;.
\ee
 Therefore, the quantities $H_{\min}(A'|B)_{\mG[\eta^{A'}\otimes \rho^{B}]}$, for varying reference state $\eta^{A'}$, provide a complete set of asymmetry monotones for the resource theory of asymmetry \cite{gour2008resource,spekkens2009relative, marvian2013modes, ahmadi2012way, marvian2013asymmetry,marvian2014extending, marvianthesis, cirstoiu2017}.   In other words, for any given state $\eta^{A'}$, the function $H_{\min}(A'|B)_{\mG[\eta^{A'}\otimes \rho^{B}]}$ provides a single-copy quantification of the amount of asymmetry of state $\rho^{B}$ relative to the symmetry group $G$.
We are now ready to discuss the application of this result to quantum thermodynamics.

\section{A complete set of entropic conditions for Quantum Thermodynamics}

While thermodynamics in macroscopic, equilibrium, and classical regimes is well understood \cite{fermi1956thermodynamics, callen}, there is the fundamental question of how one {  can extend thermodynamic notions into non-equilibrium, finite-sized systems \cite{jarzynski, crooks,campisi, reeb}, and in particular systems displaying highly non-classical properties such as quantum coherence, contextuality, and entanglement \cite{partovi,rio2010thermodynamic,jennings2010entanglement, modi, goold,vinjanampathy}.} One particular approach to this problem \cite{janzing2000thermodynamic, horodecki2013fundamental, aberg2013truly, brandao2011resource,brandao2013second, gour-mueller, narasimhachar,lostaglio2015description, cwiklinski2014limitations,lostaglio2014quantum,lostaglioStochastic, weilenmann, korzekwa2015extraction,chiribella2015, mueller2017} has been to utilize tools and concepts developed in the study of entanglement, which is understood within the framework of resource theories. A resource theory provides a way to quantify physical characteristics that are not simply given by Hermitian observables, and is defined once we specify a set of \emph{free states}, as those that do not have the properties one wishes to study, together with set of \emph{free operations}, that are compatible with the set of free states in the sense that their action on any free state always yields another free state. 

This approach of analysing thermodynamics in terms of its process structure (instead of starting with problematic terms such as `heat' or `work' or `entropy') turns out to have a long and successful history dating back to the 1909 seminal work of Carath\'eodory \cite{caratheodory1909}. Other notable accounts were obtained in 1964 by Giles \cite{giles1965mathematical} and more recently in 1999 by Lieb and Yngvason \cite{lieb1999physics}, who provided a thorough analysis in terms of adiabatic accessibility. Moreover, it has recently been shown in \cite{weilenmann} that the thermodynamic structure of incoherent quantum states obtained from an information-theoretic perspective coincides with the phenomenological analysis in \cite{lieb1999physics}, which demonstrates the soundness of the resource theoretic approach.

In thermodynamics a preferred class of states are singled out as free states from the condition of complete passivity \cite{pusz1978, lenard1978}. In the simplest case the Gibbs state $ \frac{1}{Z} e^{-\beta H}$, with $\beta =(kT)^{-1}$ and $Z = \tr [ e^{-\beta H}]$, is the only quantum state that can be freely admitted without trivialising the theory energetically. More generally, in the presence of additional   additive conserved charges $\{X_1, \dots X_n\}$, such as angular momenta and particle numbers,     this can be extended (under certain assumptions on external constraints \cite{callen,bailan,halpern2014, vaccaro2011, halpern2016,lostaglio2015,halpern, guryanova}) to the generalized Gibbs state 
\begin{equation}\label{gge}
\gamma^A = \frac{1}{\mathcal{Z}} e^{-\beta (H^A - \sum_k \mu_k X^A_k)},
\end{equation} 
with $\{\mu_k\}$ being Lagrange multiplier constants for the conserved quantities and $\mathcal{Z}=\tr [e^{-\beta (H^A - \sum_k \mu_k X^A_k)}]$. In the case that we just have a single additional number operator $N$, the constant is the usual chemical potential \cite{callen}.

\subsection{Generalized Thermal Processes}

Our thermodynamic framework is an extension of the resource theory of Thermal Operations (TOs) \cite{janzing2000thermodynamic, horodecki2013fundamental, brandao2011resource} to a set of transformations that contains TOs as a proper subset. It is an extension in two ways: firstly it makes a weaker assumption about the underlying microscopic process, and secondly it is defined in terms of a collection of distinguished thermodynamic observables, such as those in the Generalized Gibbs ensemble, and not just in terms of energy. We shall refer to these free transformations as \textit{(generalized) thermal processes} (abbreviated to TPs), and they are specified by the following three physical assumptions:

\begin{enumerate}[{A}1]
\item \textbf{(Microscopic conservation)} Each input quantum system and output quantum system has a Hamiltonian $H$,  and a collection of distinguished observables $X_1, \dots X_n$. The total energy and the observables $\{X_k\}$ are conserved microscopically in any free process, and moreover $[H,X_k] = 0$ for all $k=1, \dots , n$.
\item \textbf{(Equilibrium preservation)} For every (input or output) system $A$, an equilibrium free state exists that is stable under the class of free processes.
\item \textbf{(Incoherence)} The free processes do not exploit any  sources of quantum coherence between eigenbases of conserved quantities.
\end{enumerate}

Assumption (A1) ensures that every quantum system $A$ has a well-defined Hamiltonian $H^A$ at the initial time and some other Hamiltonian $H^{A'}$ at the final time. It also allows for an arbitrary set of additional conserved charges, as discussed. More precisely, any TP map $\E$ on $A$ admits a Stinespring dilation onto some larger system $B$ such that
\begin{equation}\label{Stinespring}
\E(\rho^A) = \tr_C V (\rho^A \otimes \sigma^B ) V^\dagger
\end{equation}
where $B$ is some other quantum system defining the thermal environment. The assumption (A1) implies that the isometry $V$ obeys
\begin{align}\label{conservation-law}
V ( H^A \otimes \I^B + \I^A \otimes H^B) &= ( H^{A'} \otimes \I^C + \I^{A'} \otimes H^C)V \nonumber\\
V ( X^A_k \otimes \I^B + \I^A \otimes X^B_k) &= ( X^{A'}_k \otimes \I^C + \I^{A'} \otimes X^C_k)V
\end{align}
for all $k=1, \dots n$, which defines the microscopic energy conservation and the conservation of the charges. Note that we also allow the input system and output to differ, which may occur due to the presence of strong-couplings that affect factorizability into independent subsystems. It is also important to emphasize that we do not assume or require microscopic control of $V$. It is only the total process $\E$ that is experimentally relevant. The particular set of observables are determined by the physical context and we shall refer to them as the thermodynamic observables for the system.

Assumption (A2) says that for every system $A$ there is a state $\rho^A_\star$, such that $\E(\rho^A_\star) = \rho^A_\star$ for all TPs $\E$. However (A1) singles out a set of distinguished observables $\{H^A, X^A_1, \dots ,X^A_n\}$ that microscopically are additively conserved. The fact that $\rho_\star^A$ is a free state of the theory implies \cite{lostaglio2015,halpern, guryanova} that the only form of $\rho^A_\star$ that can yield a non-trivial resource theory in these observables is one for which $\log \rho^A_\star$ is a linear combination of the observables -- namely it must be a generalized Gibbs state $\gamma^A$ as defined in (\ref{gge}), at some fixed temperature $T= (k\beta)^{-1}$ and Lagrange multipliers $\mu_1, \dots, \mu_n$. Therefore the free states of the theory are defined uniquely by these parameters.

The final assumption (A3) is a statement of non-classicality within the theory and requires us to provide an explicit accounting for coherence resources. It is known for thermal operations that if the only coherences present are within energy eigenspaces then the resultant theory is essentially classical, and is described by thermo-majorization \cite{horodecki2013fundamental}. However coherences between energy eigenspaces behave differently and do not have such a classical description \cite{lostaglio2015description}. Therefore one must carefully account for these coherences thermodynamically. The precise formulation of this requirement in the case of energy is that if any free process $\E$ is obtained as in~\eqref{Stinespring}
for some microscopically conserving interaction $V$ with $B$, then $\E$ must also be realisable as
\begin{equation}
\E(\rho) = \tr_C W (\rho^A \otimes \widetilde{\sigma}^B ) W^\dagger
\end{equation}
where $W$ is again a conserving interaction, and $\widetilde{\sigma}^B =\lim_{\tau \rightarrow \infty} \frac{1}{\tau}\int_0^\tau dt U_B(t) \sigma^B U^\dagger_B(t)$ is the state $\sigma^B$ after a complete dephasing of coherences between energy eigenbases. This captures the notion that $\E$ is realized without consuming any coherent resources from the external degrees of freedom in $B$. At the level of quantum operations on $S$, this implies that we have the following symmetry property for all free operations 
\begin{equation}\label{cov}
U'(t) \E (\rho^A) U'^\dagger(t) = \E (U(t) \rho^A U^\dagger(t))
\end{equation}
where $U(t) = \exp [-i tH^A]$ and $U'(t) = \exp[ -i t H^{A'}]$ are respectively \emph{free} evolution of the input/output system for an interval of time $t$. The operation $\E$ is said to be covariant under time-translation. The more general case of multiple conserved charges is discussed below.

The three physical assumptions specify the set of \emph{generalized thermal procesess}, and it is readily seen that it contains the set of thermal operations. In the case when the only conserved quantity is $H$, there is no particular physical reason to choose one set of operations over the other. However, in the case of multiple conserved charges $X_1, \dots , X_n$, the use of TPs has an advantage in that it allows one to handle generalized Gibbs ensemble scenarios more easily. The details of system $B$ are, in general, not observed thermodynamical degrees of freedom, and with an explicit microscopic specification, such as with thermal operations, subtleties arise in the case of additional charges. Particularly, subtleties arise if one wishes to have non-trivial $\mu_k$ Lagrange multipliers in the generalized Gibbs ensemble (\ref{gge}) and also satisfy assumption (A1). The formulation here simply avoids this by not demanding a specific form for the microscopic state $\tilde{\sigma}^B$ in the definition of the free processes. Assumption (A3) only constrains the microscopic details to the extent that there are no observable effects of coherence at the level of the process $\E$.

In the Supplementary Material we show that our core result on quantum majorization can be adapted to the setting of generalized thermal processes to fully describe the state interconversion structure. This is obtained by establishing the following lemma, which is proved in the Supplementary Material.

\begin{lemma} 
Consider two sets of thermodynamic observables $\{H^S, X^S_1, \dots, X^S_n\}$ for quantum system $S=A$ and quantum systems $S=A'$. Then, the set of all quantum processes from $A$ into $A'$ defined by (A1-A3) coincides with the set of all $\gamma$-preserving processes on $A$ that are covariant under the group $G$ generated by the thermodynamic observables on $A$ and $A'$.
\end{lemma}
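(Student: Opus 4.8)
The statement asserts an equality of two sets of CPTP maps $\E:\B(\H_A)\to\B(\H_{A'})$: on one side the maps satisfying the physical assumptions (A1--A3), and on the other the maps that are both $\gamma$-preserving and covariant under the group $G$ generated by $\{H^A, X^A_k\}$ and $\{H^{A'}, X^{A'}_k\}$. The natural strategy is to prove two inclusions.

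\textbf{Inclusion ``$\subseteq$'' (thermal processes are $\gamma$-preserving and covariant).} First I would show that any $\E$ defined by (A1--A3) preserves the generalized Gibbs state $\gamma^A$. This follows from assumption (A2): the equilibrium state that is stable under all free processes must, as argued in the main text following the discussion of (A2), be the generalized Gibbs state $\gamma^A$; hence $\E(\gamma^A)=\gamma^{A'}$. Second I would show covariance under $G$. Starting from the Stinespring dilation~\eqref{Stinespring} with a conserving isometry $V$ obeying~\eqref{conservation-law}, assumption (A3) lets me replace the environment state $\sigma^B$ by its fully dephased version $\widetilde{\sigma}^B$ with respect to the eigenbasis of the conserved charges on $B$; the key point is that $\widetilde{\sigma}^B$ commutes with all the $X^B_k$ (and $H^B$), so it is invariant under the unitaries generated by them. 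Combining this commutation with the intertwining relations~\eqref{conservation-law} for $V$, a direct computation of $U'_g\,\E(\rho^A)\,U'^\dagger_g$ versus $\E(U_g\,\rho^A\,U^\dagger_g)$ shows they agree for every one-parameter subgroup generated by $H$ or $X_k$, and hence for the whole group $G$ they generate. This is the natural generalization of~\eqref{cov} to multiple charges, which the text already flags as ``discussed below.''

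\textbf{Inclusion ``$\supseteq$'' (every $\gamma$-preserving $G$-covariant map is a thermal process).} This is the substantive direction: given a CPTP $\E$ with $\E(\gamma^A)=\gamma^{A'}$ and $U'_g\E(\cdot)U'^\dagger_g = \E(U_g(\cdot)U^\dagger_g)$ for all $g\in G$, I must construct a dilation of the form~\eqref{Stinespring} with a conserving isometry $V$ and with an environment state that is already dephased in the charge eigenbasis. The standard route is: use $\gamma$-preservation to realize $\E$ via a Gibbs-state-preserving Stinespring dilation (this is essentially the content of the ``Gibbs-preserving = thermal-process-realizable'' type results, and here one uses a thermal environment $B$ with Hamiltonian and charges chosen so that $\gamma^B$ is its own generalized Gibbs state); then use the $G$-covariance of $\E$ together with a twirling/averaging argument over $G$ acting on the dilation to upgrade the isometry to one that exactly intertwines the total conserved charges as in~\eqref{conservation-law}. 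Concretely, one averages $V$ against the representations of $G$ on all systems and invokes covariance to show the averaged channel is still $\E$, then extracts a genuinely conserving $V$ (possibly after enlarging $B$ to carry a suitable regular representation, a la the standard ``embedding covariant channels into conserving unitaries'' construction). The role of ``$[H,X_k]=0$ for all $k$'' in (A1) is what makes $G$ an honest (abelian, if the $X_k$ mutually commute, or at worst a compact group with simultaneous structure) group so that twirling is well-defined; I would make that hypothesis explicit where it is used.

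\textbf{Main obstacle.} The hard part is the ``$\supseteq$'' direction, and specifically producing an isometry that is \emph{exactly} charge-conserving (satisfying both lines of~\eqref{conservation-law} simultaneously) rather than merely covariant, while keeping the environment state dephased. Exact conservation is a stronger condition than covariance of the reduced channel, and closing this gap requires the environment to supply a ``reservoir'' of charge/energy in the right representation; getting the bookkeeping right for $n$ simultaneously conserved charges (and in the possibly non-equal $A\neq A'$ case, where the input and output Hilbert spaces differ) is where the real work lies. I would handle it by first treating the single-charge ($H$ only) case, where this is known, and then arguing that the mutual commutation $[H,X_k]=0$ and $[X_j,X_k]=0$ lets the construction be carried out one charge at a time (or jointly over the common eigenbasis), with the dephasing of $\widetilde{\sigma}^B$ being exactly what prevents any leftover coherence from obstructing the conservation law.
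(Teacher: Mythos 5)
Your first inclusion (TP $\subseteq$ GPC) matches the paper's argument: Gibbs preservation comes from (A2), and covariance follows because (A1) makes the dilating isometry intertwine the group representations while (A3) lets the environment state be taken invariant under the group generated by its charges; composing covariant steps (tensoring with an invariant state, conjugating by an intertwining isometry, partial trace) gives a covariant channel. This direction is sound.

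The gap is in your converse direction, and it is twofold. First, your proposed route --- build a Gibbs-preserving Stinespring dilation from $\gamma$-preservation and then ``twirl the dilation'' to upgrade the isometry to a conserving one --- is not a workable recipe as stated: averaging an isometry over $G$ does not generally return an isometry, and it is not clear the averaged dilation still induces $\E$. The paper instead invokes a single structural result, the \emph{covariant Stinespring dilation theorem} (proved in the supplement via a Kraus decomposition into irreducible tensor operators): every $G$-covariant channel $\E$ admits a dilation $\E(\rho)=\tr_C V(\rho\otimes|\psi\rangle\langle\psi|^B)V^\dagger$ in which $V$ intertwines the representations and $|\psi\rangle^B$ is $G$-invariant. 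Second, you have mis-identified the ``main obstacle.'' You worry that exact charge conservation is strictly stronger than covariance and that the environment must supply a charge reservoir in the right representation. But (A1) does not fix the observables $\{H^B,X_k^B\}$ and $\{H^C,X_k^C\}$ in advance --- it only asserts that the environment systems \emph{have} such observables. One is therefore free to \emph{define} them as the generators of the unitary representations constructed in the covariant dilation, whereupon the intertwining relation $VU^{AB}(g)=U^{A'C}(g)V$ for all $g$ is, by differentiation of the one-parameter subgroups, literally equivalent to the conservation laws of Eq.~(\ref{conservation-law}). With that observation, (A1) and (A3) hold immediately (the invariant pure state carries no coherence between charge eigenspaces), (A2) is just $\E(\gamma^A)=\gamma^{A'}$, and no regular-representation enlargement or charge-by-charge bookkeeping is needed. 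Note also that Gibbs preservation plays no role whatsoever in constructing the dilation --- your appeal to a ``thermal environment'' with $\gamma^B$ conflates thermal operations with the present framework, which deliberately leaves the microscopic environment state unconstrained beyond incoherence.
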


\subsection{State conversions under thermal processes} 

Since TPs are $G$-covariant we may make use of our earlier results on $G$-covariant state interconversion of a collection of states $\{\rho_i^A\}$ into $\{\sigma_i^B\}$. We first consider the case where energy is the only distinguished thermodynamic observable that is conserved microscopically. Combining the $G$-covariant version of Theorem~\ref{mainthm} with the above lemma we get the following theorem (see Supplementary Material for more details).

\begin{theorem}\label{thermo-1}Let $A$ and $A'$ be two quantum systems, with the respective Hamiltonians $H^A$ and $H^{A'}$ being the only thermodynamic observables, and let $0<q<1$ be a fixed number. The state transformation $\rho^A \longrightarrow \sigma^{A'}$ is possible under generalized thermal processes at a temperature $T=(k\beta)^{-1}$ if and only if for all reference frame systems $R$ with the same dimension as of $A'$ and with Hamiltonian $H^R=-\left(H^{A'}\right)^T$, and for all pairs of states $\boldsymbol{\eta} = (\eta^R_1, \eta^R_2)$, we have 
\be
S_{\boldsymbol{\eta}} (\rho^A) \le S_{\boldsymbol{\eta}}(\sigma^{A'})\;,
\ee
 where $S_{\boldsymbol{\eta}}(\rho^A) := H_{\rm min} (R|A)_\Omega$ and 
\begin{equation}
\Omega^{RA} =\Big\langle q\eta^R_1 \otimes \rho^A+ (1-q)\eta^R_2 \otimes \gamma^A) \Big\rangle\ .
\end{equation}
Here, $\gamma^A = \exp[-\beta H^A]/Z$ is the Gibbs state on $A$,  and $\langle\omega^{RA}\rangle\equiv \lim_{\tau\rightarrow \infty} 1/\tau \int_0^\tau\!dt\ U(t)\omega^{RA} U^\dag(t)$  is the channel that maps any state of $RA$ to its time-averaged version, and $U(t)= \exp [ -i t (H^R \otimes \I^A + \I^R \otimes H^A)]$ is the unitary time-evolution under the Hamiltonian for the composite system $RA$.
\end{theorem}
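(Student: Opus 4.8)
The plan is to reduce Theorem~\ref{thermo-1} to the $G$-covariant version of Theorem~\ref{mainthm} through two reformulations: use the Lemma to trade ``generalized thermal process'' for ``$\gamma$-preserving, time-translation covariant channel'', and then absorb the Gibbs-preservation constraint into an auxiliary classical register, turning the problem into a classical-quantum instance of $G$-covariant quantum majorization. Since energy is the only conserved observable, the group $G$ generated by $H^A$ and $H^{A'}$ is the one-parameter group of time translations, acting as $t\mapsto e^{-itH^A}$ on the input system and $t\mapsto e^{-itH^{A'}}$ on the output system.

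First I would invoke the Lemma: a channel $\mE:\B(\H_A)\to\B(\H_{A'})$ is a generalized thermal process iff it is covariant under time translations and satisfies $\mE(\gamma^A)=\gamma^{A'}$. Hence $\rho^A\to\sigma^{A'}$ is achievable iff there is a time-translation covariant $\mE$ with $\mE(\rho^A)=\sigma^{A'}$ and $\mE(\gamma^A)=\gamma^{A'}$. Next, introduce a two-level classical register $X$ and the compatible classical-quantum states
\[
\rho^{XA}=q\,|1\lr 1|^X\otimes\rho^A+(1-q)\,|2\lr 2|^X\otimes\gamma^A,\qquad
\sigma^{XA'}=q\,|1\lr 1|^X\otimes\sigma^{A'}+(1-q)\,|2\lr 2|^X\otimes\gamma^{A'}.
\]
By the classical-quantum reduction discussed after the Corollary (and using $0<q<1$), the existence of a CPTP map with $\id^X\otimes\mE(\rho^{XA})=\sigma^{XA'}$ is equivalent to $\mE(\rho^A)=\sigma^{A'}$ together with $\mE(\gamma^A)=\gamma^{A'}$. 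Letting $G$ act trivially on $X$ and by time translations on $A$ and $A'$, we conclude that the state transformation is possible iff $\sigma^{XA'}\prec_q^G\rho^{XA}$.

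Then I would apply the $G$-covariant upgrade of Theorem~\ref{mainthm}: $\sigma^{XA'}\prec_q^G\rho^{XA}$ holds iff for every entanglement-breaking $\Phi:\B(\H_X)\to\B(\H_R)$ of the form~\eqref{formphi1}, with $d_R=d_{A'}$, we have $H_{\min}(R|A)_{\mG[\Phi\otimes\id(\rho^{XA})]}\le H_{\min}(R|A')_{\mG[\Phi\otimes\id(\sigma^{XA'})]}$, where $\mG$ is the bipartite twirl built from the representation $\overline{U}_t\otimes U_t$. Two identifications then finish the argument. First, since $X$ is a classical two-level system, a channel of the form~\eqref{formphi1} effectively acts on the relevant (diagonal) inputs as $\Phi(|1\lr 1|^X)=\eta_1^R$, $\Phi(|2\lr 2|^X)=\eta_2^R$ for arbitrary states $\eta_1^R,\eta_2^R$, so that $\Phi\otimes\id(\rho^{XA})=q\,\eta_1^R\otimes\rho^A+(1-q)\,\eta_2^R\otimes\gamma^A$ and likewise $\Phi\otimes\id(\sigma^{XA'})=q\,\eta_1^R\otimes\sigma^{A'}+(1-q)\,\eta_2^R\otimes\gamma^{A'}$. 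Second, choosing $H^R=-(H^{A'})^T$ makes $e^{-itH^R}$ coincide with the complex conjugate $\overline{e^{-itH^{A'}}}$, so the bipartite twirl $\mG$ is exactly the time-average map $\langle\,\cdot\,\rangle$ of the theorem, both on $RA$ (with $U(t)=e^{-it(H^R\otimes\I^A+\I^R\otimes H^A)}$) and on $RA'$. Substituting, the monotone condition becomes $H_{\min}(R|A)_{\Omega}\le H_{\min}(R|A')_{\langle q\eta_1^R\otimes\sigma^{A'}+(1-q)\eta_2^R\otimes\gamma^{A'}\rangle}$, i.e.\ $S_{\boldsymbol\eta}(\rho^A)\le S_{\boldsymbol\eta}(\sigma^{A'})$ for all $\boldsymbol\eta=(\eta_1^R,\eta_2^R)$; both implications follow since every step is a biconditional.

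The main obstacle I expect is bookkeeping rather than a conceptual gap. One must check that restricting $\Phi$ to measure-and-prepare channels of the form~\eqref{formphi1} on the classical register $X$ loses nothing (for a classical two-level system the two effects $|j\lr j|^X$ already generate all relevant channels), verify the compatibility hypothesis of Theorem~\ref{mainthm} (the $X$-marginals of $\rho^{XA}$ and $\sigma^{XA'}$ agree), confirm the Lemma yields precisely time-translation covariance in the energy-only case, and---most delicately---track the complex-conjugation/transpose conventions so that the abstract bipartite $G$-twirl $\mG$ is literally the Page--Wootters-type time-average $\langle\,\cdot\,\rangle$ with clock Hamiltonian $H^R=-(H^{A'})^T$ on the reference frame $R$.
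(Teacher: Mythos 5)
Your proposal is correct and follows essentially the same route as the paper: the paper likewise combines the Lemma identifying generalized thermal processes with Gibbs-preserving covariant maps and the $G$-covariant comparison result applied to the ensemble $\{\rho^A,\gamma^A\}\to\{\sigma^{A'},\gamma^{A'}\}$ with reference states $\{\eta_1^R,\eta_2^R\}$ and weights $\{q,1-q\}$, identifying the bipartite twirl with the time average via $H^R=-(H^{A'})^T$. Your detour through the classical register $X$ and the bipartite form of the covariant theorem is just a repackaging of the ensemble version of the covariant lemma that the paper invokes directly, so the two arguments coincide in substance.
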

  
It is important to note that these conditions can be greatly reduced. In particular one can simply consider $q=\frac{1}{2}$ alone, however in some cases it is useful to choose different values and so we give the general case here. Also, it readily seen that the state $\eta^R_2$ can be chosen to be block-diagonal in the energy eigenbasis, while $\eta_1^R$ can be restricted to reference frame states that have the same modes of coherence as $\rho^A$ \cite{marvian2013modes,lostaglio2014quantum}. 

These conditions have a range of physical implications and describe the features of quantum thermodynamics in a compact way.   A key obstacle in quantum thermodynamics is that to determine the existence of the transformation  $\rho^A \longrightarrow \sigma^{A'}$, one needs to consider two different types of physical properties of states: (i) properties related to their energy distribution, which leads to conditions such as thermo-majorization  \cite{ruch-schranner}, and (ii) properties related to  the coherence in the energy eigen-basis.  Roughly speaking, one needs to check that the initial state $\rho^A $ has (at least) as much as free energy and coherence as the desired final state $\sigma^{A'}$.  

It is not possible in general to quantify both of these simultaneously in a measurement scheme.  Coherences in energy are precisely the time-dependent components of a quantum system and thus one encounters an obstacle of complementarity between time and energy measurements. Physically these two aspects can be viewed as `clock' and `work' regimes of a quantum system.   Theorem \ref{thermo-1}  gets around this complementarity  by allowing the reference system $R$ to act simultaneously as a `clock/work reference'. In other words, one can interpolate smoothly between the two regimes via the different choices of quantum states $\eta^R$. This is illustrated schematically in Figure \ref{Fig3}.  

\begin{figure}
	\centering
	\includegraphics[width=1\linewidth]{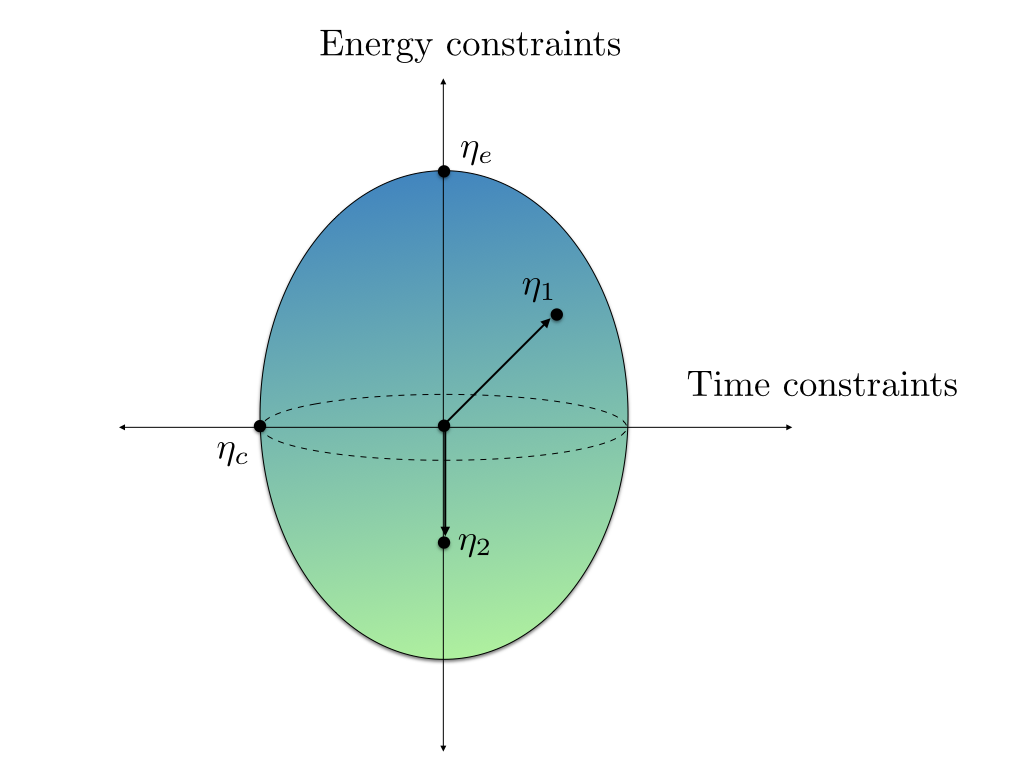}
	\caption{\textbf{Time--energy constraints for Thermal Processes.} The entropic conditions for a state transformation $\rho^A \longrightarrow \sigma^{A'}$ under TPs are defined with respect to a quantum reference frame $R$ and two states $\eta_1^R$ and $\eta_2^R$. The schematic vertical axis denotes states block-diagonal in energy (e.g. an energy eigenstate $\eta_e= |E\rangle\langle E|$), while the horizontal axis denotes states with maximal time-dependent oscillations -- `clock' states $\eta_c$ of $R$. When $\eta_1^R$ is confined to being incoherent (the vertical axis) we recover thermo-majorization. For $R$ being macroscopic and $\eta_1^R = \eta_c$ we obtain a Page-Wootters clock constraint on the thermodynamic transformation. Varying $\eta_1^R$ smoothly interpolates between the time constraints and energy constraints.
}
	\label{Fig3}
\end{figure}

To see this better, we first consider the case where either the input or output state is incoherent in the energy eigenbasis. This regime is described by an essentially classical stochastic energy condition. The following result is shown in the Supplementary Material.
\begin{corollary}\label{thermo-major} Let $A$ and $A'$ be two quantum systems, with respective Hamiltonians $H^A$ and $H^{A'}$ being the only thermodynamic observables. Let $\rho^A$ and $\sigma^{A'}$ be quantum states on the input and output systems, respectively. If either $[\rho^A, H^A]=0$ or $[\sigma^{A'}, H^{A'}]=0$, then the state transformation $\rho^A \longrightarrow \sigma^{A'}$ is possible under generalized thermal processes at a temperature $T=(k\beta)^{-1}$ if and only if $[\sigma^{A'}, H^{A'}] = 0$ and $\<\rho^A\>$ thermo-majorizes $\sigma^{A'}$.
\end{corollary}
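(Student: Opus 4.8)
The plan is to deduce the corollary from Theorem~\ref{thermo-1} together with the preceding Lemma (which identifies generalized thermal processes from $A$ into $A'$ with the $\gamma$-preserving channels that are covariant under the time-translation group generated by $H^A$ and $H^{A'}$) and the classical thermo-majorization theorem for incoherent states~\cite{horodecki2013fundamental,ruch-schranner}. The structural fact that does the work is that the time-averaging map $\langle\cdot\rangle$ intertwines with every covariant channel: for any thermal process $\mathcal{E}:\B(\H_A)\to\B(\H_{A'})$ one has $\langle\mathcal{E}(\rho^A)\rangle=\mathcal{E}(\langle\rho^A\rangle)$, since covariance lets the free unitaries be pulled through $\mathcal{E}$ before integrating over $t$. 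I would use this to collapse either of the two hypotheses, $[\rho^A,H^A]=0$ or $[\sigma^{A'},H^{A'}]=0$, onto the single scenario of a transition between states that are both block-diagonal in energy.

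\textbf{Necessity.} Suppose $\sigma^{A'}=\mathcal{E}(\rho^A)$ for a thermal process $\mathcal{E}$. If $[\rho^A,H^A]=0$, i.e.\ $\rho^A=\langle\rho^A\rangle$, then $\sigma^{A'}=\mathcal{E}(\langle\rho^A\rangle)=\langle\mathcal{E}(\rho^A)\rangle=\langle\sigma^{A'}\rangle$, so $[\sigma^{A'},H^{A'}]=0$ is forced. If instead $[\sigma^{A'},H^{A'}]=0$ is assumed, then $\sigma^{A'}=\langle\sigma^{A'}\rangle=\langle\mathcal{E}(\rho^A)\rangle=\mathcal{E}(\langle\rho^A\rangle)$, so the input may be replaced by $\langle\rho^A\rangle$ without loss. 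In both cases a $\gamma$-preserving channel sends the diagonal state $\langle\rho^A\rangle$ to the diagonal state $\sigma^{A'}$; reading off its matrix elements in the energy eigenbases yields a stochastic matrix $S$ with $S\,\vec p(\gamma^A)=\vec p(\gamma^{A'})$ (since $\mathcal{E}(\gamma^A)=\gamma^{A'}$) and $S\,\vec p(\langle\rho^A\rangle)=\vec p(\sigma^{A'})$, i.e.\ a Gibbs-preserving stochastic map witnessing that $\langle\rho^A\rangle$ thermo-majorizes $\sigma^{A'}$. Equivalently, one can read this off Theorem~\ref{thermo-1} by restricting $\eta_1^R$ to states block-diagonal in energy and checking that $H_{\rm min}(R|A)_\Omega$ for such $\boldsymbol\eta$ reduces to a complete family of thermo-majorization monotones for the pair $(\langle\rho^A\rangle,\sigma^{A'})$.

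\textbf{Sufficiency.} Conversely, assume $[\sigma^{A'},H^{A'}]=0$ and that $\langle\rho^A\rangle$ thermo-majorizes $\sigma^{A'}$. The classical theorem then supplies a Gibbs-preserving stochastic matrix $S$ carrying the energy populations of $\langle\rho^A\rangle$ to those of $\sigma^{A'}$. Its canonical measure-and-prepare lift $\mathcal{E}_0(X)=\sum_{i,j}S_{ij}\,\langle E^A_j|X|E^A_j\rangle\,|E^{A'}_i\rangle\langle E^{A'}_i|$ is manifestly $\gamma$-preserving and time-translation covariant (it sees only the diagonal of its input and outputs an incoherent state), hence is a generalized thermal process by the Lemma; and $\mathcal{E}_0(\rho^A)=\mathcal{E}_0(\langle\rho^A\rangle)=\sigma^{A'}$, which realizes the transformation.

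\textbf{Expected obstacle.} The reduction via the intertwining identity is routine. The load-bearing step is the classical input: that $\gamma$-preserving, time-translation covariant channels restricted to the fully incoherent sector are characterized, in both directions, exactly by thermo-majorization, and that the channel realizing the ``if'' direction can be chosen within the thermal-process class -- which is precisely where one needs the Lemma's identification and closure under composition with $\langle\cdot\rangle$. If one prefers instead to route the entire argument through Theorem~\ref{thermo-1}, the analogous obstacle is the explicit evaluation of $H_{\rm min}(R|A)_\Omega$ on the relevant incoherent reference states and the verification that the resulting inequalities collapse to thermo-majorization of $\langle\rho^A\rangle$ by $\sigma^{A'}$; this is elementary but must be handled carefully because of the $q$-weighted mixture defining $\Omega^{RA}$.
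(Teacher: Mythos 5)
Your proposal is correct and follows essentially the same route as the paper's own proof: use time-translation covariance (via the intertwining $\langle\mathcal{E}(\rho)\rangle=\mathcal{E}(\langle\rho\rangle)$) to force the output to be incoherent and to reduce the problem to the fully incoherent sector, then invoke the equivalence of thermo-majorization with the existence of a Gibbs-preserving stochastic map. You are somewhat more explicit than the paper in constructing the stochastic matrix from the channel's diagonal matrix elements and in exhibiting the covariant measure-and-prepare lift for the converse, but the underlying argument is the same.
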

This recovers previous results \cite{horodecki2013fundamental} on quantum thermodynamics for the case of one of the states having no coherences between energy eigenspaces.
Moreover, in the case of incoherent input $\rho^A$, 
the use of a coherent reference state $\eta^R$  does not yield  any additional constraint. Specifically, $\Omega^{RA} = q \<\eta^R_1\> \otimes \rho^A +(1-q) \<\eta^R_2\> \otimes \gamma^A$, and so the coherence of states $\eta_1^R$ and $\eta_2^R$ is irrelevant. The only relevant constraints in state transformation $\rho^A\rightarrow\sigma^{A'}$ are constraints related to the energy distribution of states.

On the other hand, if both the input-output  states $\rho^A$ and $\sigma^{A'}$ contain coherence, then  by choosing reference states $\eta_1^R$ which contain coherence, we obtain  new additional coherence constraints, i.e. constraints independent of thermo-majorization. Note that coherence with respect to energy eigenbasis is equivalent to symmetry-breaking (asymmetry) with respect to time-translations generated by the system Hamiltonian.  In other words, coherence of  states $\rho^A$ and $\sigma^{A'}$ is related to how well time $t$ can be estimated from states $\rho^A(t)=e^{-i H_A t}\rho^A e^{i H_A t}$ and $\sigma^{A'}(t)=e^{-i H_{A'} t}\sigma^{A'} e^{i H_{A'} t}$. 

The TPs are both covariant under time-translation and preserve the Gibbs state. In the Supplementary Material we will show that the converse is also true (i.e. a covariant Gibbs preserving map is a TP). Therefore, previously discussed measures, such as those that are based on Renyi Divergences of the form $A_\alpha(\rho^A) = S_\alpha(\rho^A || \< \rho^A\>)$,  behaves monotonically under TPs, and provide independent thermodynamic constraints beyond thermo-majorization \cite{lostaglio2015description}. One can also use constraints on modes of coherence \cite{lostaglio2014quantum} and the Fisher Information \cite{hyukjoon-2017}, to derive other independent measures of athermality.  However, the set of conditional min-entropy measures obtained here is complete and therefore sufficient to imply the monotonicity of all of these measures. 

In the Supplementary Material we show that the entropic conditions with $\eta_1^R$ being incoherent in energy leads to thermo-majorization, and captures the degree to which the system $A$ is ordered in energy. Since in quantum systems one has complementarity between time and energy one might expect that the case of $\eta_1^R$ being highly coherent in energy might therefore capture the degree to which $A$ is ordered in some temporal sense.

This turns out to be the case, although since time forms a continuous one-parameter group there are technical obstacles to making this statement precise. However, as we show in the Supplementary Material, one can in general make finite precision approximations and model time evolution for any finite dimensional quantum system (which can be assumed to have an energy spectrum of rational numbers and thus has periodic dynamics under its Hamiltonian) with the discrete group $\mathbb{Z}_N$, for some sufficiently large $N$ and with $t = n \epsilon$. Here $\epsilon>0$ is the minimal time interval that can be resolved. The representation of this discrete group on $A$ is given by $n \mapsto U^A_\epsilon(n) := \exp [ - i n\epsilon H^A]$ and so the system is modelled as evolving in discrete time steps. Under these approximations one can replace (A3) with a slightly weaker version (A3$'$) described in the Supplementary Material, and the interconversion conditions can be repeated for $G=\mathbb{Z}_N$ instead.

We define \emph{clock-times} as the discrete instances $t=0, \epsilon, \dots, n\epsilon, \dots, (N-1)\epsilon$ for the joint system $\H_R \otimes \H_A$. As shown in the Supplementary Material, there exist reference frame systems $R$ that can provide a perfect classical encoding of the clock times into quantum states $\{|0\>^R, |1\>^R, \dots, |N-1\>^R \}$, and for which $U^R_\epsilon(n) |0\>^R = |n\>^R$  for any $n$. Moreover these \emph{clock states} are built from uniform superpositions in the energy eigenstates of $R$, and so are in a sense ``maximally" coherent in energy. Given this, we can now demonstrate the claimed complementarity between time and energy and how it relates to the state of the reference $R$. We choose $\eta_1^R = |0\>\<0|^R$ and consider the limit $q\rightarrow 1$, which corresponds to the condition of time-translation covariance alone. For this one can show that 
\begin{equation}
\Omega^{RA} = \frac{1}{N} \sum_{k=0}^{N-1} |k\>\<k|^R \otimes \rho^A(n),
\end{equation}
where $\rho^A(n) := U^A_\epsilon(n) \rho^A (U^A_\epsilon(n))^\dagger$ is the state of the system $A$ at the $n^{\rm th}$ clock time for the joint system.
Now, since $\Omega^{RA}$ is a classical-quantum state, we have that~\cite{Kon09}
\begin{equation}
H_{\rm min} (R|A)_\Omega = -\log p_{\rm guess},
\end{equation}
where $p_{\rm guess}$ is the optimal Helstrom guessing probability for the ensemble of states $\{(\frac{1}{N}, \rho^A(n))\}_{n=0}^{N-1}$ on $A$. This implies that $2^{-H_{\rm min}(R|A)_\Omega}$ is the optimal guessing probability of the clock time $t=n\epsilon$ for the joint system, given the single copy of $\rho^A$. Monotonicity of $H_{\rm min}(R|A)_\Omega$ under the thermal processes therefore implies monotonicity of the clock time guessing probability for the system. Phrased differently, the time-translation covariance property of thermal processes implies that the ability of the thermodynamic system $A$ to act a quantum clock \cite{peres-clock, connes, gambini, malabarabra, moreva, erker, rankovic, clocks-cloners} can never increase. This demonstrates how the reference frame system $R$ functions to define both time and energy constraints on the state interconversion for the system $A$. 

We note that this result connects with foundational work by Page and Wootters \cite{page}, who considered how one can have dynamics in a universe that is covariant in time. They proposed a conditional probability formalism, which mirrors our present set up and relies on covariant measurements with $P(X^R=x | Y^A =y)$, the probability that some observable $X^R$ has a sharp value given a measurement of $Y^A$ yielding a particular result. These relational expressions were shown to describe dynamics within the time-translation invariant global state, such as $\Omega^{RA}$ here. 
 
Finally, we can state the necessary and sufficient conditions for the case of having additional, additively conserved observables $\{X_1, \dots ,X_n\}$. In this case assumption (A3) follows a similar argument to the one for energy, and the auxiliary system can be assumed to be in a state $\sigma^B$ for which   $\sigma^B = e^{-i s X^B_k} \sigma^B e^{is X^B_k}$, for all $s\in \mathbb{R}$ and for any thermodynamic observable $X^B_k$. Ranging over all the observables, this condition can be expressed more compactly as $\sigma^B =U(g) \sigma^B U^\dagger(g)$, for all unitary transformations $U(g)$ in the Lie group $G$ generated by the observables $\{H^B,X^B_1, \dots, X^B_n\}$. Note that this condition is equivalent to $\sigma^B =\int_G dg\ U(g) \sigma^B U^\dagger(g)$, where $dg$ is the uniform (Haar) measure over this group. Therefore, this assumption, together with (A1) imply that the process $\E$ is \emph{covariant} with respect to group $G$, i.e.  $\U_g \circ \E = \E \circ \U_g$ where $\U_g(\rho^A):= U(g) \rho^A U^\dagger(g)$.    In other words, the process is covariant under the symmetry group action generated by the thermodynamic observables on the input/output systems. Our main result on the thermodynamic structure of states under TPs is as follows.

\begin{theorem}[Generalized Thermal Processes] Let $A$ and $A'$ be two quantum systems, with thermodynamic observables $\{H^A, X^A_1 \dots ,X^A_n\}$ and $\{H^{A'}, X^{A'}_1, \dots X^{A'}_n\}$, respectively, and fix $0<q<1$. The state transformation $\rho^A \longrightarrow \sigma^{A'}$ is possible under generalized thermal processes at a temperature $T= (k\beta)^{-1}$ and at fixed Lagrange multipliers $\mu_1, \dots \mu_n$, if and only if for all reference frame systems $R$ of equal dimension to $A'$ with thermodynamic observables $H^R= - (H^{A'})^T$ and $\{X^R_k =-(X^{A'}_k)^T\}_{k=1}^n$ , and for all pairs of states $\boldsymbol{\eta}=(\eta_1,\eta_2)$ we have $S_{\boldsymbol{\eta}} (\rho^A) \le S_{\boldsymbol{\eta}}(\sigma^{A'})$, where $S_{\boldsymbol{\eta}}(\rho^A) := H_{\rm min} (R|A)_\Omega$ and
\begin{equation}
\Omega^{RA} = \int_G \!\! dg \,\, U(g) ( q\eta^R_1 \otimes \rho^A+(1-q) \eta^R_2 \otimes \gamma^A) U(g)^\dagger
\end{equation}
where $\{U(g)\}$ is the symmetry group generated by the additively conserved observables $\{H^R\otimes \I^A + \I^R \otimes H^A, X_k^R \otimes \I^A + \I^R\otimes X_k^A; \,\, k=1,\dots , n\}$ on the composite system $RA$, with group parameters $g$, and $\gamma^A = \exp [- \beta(H^A - \sum_k \mu_k X^A_k)]/\mathcal{Z}$, being the generalized Gibbs ensemble on $A$.
\end{theorem}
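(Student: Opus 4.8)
The plan is to chain three reductions. The Lemma above identifies the generalized thermal processes $A\to A'$ with those CPTP maps $\E:\B(\H_A)\to\B(\H_{A'})$ that are simultaneously $\gamma$-preserving, $\E(\gamma^A)=\gamma^{A'}$, and covariant under the group $G$ generated by the thermodynamic observables $\{H,X_1,\dots,X_n\}$; note that $G$ is abelian because $[H,X_k]=[X_j,X_k]=0$ by (A1), so the twirl over $G$ is an average over one commuting family of one‑parameter groups. The remaining task is to express the existence of such an $\E$ in the entropic form of the theorem, which I would do by folding the constraint $\E(\gamma^A)=\gamma^{A'}$ into a single instance of $G$-covariant quantum majorization and then invoking the $G$-covariant form of Theorem~\ref{mainthm}.

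For the folding step I would introduce a two-dimensional classical register $F$ carrying the trivial $G$-action and set $\rho^{FA}:=q\,|1\lr 1|^{F}\otimes\rho^A+(1-q)\,|2\lr 2|^{F}\otimes\gamma^A$ and, analogously, $\sigma^{FA'}:=q\,|1\lr 1|^{F}\otimes\sigma^{A'}+(1-q)\,|2\lr 2|^{F}\otimes\gamma^{A'}$. These states are compatible, since both $F$-marginals equal $q\,|1\lr 1|+(1-q)\,|2\lr 2|$, and because $0<q<1$ a $G$-covariant CPTP map $\E$ obeys $\id_F\otimes\E(\rho^{FA})=\sigma^{FA'}$ if and only if, reading off the two flags, $\E(\rho^A)=\sigma^{A'}$ and $\E(\gamma^A)=\gamma^{A'}$ hold together. (That $\E(\gamma^A)$ must be the generalized Gibbs state $\gamma^{A'}$ at the \emph{same} $\beta,\mu_k$ rather than some other invariant state is guaranteed by assumption (A2) and the uniqueness of the equilibrium state.) Combined with the Lemma, this shows that $\rho^A\to\sigma^{A'}$ is possible under generalized thermal processes $\Longleftrightarrow \sigma^{FA'}\prec_{q}^{G}\rho^{FA}$, with $G$ acting on the active systems $A,A'$ through $\{H^A,X^A_k\}$ and $\{H^{A'},X^{A'}_k\}$ respectively.

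Next I would apply the $G$-covariant version of Theorem~\ref{mainthm} to the pair $(\rho^{FA},\sigma^{FA'})$, with the active map on $A\to A'$ and $F$ in the role of the inert system. An entanglement-breaking probe $\Phi:\B(\H_F)\to\B(\H_R)$ acts on these flag-diagonal states by $\Phi\otimes\id(\rho^{FA})=q\,\eta^R_1\otimes\rho^A+(1-q)\,\eta^R_2\otimes\gamma^A$ for a pair $\boldsymbol{\eta}=(\eta^R_1,\eta^R_2)$ of density matrices determined by $\Phi$ (and likewise on $\sigma^{A'},\gamma^{A'}$ with the same $\boldsymbol{\eta}$), so applying the bipartite $G$-twirl reproduces precisely the operators $\Omega^{RA}$ and $\Omega^{RA'}$ of the statement; Theorem~\ref{mainthm} forces the probe dimension $d_R=d_{A'}$ (the bound $d_{A'}=d_C$) and forces $R$ to carry the complex-conjugate representation of the output group, i.e.\ $H^R=-(H^{A'})^T$ and $X^R_k=-(X^{A'}_k)^T$, since it is this choice that makes $|\phi_+^{RA'}\>$ invariant and lets the twirl and the guessing-probability rewriting of $H_{\min}$ go through. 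Under this dictionary the $G$-covariant entropic inequality for $\Phi$ is exactly $S_{\boldsymbol{\eta}}(\rho^A)\le S_{\boldsymbol{\eta}}(\sigma^{A'})$ for the corresponding $\boldsymbol{\eta}$. For the ``only if'' direction one can also argue directly: if the covariant $\gamma$-preserving $\E$ exists, then $\Omega^{RA'}=(\id_R\otimes\E)(\Omega^{RA})$ because $\E(\rho^A)=\sigma^{A'}$, $\E(\gamma^A)=\gamma^{A'}$, and covariance of $\E$ makes $\id_R\otimes\E$ intertwine the $RA$-twirl with the $RA'$-twirl, so data processing for the conditional min-entropy yields $S_{\boldsymbol{\eta}}(\rho^A)\le S_{\boldsymbol{\eta}}(\sigma^{A'})$ for every $R$ and every $\boldsymbol{\eta}$. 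For the ``if'' direction, the hypotheses over all $R$ and all $\boldsymbol{\eta}$ include in particular the inequalities for the $\boldsymbol{\eta}$ realised by admissible probes $\Phi$, which is exactly the hypothesis of the $G$-covariant Theorem~\ref{mainthm}; it returns $\sigma^{FA'}\prec_{q}^{G}\rho^{FA}$, and the flag argument plus the Lemma then produce the thermal process $\rho^A\to\sigma^{A'}$.

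The conceptual content lives in the two results invoked -- the Lemma reducing thermal processes to covariant Gibbs-preserving maps, and the $G$-covariant upgrade of Theorem~\ref{mainthm} -- so the part specific to this theorem is essentially careful bookkeeping. I expect the step needing the most care to be the representation-theoretic matching: verifying that the reference $R$ must carry the conjugate representation (hence the transposed, sign-reversed observables $H^R=-(H^{A'})^T$, $X^R_k=-(X^{A'}_k)^T$) and dimension $d_R=d_{A'}$, and that covariance of $\E$ genuinely allows $\id_R\otimes\E$ to be pushed through the bipartite twirls defining $\Omega^{RA}$ and $\Omega^{RA'}$, together with confirming that the argument is uniform in the number of charges $n$ -- which it is precisely because $G$ is abelian and the twirl collapses to an average over a single commuting family of one-parameter groups.
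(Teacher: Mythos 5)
Your proposal is correct and follows essentially the same route as the paper: the Supplementary Material likewise combines the Lemma identifying generalized thermal processes with $G$-covariant Gibbs-preserving maps and then applies the $G$-covariant version of Theorem~\ref{mainthm} (in its set-of-states form) to $\{\rho_i\}=\{\rho^A,\gamma^A\}$, $\{\sigma_i\}=\{\sigma^{A'},\gamma^{A'}\}$, $\{\omega_i\}=\{\eta_1^R,\eta_2^R\}$, which is exactly what your flag-register construction reproduces. One small inaccuracy: (A1) only asserts $[H,X_k]=0$, and the paper explicitly allows the charges $X_j,X_k$ to have non-trivial commutation relations, so your parenthetical claim that $G$ is abelian is unwarranted --- but nothing in the argument actually uses it, since the covariant lemma and theorem hold for any compact group.
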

This result is a fully covariant statement that is based on minimal assumptions (A1-A3), and which reduces to Theorem (2) in the case of no additional thermodynamic observables beyond the system's energy.

\section{Conclusion}
In this work we introduced a new generalization of majorization for quantum processes, found a necessary and sufficient condition for this notion of majorization in terms of entropic quantities, and  demonstrated some of its applications in the context of the resource theories of asymmetry and quantum thermodynamics. In particular, we derived a complete set of entropic conditions for state transformations in both of these resource theories. In contrast to the previous results, which are  only applicable to restricted families of states (such as incoherent states) our approach can be applied to \emph{all} states. Furthermore, these results can be readily generalized to the case of approximate transformations in which we only require transformations up to an epsilon smoothing. 

Since these entropic monotones provide a full characterization of the resource, it is interesting to study their  operational interpretations.  We discussed some of these interpretations in the context of clocks. Another possible interpretation could be provided by the results of \cite{faist-minimal}, which relates  the smoothed   entropy $H^\epsilon_{\rm max}(R|A')$ to the minimal work cost to perform a quantum process. The duality relation between min and max entropies tells us that $H^\epsilon_{\rm max} (R|A') = - H^\epsilon_{\rm min} (R|C)$, where $C$ purifies the state on $RA'$, and so this suggests a potential interpretation of our results in terms of generalized work costs on a purifying environment.

We also introduced a new framework for quantum thermodynamics based on the notion of generalized thermal processes, which extends thermal operations, and is based on natural physical principles. 
This  explicitly handles coherences and is the first framework of its kind for which a complete set of state conditions has been derived.

\begin{acknowledgements}

GG would like to thank Rob Spekkens for pointing out Ref.~\cite{Dahl1999}, and for many discussions on classical matrix majorization. We would also like to thank Eric Chitambar, Philippe Faist, Mark Girard, Kamil Korzekwa and Matteo Lostaglio for useful discussions, and to Nicole Yunger Halpern for many comments on the first version. F.B. acknowledges support from JSPS KAKENHI, grants no. 26247016 and
no. 17K17796. GG acknowledges support from the Natural Sciences and Engineering Research Council of Canada (NSERC), DJ is funded by the Royal Society.

\end{acknowledgements}

\begin{titlepage}
\center{\large\textbf{Supplementary Material\\Quantum majorization and a complete set of entropic conditions\\for quantum thermodynamics}\\}
\center{~{ }\\}

\end{titlepage}

\appendix


This appendix is organized as follows. In appendix A we prove Theorem 1. In appendix B we discuss the classical case, and re-derive thermo-majorization from our conditions. In appendix C we prove the G-covariant version of Theorem 1, and also discuss the Stinespring dilations of covariant maps. In appendix D we discuss the generalized thermal processes, and show that they are equivalent to Gibbs Preserving Covariant (GPC) maps. Finally, appendix E we provide the proof of the norm expression in Eq.(23).

\section{Proof of Theorem~\ref{mainthm}}
In order to prove Theorem~\ref{mainthm}, we will begin by proving the following lemma. We recall the definition of conditional min-entropy in Eq.~(\ref{eq:min-ent-def}).

\begin{lemma}\label{mainlemma}
Let $\{\rho_i^B\}_{i=1}^{n}$ and $\{\sigma_{i}^C\}_{i=1}^{n}$ be two sets of $n$ density matrices in $\B(\mH_B)$ and $\B(\mH_C)$, respectively. Let $\{q_i\}_{i=1}^{n}$ be some arbitrary but fixed probability distribution with $q_i>0$.
For any set of $n$ density matrices $\{\omega_{i}^A\}_{i=1}^{n}$ in $\B(\mH_{A})$ (with $d_A=d_C$) define the following tripartite separable matrix:
\be
\Omega^{ABC}\equiv\sum_{i=1}^{n} q_i\;\omega_{i}^A\otimes\rho_{i}^B\otimes \sigma_{i}^C\;.
\ee
Then, the following are equivalent:
\begin{enumerate}
\item There exists a CPTP map $\mE:\; \B(\mH_B)\to\B(\mH_C)$ such that
	\be\label{main3}
	\mE(\rho_i^B)=\sigma_i^C\quad\forall\;i=1,...,n\;.
	\ee
\item For any $\omega_1^A,...,\omega_n^A\in\B(\mH_{A})$:
\begin{align}\label{eq22}
2^{-H_{\min}(A|B)_{\Omega}}\geq d_C\la\phi_+^{AC}|\Omega^{AC}|\phi_+^{AC}\ra\;,
\end{align}
where $|\phi_{+}^{AC}\>$ denotes the maximally entangled state on $\H_A\otimes\H_C$.
\item For any $\omega_1^A,...,\omega_n^A\in\B(\mH_{A})$:
\be\label{hmin2}
H_{\min}(A|B)_{\Omega}\leq H_{\min}(A|C)_{\Omega}\;.
\ee
\end{enumerate}
\end{lemma}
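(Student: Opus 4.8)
I would prove the cycle $(1)\Rightarrow(3)\Rightarrow(2)\Rightarrow(1)$. The one technical input used throughout is the SDP form of the conditional min‑entropy, Eq.~\eqref{eq:min-ent-def}, i.e.\ $2^{-H_{\min}(A|B)_{\Omega}}=\min\{\Tr[\tau^B]:\I^A\otimes\tau^B\ge\Omega^{AB}\}$, together with the elementary identity $\langle\phi_+^{AC}|(M^A\otimes N^C)|\phi_+^{AC}\rangle=\tfrac{1}{d_C}\Tr[M N^{T}]$ (transpose in the basis fixing $|\phi_+^{AC}\rangle$, using $d_A=d_C$ to identify $\mH_A\cong\mH_C$), which gives $d_C\langle\phi_+^{AC}|\Omega^{AC}|\phi_+^{AC}\rangle=\sum_i q_i\Tr[\omega_i^A(\sigma_i^C)^{T}]$.

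\textbf{The easy directions.} For $(1)\Rightarrow(3)$: if $\mE$ satisfies \eqref{main3} then by linearity $\id_A\otimes\mE$ maps $\Omega^{AB}$ to $\Omega^{AC}$, and the conditional min‑entropy obeys the data‑processing inequality (any feasible $\tau^B$ for the $(A|B)$ SDP is sent by $\mE$ to a feasible $\tau^C=\mE(\tau^B)$ with $\Tr[\tau^C]=\Tr[\tau^B]$), so \eqref{hmin2} follows. For $(3)\Rightarrow(2)$: sandwiching any feasible $\tau^C$ with $|\phi_+^{AC}\rangle$ gives $\tfrac{1}{d_C}\Tr[\tau^C]\ge\langle\phi_+^{AC}|\Omega^{AC}|\phi_+^{AC}\rangle$, hence $2^{-H_{\min}(A|C)_{\Omega}}\ge d_C\langle\phi_+^{AC}|\Omega^{AC}|\phi_+^{AC}\rangle$; combining with \eqref{hmin2} (equivalently $2^{-H_{\min}(A|B)_{\Omega}}\ge 2^{-H_{\min}(A|C)_{\Omega}}$) yields \eqref{eq22}.

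\textbf{The main direction $(2)\Rightarrow(1)$.} I would argue the contrapositive. The set $K=\{(\mE(\rho_1^B),\dots,\mE(\rho_n^B)):\mE\text{ CPTP}\}$ is a compact convex subset of $\bigoplus_{i=1}^n\B(\mH_C)_{\mathrm{herm}}$, since the set of Choi operators of CPTP maps is compact convex and $\mE\mapsto(\mE(\rho_i))_i$ is affine. If no $\mE$ satisfies \eqref{main3} then $(\sigma_1^C,\dots,\sigma_n^C)\notin K$, so strict separation in finite dimension produces Hermitian $Y_i\in\B(\mH_C)$ with $\sum_i\Tr[Y_i\sigma_i^C]>\max_{\mE\text{ CPTP}}\sum_i\Tr[Y_i\mE(\rho_i^B)]$. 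Writing the right‑hand side through the Choi operator and dualizing the resulting SDP (Slater holds: $\tfrac{1}{d_C}\I\otimes\I$ is strictly primal‑feasible), it equals $\min\{\Tr[\mu]:\I^C\otimes\mu\ge\sum_i Y_i\otimes(\rho_i^B)^{T}\}$, attained at some $\mu_0$. Crucially, the separation is invariant under $Y_i\mapsto Y_i+c_i\I$ for arbitrary reals $c_i$, because $\Tr[\sigma_i^C]=\Tr[\mE(\rho_i^B)]=1$ makes both sides shift by $\sum_i c_i$; as each $q_i>0$, one can choose a large common $\lambda$ and corresponding $c_i$ so that simultaneously $Y_i>0$ and $\Tr[Y_i]=\lambda q_i$ for every $i$. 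Then $\omega_i^A:=(Y_i)^{T}/(\lambda q_i)$ is a genuine density matrix. Taking the full transpose of the operator inequality (which preserves positivity) turns $\I^C\otimes\mu_0\ge\sum_i Y_i\otimes(\rho_i^B)^{T}$ into $\I^A\otimes\mu_0^{T}\ge\lambda\sum_i q_i\,\omega_i^A\otimes\rho_i^B=\lambda\,\Omega^{AB}$, whence $2^{-H_{\min}(A|B)_{\Omega}}\le\Tr[\mu_0]/\lambda$; meanwhile $\Tr[\mu_0]<\sum_i\Tr[Y_i\sigma_i^C]=\lambda\sum_i q_i\Tr[\omega_i^A(\sigma_i^C)^{T}]=\lambda\,d_C\langle\phi_+^{AC}|\Omega^{AC}|\phi_+^{AC}\rangle$. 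Hence $2^{-H_{\min}(A|B)_{\Omega}}<d_C\langle\phi_+^{AC}|\Omega^{AC}|\phi_+^{AC}\rangle$, contradicting \eqref{eq22}.

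\textbf{Where the difficulty is.} The content lies entirely in $(2)\Rightarrow(1)$. Two points need care: (i) correctly dualizing $\max_\mE\sum_i\Tr[Y_i\mE(\rho_i)]$ into a min‑entropy–type SDP, which forces one to track the transposes introduced by the Choi isomorphism; and (ii) upgrading the separating functional — a priori only a tuple of Hermitian operators with uncontrolled traces — into a tuple of bona fide density matrices carrying exactly the prescribed weights $q_i$, which is what the trace‑shift $Y_i\mapsto Y_i+c_i\I$ accomplishes. I expect step (ii), the weight/normalization matching, to be the most delicate bookkeeping; the remainder is standard SDP duality, Hahn–Banach separation, and data processing.
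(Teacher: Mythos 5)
Your proof is correct and follows essentially the same route as the paper's: a separating-hyperplane argument on the compact convex set of achievable tuples $(\mE(\rho_i^B))_i$, the same trace-shift trick (using $q_i>0$ and $\Tr[\sigma_i^C]=\Tr[\mE(\rho_i^B)]=1$) to upgrade the separating functionals into density matrices carrying the weights $q_i$, and the identification of $\max_{\mE}\langle\phi_+^{AC}|(\id\otimes\mE)(\Omega^{AB})|\phi_+^{AC}\rangle$ with $\frac{1}{d_A}2^{-H_{\min}(A|B)_{\Omega}}$ — which the paper imports from K\"onig--Renner--Schaffner and you re-derive explicitly via SDP duality. The remaining implications (data processing for $(1)\Rightarrow(3)$, and sandwiching a feasible $\tau^C$ with $|\phi_+^{AC}\rangle$ for $(3)\Rightarrow(2)$) match the paper's closing arguments, so only the organization into a contrapositive cycle differs.
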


\begin{proof}
	Consider two families of density matrices, $\{\rho_i^B \}_{i=1}^n$ and $\{\sigma_i^C \}_{i=1}^n$. We want to reformulate, in an equivalent way, the condition
	\[
	\exists \textrm{ CPTP } \mE:\mE(\rho_i^B)=\sigma_i^C,\forall i.
	\]
	By introducing a set of self-adjoint operators $\{X_j^C\}$ forming a basis for $\B(\sH_C)$, Eq.~(\ref{main3}) can be written as
	\begin{equation}\label{eq:vectors}
	\exists \textrm{ CPTP } \mE:\Trr{\mE(\rho_i^B)\ X_j^C}=\Trr{\sigma_i^{C}\  X_j^C},\forall i,j.
	\end{equation}
	
	Let us now consider the set of real vectors 
	\[
	\boldsymbol{r}_{\mE}=(r_{ij}):r_{ij}=\Trr{\mE(\rho_i^B)\ X_j^C}
	\]
	obtained by letting $\mE$ vary over all possible CPTP maps from system $B$ to system $C$, while the $\rho_i$'s and the $X_j$'s are kept fixed. It is clear that the set
	\[
	\mathcal{S}=\{\boldsymbol{r}_\mE:\mE\text{ CPTP} \}
	\]
	is a closed and bounded convex set, as it is the image, under a linear map, of the set of CPTP maps from $B$ to $C$ (that is a closed and bounded convex set). By writing $\boldsymbol{s}=(s_{ij})$ when $s_{ij}=\Trr{\sigma_i^{C}\  X_j^C}$, Eq.~(\ref{eq:vectors}) becomes
	\[
	\boldsymbol{s}\in\mathcal{S}\;.
	\]
	
	At this point, we invoke the separation theorem for convex sets (see, e.g., Ref.~\cite{rockafellar}), which in particular implies the following:
	
	\begin{lemma*}\label{theo:sep}
		Let $\mathcal{S}\in\mathbb{R}^n$ be a closed and bounded convex set. The vector
		$y\in\mathbb{R}^n$ belongs to $\mathcal{S}$,
		i.e. $y\in \mathcal{S}$, if and only if, for any vector $k\in\mathbb{R}^n$, $\max_{x\in\mathcal{S}}k\cdot x\ge k\cdot
		y$.
	\end{lemma*}
	Applied to our case, it yields that condition~(\ref{main3}) is equivalent to
	\begin{align*}
	\forall\boldsymbol{\lambda}=(\lambda_{ij})\text{ with }\lambda_{ij}\in\mathbb{R},\quad \max_{\boldsymbol{r}\in\mathcal{S}}\boldsymbol{r}\cdot\boldsymbol{\lambda}\ge \boldsymbol{s}\cdot\boldsymbol{\lambda}\;,
	\end{align*}
	namely,
	\begin{align*}
	&\forall\boldsymbol{\lambda}=(\lambda_{ij})\text{ with }\lambda_{ij}\in\mathbb{R},\\ &\max_{\mE:\textrm{CPTP}}\sum_{ij}\lambda_{ij}\Trr{\mE(\rho_i^B)\ X_j^C}\ge \sum_{ij}\lambda_{ij}\Trr{\sigma_i^{C}\  X_j^C}.
	\end{align*}
	
	Defining self-adjoint operators $Z_i^C=\sum_{j}\lambda_{ij}X_j^C$, we can reformulate the statement as follows:
	\begin{align*}
	&\forall\text{ self-adjoint } \{Z_i^C\},\\
	&\max_{\mE:\textrm{CPTP}}\sum_{i}\Trr{\mE(\rho_i^B)\ Z_i^C}\ge \sum_{i}\Trr{\sigma_i^{C}\ Z_i^C}.
	\end{align*}	
	Since the condition $\Trr{\rho_i^B}=\Trr{\mE(\rho_i^B)}=\Trr{\sigma_i^C}$ for all $i$ is guaranteed by hypothesis, by writing $\widetilde{Z}_i^C=Z_i^C-z_i \I^C$, for $z_i=d_C^{-1}\Trr{Z_i^C}$, it is clear that the condition can be further reformulated as
	\begin{align*}
	&\forall\textrm{ zero-trace } \{\widetilde{Z}_i^C\},\\
	 &\max_{\mE:\textrm{CPTP}}\sum_{i}\Trr{\mE(\rho_i^B)\ \widetilde{Z}_i^C}\ge \sum_{i}\Trr{\sigma_i^{C}\ \widetilde{Z}_i^C}.
	\end{align*}
	Also, by letting $k=\max_i|\mu_{\min}\widetilde{Z}_i^C|$, we can divide both sides by $kd_C$ so that the condition is reformulated as
	\begin{align*}
	&\forall\textrm{ zero-trace } \{\widetilde{Z}_i^C\} \textrm{ with min eigenvalue }\ge -d_C^{-1},\\ &\max_{\mE:\textrm{CPTP}}\sum_{i}\Trr{\mE(\rho_i^B)\ \widetilde{Z}_i^C}\ge \sum_{i}\Trr{\sigma_i^{C}\ \widetilde{Z}_i^C}.
	\end{align*}
	We then add to both sides the constant $nd_C^{-1}$ (where $n$, we recall, denotes the number of states), obtaining
\begin{align*}
	&\max_{\mE:\textrm{CPTP}}\sum_{i}\Trr{\mE(\rho_i^B)\ (\widetilde{Z}_i^C+d^{-1}_C \I^C)}\\
	&\ge \sum_{i}\Trr{\sigma_i^{C}\ (\widetilde{Z}_i^C+d^{-1}_C \I^C)}.
\end{align*}
	
	At this point, it holds that $\widetilde{Z}_i^C+d_C^{-1} \I^C\ge 0$ and $\Trr{\widetilde{Z}_i^C+d_C^{-1} \I^C}=1$, namely, $\widetilde{Z}_i^C+d_C^{-1} \I^C$ are themselves density matrices $\omega_i^C$. We can therefore reformulate Eq.~(\ref{main3}) as follows:
	\begin{align*}
	&\forall\textrm{ states } \{\omega_i^C\},\\ &\max_{\mE:\textrm{CPTP}}\sum_{i}\Trr{\mE(\rho_i^B)\ \omega_i^C}\ge \sum_{i}\Trr{\sigma_i^{C}\ \omega_i^C}.
	\end{align*}
	
	Let us now arbitrarily fix a probability distribution $q_i$, with $q_{\min}\equiv\min_iq_i>0$.
	Adding to both sides the constant $\frac{1-q_{\min}}{q_{\min}}$, multiplying both sides by $q_{\min}$, and rearranging terms, we obtain
		\begin{align*}
	&\forall\textrm{ states } \{\omega_i^C\},\\ &\max_{\mE:\textrm{CPTP}}\sum_{i}\Trr{\mE(\rho_i^B)\ q_{\min}\left(\omega_i^C+\frac{q_i-q_{\min}}{q_{\min}}\frac{\I^C}{d_C}\right)}\\
	&\ge \sum_{i}\Trr{\sigma_i^{C}\ q_{\min}\left(\omega_i^C+\frac{q_i-q_{\min}}{q_{\min}}\frac{\I^C}{d_C}\right)}.
	\end{align*}
	Noticing that the operators
	\[
q_{\min}\left(\omega_i^C+\frac{q_i-q_{\min}}{q_{\min}}\frac{\I^C}{d_C}\right)
	\]
	are sub-normalized density matrices with trace equal to $q_i$, we can equivalently reformulate condition~(\ref{main3}) as follows:
	\begin{align*}
	&\forall\textrm{ states } \{\omega_i^C\},\\ &\max_{\mE:\textrm{CPTP}}\sum_{i}q_i\Trr{\mE(\rho_i^B)\ \omega_i^C}\ge \sum_{i}q_i\Trr{\sigma_i^{C}\ \omega_i^C},
	\end{align*}
	where $q_i>0$ are arbitrarily fixed probabilities.
	
	The next step is to introduce an auxiliary system $A\cong C$ (i.e., $d_A=d_C$), choose two orthonormal bases $\{|i_A\>\}$ and $\{|i_C\> \}$, and define the maximally entangled state
	\begin{equation}\label{eq:max-ent}
	|\phi^{AC}_+\>\equiv d^{-1/2}_A\sum_{i=1}^{d_A}|i_A\>|i_C\>\;.
	\end{equation}
	Noticing that $\Trr{XY}=d\Trr{X\otimes Y^T\ \phi_+}$, where the superscript $T$ denotes the transposition with respect to the basis in~(\ref{eq:max-ent}), and that $\omega_i$ are density matrices if and only if $(\omega_i)^T$ are, we arrive at
	\begin{align}\label{eq:final-point}
	\forall\textrm{ states }& \{\omega_i^A\},\\ &\max_{\mE:\textrm{CPTP}}\sum_{i}q_i\Trr{\left\{\omega_i^A\otimes\mE(\rho_i^B)\right\}\ \phi_+^{AC}}\nonumber\\
	&\ge \sum_{i}q_i\Trr{\left\{\omega_i^A\otimes\sigma_i^{C}\right\}\ \phi_+^{AC}}.\nonumber
	\end{align}
	
	As shown in Ref.~\cite{Kon09}, the quantity
	\begin{align*}
	&\max_{\mE:\textrm{CPTP}}\sum_{i}q_i\Trr{\left\{\omega_i^A\otimes\mE(\rho_i^B)\right\}\ \phi_+^{AC}}\\
	&=\max_{\mE:\textrm{CPTP}}\<\phi_+^{AC}|(\id\otimes\mE)(\Omega^{AB})|\phi_+^{AC}\>\;,
	\end{align*}
	for $\Omega^{AB}\equiv\sum_iq_i\;\omega^A_i\otimes\rho_i^B$,
	can be written in terms of the conditional min-entropy~(\ref{eq:min-ent-def}) as
	\[
	\frac{1}{d_A}2^{-H_{\min}(A|B)_\Omega}\;,
	\]
	We thus proved that statements~(1) and~(2) of Lemma~\ref{mainlemma} are indeed equivalent.
	
	Moreover, a \textit{sufficient} condition for~(\ref{main3}) is that
	\begin{align*}
	\forall\textrm{ states }& \{\omega_i^A\},\\ &\max_{\mE:\textrm{CPTP}}\sum_{i}q_i\Trr{\left\{\omega_i^A\otimes\mE(\rho_i^B)\right\}\ \phi_+^{AC}}\\
	&\ge \max_{\mF:\textrm{CPTP}} \sum_{i}q_i\Trr{\left\{\omega_i^A\otimes\mF(\sigma_i^{C})\right\}\ \phi_+^{AC}}\;,
	\end{align*}
	namely
	\[
	2^{-H_{\min}(A|B)_\Omega}\ge 2^{-H_{\min}(A|C)_\Omega}\;,
	\]
	where now $\Omega^{AB}$ and $\Omega^{AC}$ are meant as the marginals of the same tripartite extension $\Omega^{ABC}=\sum_iq_i\;\omega_i^A\otimes\rho^B_i\otimes\sigma^C_i$. However, it is easy to verify that the above condition is also necessary: indeed, if~(\ref{main3}) holds, due to the data-processing theorem applied to the conditional min-entropy (see, e.g., Ref.~\cite{Ren05} and~\cite{Toma-book}), $H_{\min}(A|B)_\Omega\le H_{\min}(A|C)_\Omega$. We thus have that statements~(1) and~(3) are also logically equivalent, and hence the proof is complete.
\end{proof}

We are now ready to prove the main theorem.

\textbf{Proof of Theorem~\ref{mainthm}:}
Let $\{Q_k^A\}_{k=1}^{d_{A}^{2}}$ be the dual basis of $\{M_j^A\}$ in $\B(\mH_{A})$,  
that is, $\tr\left[M_j^AQ_k^A\right]=\delta_{jk}$.
Then, since $\{Q_k^A\}$ is itself a basis, we can write
\be
\rho^{AB}=\sum_{k=1}^{d_{A}^{2}}Q_k^A\otimes\widetilde{\rho}^{B}_{k}\quad\text{and}\quad
\sigma^{AC}=\sum_{k=1}^{d_{A}^{2}}Q_k^A\otimes\widetilde{\sigma}^{C}_{k}
\ee
where
\begin{align}
& \widetilde{\rho}^{B}_{j}\equiv\tr_{A}\left[\left(M_{j}^A\otimes \I^B\right)\rho^{AB}\right]\nonumber\\
& \widetilde{\sigma}^{C}_{j}\equiv\tr_{A}\left[\left(M_{j}^A\otimes \I^C\right)\sigma^{AC}\right]
\end{align}
are sub-normalized quantum states (i.e. positive semi-definite matrices). Moreover, since $\rho^A=\sigma^A$
we have $\tr\left[\widetilde{\rho}^{B}_{j}\right]=\tr\left[\widetilde{\sigma}^{C}_{j}\right]\equiv p_j$. 
We therefore conclude that that there exists CPTP map $\mE$ that satisfies~\eqref{bdef} if and only if 
there exists a CPTP map $\mE$ that satisfies
\be\label{sameold}
\sigma^{C}_{j}=\mE\left(\rho^{B}_{j}\right)
\ee
where $\rho^{B}_{j}\equiv\widetilde{\rho}^{B}_{j}/p_{j}$ and $\sigma^{C}_{j}\equiv\widetilde{\sigma}^{C}_{j}/p_{j}$.
To apply Lemma~\ref{mainlemma}, we introduce a system $A'$ with $d_{A'}=d_C$, we fix an arbitrary probability distribution $q_i>0$, and define
\begin{align*}
&\Omega^{A'BC}\equiv\sum_{j=1}^{d_{A}^{2}} q_j\; \omega_{j}^{A'}\otimes\rho_{j}^{B}\otimes\sigma_{j}^{C}=\sum_{j=1}^{d_{A}^{2}} \frac{q_j}{p_{j}^{2}}\times\nonumber\\
&\omega_{j}^{A'}\otimes\tr_{A}\left[\left(M_{j}^A\otimes I^B\right)\rho^{AB}\right]\otimes\tr_{A}\left[\left(M_{j}^A\otimes I^C\right)\sigma^{AC}\right]\;,
\end{align*}
where the states $\omega_i^{A'}$ can vary. Then, taking $q_j=p_j$, we conclude that
	\begin{align}
&\Omega^{A'B}=\Phi\otimes\id\left(\rho^{AB}\right)\;,
\nonumber\\
&\Omega^{A'C}=\Phi\otimes\id\left(\sigma^{AC}\right)\;.\label{eq:from-lemma-to-theo}
\end{align}
Notice that, in case some $p_i=0$, we can redefine the measurement operators $M^A_j\to M^A_j+\delta\I^A$, in such a way that they still span the set $\B(\H_A)$ but have non-zero probability everywhere. With Eq.~(\ref{eq:from-lemma-to-theo}) at hand, the proof of Theorem~\ref{mainthm} follows now from Lemma~\ref{mainlemma}.\qed

\section{Efficiency of Quantum Majorization}

We will show now that the problem of whether there exists a CPTP map $\mE$ such that $\mE(\rho_i)=\sigma_i$ (see Lemma~\ref{mainlemma} above) can be formulated as a semidefinite programming. Following similar lines, also all the other versions of quantum majorization discussed in this paper can be shown to be equivalent to a semidefinite programming. 

We start by noting that~\eqref{eq22} can be written as 
\be
2^{-H_{\min}(A|B)_\Omega}\geq \sum_{i=1}^{n}q_i \tr\left(\omega_i\sigma_{i}^{T}\right)
\ee
where
\be
\Omega^{AB}=\sum_{i=1}^{n}q_i\; \omega_{i}\otimes\rho_{i}\;.
\ee
In the following we absorb the $q_i$s into $\omega_i$s, so that the $\omega_i$s become subnormalized,
satisfying $\sum_{i=1}^{n}\tr[\omega_i]=1$. We get that the above condition is equivalent to the condition $\alpha(t)\geq 1$ for all $t$, where
\begin{align}
\alpha(t)\equiv& \frac{1}{t}\min \tr[\tau]\nonumber\\
& \text{subject to }\;\;I^{A}\otimes \tau\geq \sum_{i=1}^{n}\omega_{i}\otimes\rho_{i},\nonumber\\
			&\quad\quad\quad\quad\quad\sum_{i=1}^{n}\tr\left(\sigma_{i}^T\omega_i\right)=t\;;\;\sum_{i=1}^{n}\tr[\omega_i]=1\;,
\end{align}
with $\omega_i\geq 0$.
After rescaling $\tau'\equiv \frac{1}{t}\tau$ and $\omega_i'\equiv\frac{1}{t}\omega_i$ we get
\begin{align}
\alpha(t)\equiv& \min \tr[\tau']\nonumber\\
& \text{subject to }\;\;I^{A}\otimes \tau'\geq \sum_{i=1}^{n}\omega_{i}'\otimes\rho_{i},\nonumber\\
			&\quad\quad\quad\quad\quad\sum_{i=1}^{n}\tr\left(\sigma_{i}^T\omega_{i}'\right)=1\;;\;\sum_{i=1}^{n}\tr[\omega_i']=1/t\;,
\end{align}
The condition $\alpha(t)\geq 1$ for all $t$ is therefore equivalent to one condition, $\alpha\geq 1$ (more precisely, $\alpha=1$ since it can be shown that $\alpha$ can never exceed 1), where
\begin{align}
\alpha\equiv& \min \tr[Z]\nonumber\\
& \text{subject to }\;\;I^{A}\otimes Z\geq \sum_{i=1}^{n}X_{i}\otimes\rho_{i},\nonumber\\
			&\quad\quad\quad\quad\quad\sum_{i=1}^{n}\tr\left(\sigma_{i}^TX_{i}\right)=1\;;\;X_i\geq 0\;.
\end{align}

We now show that the above minimization problem is an SDP. To see it, 
we define the following vector space, which is a direct sum of $n+2$ Hilbert spaces:
\be
V_1\equiv\mB(\mH^A\otimes\mH^{B})\oplus\mB(\mH^{B})\oplus\mB(\mH^{A})\oplus\cdots\oplus\mB(\mH^A)\;.
\ee
The vector space $V_1$ is consisting of matrices $\zeta\in V_1$ of the form:
\be\label{form}
\zeta=(\eta,Z, X_1,..., X_n)
\ee
where $\eta\in\mB(\mH^A\otimes\mH^{B})$, $Z\in\mB(\mH^{B})$, and $X_i\in \mB(\mH^{A})$ for each $i=1,...,n$.
In addition, we define the vector space $V_2\equiv \mB(\mH^{A}\otimes\mH^B)$, and a linear transformation $\Gamma: V_1\to V_2$ given by:
\be
\Gamma\left(\zeta\right)=I\otimes Z-\sum_{i=1}^{n}X_i\otimes\rho_{i}-\eta
\ee
Clearly, the map above is linear. Set $\sigma\equiv (\boldsymbol{0},\boldsymbol{0},\sigma_{1}^{T},...,\sigma_{n}^{T})$ so that $\tr[\sigma\zeta]=\sum_{i=1}^{n}\tr\left(\sigma_{i}^{T}X_i\right)$. 
We also denote $C\equiv (\boldsymbol{0},I,\boldsymbol{0},...,\boldsymbol{0})$.
 With these notations:
$$
\alpha=\min\left\{\tr[C\zeta]\;\Big|\;\zeta\geq 0\;;\;\Gamma(\zeta)= 0\;,\;\tr[\sigma\zeta]=1\right\}
$$
To bring the above optimization problem to a canonical SDP form, we denote $H_j\equiv\Gamma^{*}(E_j)$, where $j=1,...,d_{A}^{2}d_{B}^{2}$ and $E_j$ is a basis of $\mB(\mH^A\otimes\mH^{B})$. We also denote $H_0\equiv\sigma$. With this notations we get
\begin{align}
\alpha\equiv &\min\tr[C\zeta]\nonumber\\
& \text{ subject to }\;\;\zeta\geq 0\nonumber\\
&\quad\quad\quad\quad\quad\;\;\tr\left(\zeta H_{j}\right)=\delta_{0,j}\quad j=0,1,...,d_{A}^{2}d_{B}^{2}
\end{align}
 
It is interesting to note that the dual problem is given by
\begin{align}
\beta\equiv &\max y\nonumber\\
& \text{ subject to }\;\;y\sigma+\Gamma^{*}(\tau^{AB})\leq C\;\;,\tau^{AB}\in V_2.
\end{align}
where
the dual map $\Gamma^{*}$ is given by:
\begin{align}
& \Gamma^{*}(\tau^{AB})=\nonumber\\
&\left(-\tau^{AB},\tau^{B},-\tr_B\left[\tau^{AB}(I\otimes\rho_{1})\right],...,-\tr_B\left[\tau^{AB}(I\otimes\rho_{n})\right]\right)
\end{align}
Therefore, the dual problem can be expressed as
\begin{align}
\beta=&\max\;y\nonumber\\
&\text{ subject to }\;\;\tau^{AB}\geq 0\;;\;\tau^{B}\leq I\;;\;\text{and }\forall\;i=1,...,n\nonumber\\
&\quad\quad\quad\quad\quad\;\;\;y\sigma_{i}^{T}\leq \tr_B\left[\tau^{AB}(I\otimes\rho_{i})\right]
\end{align}
Note that $\alpha$ (or $\beta$) can be commuted efficiently using standard SDP algorithms if we do not require exact solution and allow for a small error.

\section{The Classical Case: Re-derivation of Thermo-Majorization}

 Thermo-majorization generalizes ordinary majorization in a natural way \cite{black1,sherman,stein,blackwell,torgersen-1970,ruch-schranner,torgersen}. Given two probability distributions $\p=(p_i)$ and $\q=(q_i)$ together with the Gibbs distribution $\boldsymbol{\gamma} = (\gamma_i) = (\frac{1}{Z} e^{-\beta E_i})$ at temperature $T=(k\beta)^{-1}$, we say that $\p$ thermo-majorizes $\q$ and write $\p \succ_T \q$ exactly when the following holds
\begin{equation}\label{au}
\sum_k |p_k - t \gamma_k| \ge \sum_k |q_k -t \gamma_k|,
\end{equation}
for all $t \ge 0$. This can be shown to be equivalent \cite{black1,sherman,stein,blackwell,ruch-schranner,AU1980} to the existence of a stochastic map $S$ such that $S\p = \q$ and $S \boldsymbol{\gamma} = \boldsymbol{\gamma}$. In what follows, we show that quantum majorization reduces to Thermo-majorization in the classical case. In particular, we will show that the conditions in Theorem~\ref{mainthm} (specifically, Eq.~\eqref{hmin2} of Lemma~\ref{mainlemma}) reduces to~\eqref{au}. 
We first start with the semi-classical case.

\subsection{The semi-classical case}

In this case, we assume that the $n$ states, $\{\sigma_{i}^{C}\}$, in Lemma~\ref{mainlemma} commute with each other. Therefore, we can assume that they are all diagonal with respect to a fixed basis. We show now that this immediately implies that the $n$ states $\{\omega_i\}$ in Lemma~\ref{mainlemma} can also be taken to be diagonal in the same basis. In fact, in the following lemma we show that if $\{\sigma_{i}^{C}\}$ are all symmetric with respect to some group, then the states $\{\omega_i\}$ also have the same symmetry.

\begin{lemma}
Using the same notations as in Lemma~\ref{mainlemma},
let $\Delta:\mB(\mH^C)\to\mB(\mH^C)$ be a CPTP map, and suppose $\Delta(\sigma^{T}_{i})=\sigma^{T}_{i}$ for all $i=1,...,n$. Then, in all the statements of Lemma~\ref{mainlemma} we can replace the set 
$\{\omega_{i}^{A}\}$ with the set $\Delta^{\dag}(\omega^{A}_{i})$.
\end{lemma}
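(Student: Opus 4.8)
\emph{Proof plan.} The plan is to show that, with the identification $A\cong C$ so that $\Delta^\dagger$ acts on $\mB(\mH^A)$, each of the three statements of Lemma~\ref{mainlemma} remains equivalent to the existence of a CPTP map $\mE$ with $\mE(\rho_i^B)=\sigma_i^C$ after the test operators $\{\omega_i^A\}$ are replaced everywhere by $\{\Delta^\dagger(\omega_i^A)\}$. I would start from the reformulation already obtained inside the proof of Lemma~\ref{mainlemma}: by Eq.~\eqref{eq:final-point} together with the identity $2^{-H_{\min}(A|B)_\Omega}=d_A\max_{\mE}\la\phi_+^{AC}|(\id\otimes\mE)(\Omega^{AB})|\phi_+^{AC}\ra$ of Ref.~\cite{Kon09}, for a \emph{fixed} family $\boldsymbol\omega=\{\omega_i^A\}$ the inequality in statement~(2) of Lemma~\ref{mainlemma} is precisely $L(\boldsymbol\omega)\ge R(\boldsymbol\omega)$, where $L(\boldsymbol\omega)\equiv\max_{\mE}\sum_i q_i\Tr[(\omega_i^A\otimes\mE(\rho_i^B))\,\phi_+^{AC}]$ and $R(\boldsymbol\omega)\equiv\sum_i q_i\Tr[(\omega_i^A\otimes\sigma_i^C)\,\phi_+^{AC}]$. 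The whole argument is then a comparison of $L,R$ evaluated at $\boldsymbol\omega$ versus at $\Delta^\dagger\boldsymbol\omega\equiv\{\Delta^\dagger(\omega_i^A)\}$.

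For the right-hand side I would use the maximally entangled state identity $\la\phi_+^{AC}|X^A\otimes Y^C|\phi_+^{AC}\ra=\tfrac1{d_A}\Tr[XY^T]$, the transpose being taken in the Schmidt basis of $|\phi_+^{AC}\ra$, to write $R(\boldsymbol\omega)=\tfrac1{d_A}\sum_i q_i\Tr[\omega_i\sigma_i^T]$. The hypothesis $\Delta(\sigma_i^T)=\sigma_i^T$ together with the adjoint relation $\Tr[\omega_i\,\Delta(\sigma_i^T)]=\Tr[\Delta^\dagger(\omega_i)\,\sigma_i^T]$ then immediately yields $R(\Delta^\dagger\boldsymbol\omega)=R(\boldsymbol\omega)$.

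For the left-hand side I would write $L(\Delta^\dagger\boldsymbol\omega)=\tfrac1{d_A}\max_{\mE}\sum_i q_i\Tr[\Delta^\dagger(\omega_i)\,\mE(\rho_i)^T]=\tfrac1{d_A}\max_{\mE}\sum_i q_i\Tr[\omega_i\,\Delta(\mE(\rho_i)^T)]$, and then use the key fact that the map $\Lambda(X):=\Delta(X^T)^T$ is again CPTP (if $\{K_a\}$ are Kraus operators of $\Delta$ then $\{\overline{K_a}\}$ are Kraus operators of $\Lambda$, and $\sum_a\overline{K_a}^\dagger\overline{K_a}=\overline{\sum_aK_a^\dagger K_a}=\I$). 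Since $\Delta(\mE(\rho_i)^T)=\Lambda(\mE(\rho_i))^T$, this gives $L(\Delta^\dagger\boldsymbol\omega)=\max_{\mE}\sum_i q_i\la\phi_+^{AC}|\omega_i^A\otimes(\Lambda\circ\mE)(\rho_i^B)|\phi_+^{AC}\ra\le L(\boldsymbol\omega)$, because $\Lambda\circ\mE$ ranges over a subset of all CPTP maps. Combining the two estimates proves the substantive direction: if the inequality of statement~(2) holds for every family of the form $\Delta^\dagger\boldsymbol\omega$, then for arbitrary states $\{\omega_i^A\}$ one has $L(\boldsymbol\omega)\ge L(\Delta^\dagger\boldsymbol\omega)\ge R(\Delta^\dagger\boldsymbol\omega)=R(\boldsymbol\omega)$, so statement~(2) holds for all $\{\omega_i^A\}$ and hence, by Lemma~\ref{mainlemma}, statements~(1) and~(3) hold as well. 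The converse direction is routine: if $\mE$ with $\mE(\rho_i^B)=\sigma_i^C$ exists, then $L(W)\ge R(W)$ for every family of positive operators $\{W_i\}$ (the maximisation dominates the particular choice $\mE$), in particular for $W_i=\Delta^\dagger(\omega_i^A)$; and applying the min-entropy data-processing inequality to $\id\otimes\mE$ gives $H_{\min}(A|B)_{\Omega'}\le H_{\min}(A|C)_{\Omega'}$ for $\Omega'^{ABC}=\sum_i q_i\Delta^\dagger(\omega_i^A)\otimes\rho_i^B\otimes\sigma_i^C$, i.e.\ the replaced form of statement~(3).

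The main point to be careful about is that the maximally entangled state forces a transpose into both sides, so one cannot simply ``push $\Delta$ through $\mE$''; the device that makes it work is the CPTP-ness of $\Lambda(X)=\Delta(X^T)^T$. One should also note that the chain runs through $L(\Delta^\dagger\boldsymbol\omega)\le L(\boldsymbol\omega)$, which superficially looks like the wrong direction, but pairs with the \emph{equality} $R(\Delta^\dagger\boldsymbol\omega)=R(\boldsymbol\omega)$ to give exactly what is needed. A last minor remark: when $\Delta$ is not unital, $\Delta^\dagger(\omega_i^A)$ is only a positive operator rather than a normalised state, which is harmless because the conditions of Lemma~\ref{mainlemma} are equivalent to the same conditions with $\{\omega_i^A\}$ ranging over all positive operators (rescale each $\omega_i$ and absorb the scalars into a new strictly positive weight distribution, then apply Lemma~\ref{mainlemma}); in the applications used here, where $\Delta$ is a pinching or a group twirl, $\Delta$ is unital and self-adjoint, so $\Delta^\dagger(\omega_i^A)=\Delta(\omega_i^A)$ is again a genuine density matrix and no such remark is needed.
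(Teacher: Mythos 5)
Your proof is correct, and its overall skeleton coincides with the paper's: show that the right-hand side of Eq.~\eqref{eq22} is unchanged under $\boldsymbol\omega\mapsto\Delta^\dagger\boldsymbol\omega$ (via $\Tr[\omega_i\,\Delta(\sigma_i^T)]=\Tr[\Delta^\dagger(\omega_i)\,\sigma_i^T]$ and the fixed-point hypothesis), show that the left-hand side can only decrease, and chain $L(\boldsymbol\omega)\ge L(\Delta^\dagger\boldsymbol\omega)\ge R(\Delta^\dagger\boldsymbol\omega)=R(\boldsymbol\omega)$. Where you genuinely diverge is in the proof of the monotonicity $L(\Delta^\dagger\boldsymbol\omega)\le L(\boldsymbol\omega)$: the paper works with the \emph{primal} SDP defining $H_{\min}(A|B)$ and simply observes that any feasible $\tau$ with $\I\otimes\tau\ge\sum_i q_i\,\omega_i^A\otimes\rho_i^B$ remains feasible after applying the unital positive map $\Delta^\dagger\otimes\id$ to both sides, which gives Eq.~\eqref{b5} in one line and never encounters a transpose. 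You instead work with the \emph{dual} (guessing-probability) form of K\"onig--Renner--Schaffner, which forces the transpose into the picture and requires the auxiliary observation that $\Lambda(X)=\Delta(X^T)^T$ is again CPTP so that $\Delta$ can be absorbed into the optimisation channel as $\Lambda\circ\mE$. Both arguments are sound; the paper's is shorter and needs only positivity and unitality of $\Delta^\dagger$, while yours stays entirely in the operational picture and makes the mechanism ($\Delta$ acting as an extra, freely discardable post-processing on $C$) more transparent. Your closing remark about $\Delta^\dagger(\omega_i)$ being merely positive rather than normalised when $\Delta$ is not unital is a legitimate subtlety that the paper glosses over, and your resolution (rescaling and absorbing the traces into the weights, or noting that the relevant $\Delta$'s are unital pinchings/twirls) is adequate.
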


\begin{remark}
The lemma above is particularly interesting if the map $\Delta$ corresponds to some symmetry. That is, suppose the states $\{\sigma_{i}^{T}\}$ satisfy $U_g\sigma_{i}^{T}U_{g}^{\dag}=\sigma_{i}^{T}$ for any $g\in G$, where $\{U_g\}$ is some unitary representation of a compact group $G$. In this case, one can take $\Delta$ to be the G-twirling, and thereby assume that all the $\omega_{i}^A$s of Lemma~\ref{mainlemma} are also symmetric with respect to the same representation of $G$. 
\end{remark}

\begin{proof}
The proof follows from the two sides of Eq.~\eqref{eq22}. On one hand,
\begin{align}
\label{b2}
d_C\la\phi_+^{AC}|\Omega^{AC}|\phi_+^{AC}\ra&=\sum_{i}q_i\Trr{\sigma_{i}^{T} \omega_i}\nonumber\\
&=\sum_{i}q_i\Trr{\Delta(\sigma_{i}^{T}) \omega_i}\nonumber\\
&=\sum_{i}q_i\Trr{\sigma_{i}^{T} \Delta^{\dag}(\omega_i)}\;,
\end{align}
where $\Delta^\dag$ is the dual (adjoint) unital map of $\Delta$. On the other hand, if 
\be
I\otimes\tau\geq \sum_{i=1}^{n} q_i\;\omega_{i}^A\otimes\rho_{i}^B
\ee
for some non-normalized state $\tau$, then since $\Delta^{\dag}$ is a unital CP map we get
\be
I\otimes\tau\geq \sum_{i=1}^{n} q_i\;\Delta^{\dag}(\omega_{i}^A)\otimes\rho_{i}^B\;.
\ee
That is,
\be\label{b5}
2^{-H_{\min}(A|B)_\Omega}\ge 2^{-H_{\min}(A|B)_{\Delta^{\dag}\otimes\id(\Omega)}}
\ee
Combining~\eqref{b2} and~\eqref{b5} with~\eqref{eq22} we conclude that if~\eqref{eq22} holds for all states of the form 
$\{\Delta^{\dag}(\omega_{i}^A)\}$ then it holds for any set  of $n$ states $\{\omega_{i}^A\}$.
\end{proof}

The case that we are interested here is the one in which all the $\sigma_i$s are diagonal with respect to some fixed basis. This is the case considered in Corollary~1 of Ref.~\cite{Buscemi}. In this case, we can take $\Delta$ to be the completely decohering map with respect to the fix basis. Since the set$\{\Delta(\omega_i)\}$ consists of diagonal matrices, we can assume w.l.o.g. that all the $\omega_i$s in Lemma~\ref{mainlemma} are diagonal. We can therefore write 
\be
\omega^{A}_{i}\equiv \sum_{x=1}^{d_A}r_{x|i}|x\lr x|
\ee
so that
\be
\Omega^{AB}=\sum_{x=1}^{d_A}|x\lr x|\otimes\sum_{i=1}^{n}q_i r_{x|i}\rho_{i}^{B}
\ee
is a classical quantum state. It is well known that for classical quantum states, the conditional min-entropy can be expressed in terms of a guessing probability~\cite{Kon09}.
In the case that $d_A=2$ the conditional-min entropy of $\Omega^{AB}$ can be further simplified and we get
\begin{align}
&2^{-H_{\min}(A|B)_\Omega}\nonumber\\
&=\min_{\tau}\left\{\tr[\tau]:\;\tau\geq \sum_{i=1}^{n}q_i r_{x|i}\rho_{i}^{B}\quad\forall x=1,2\right\}\nonumber\\
& =\frac{1}{2}+\frac{1}{2}\left\|\sum_{i=1}^{n}q_i (r_{1|i}-r_{2|i})\rho_{i}^{B}\right\|_{1}\;.
\end{align}
However, even if the $\sigma_i$s all commute, it is not enough in general to restrict the comparison only to two-dimensional auxiliary states $\omega_i$, if the goal is that of showing the existence of a CPTP map achieving $\rho_i\to\sigma_i$. If such a restriction is made, what one can show is the existence of a weaker map, namely, a 2-statistical morphism~\cite{FBCMP,Buscemi}, but counterexamples have been shown for which neither a CPTP nor a PTP map exists~\cite{Matsu-counter}.

There are two very important exceptions to this. The first is the case in which there are only two commuting states $\{\rho_1,\rho_2\}$ and two commuting states $\{\sigma_1,\sigma_2 \}$, namely, the case of two classical dichotomies. In this case, already Blackwell showed that two-dimensional commuting states $\omega_i$ suffice~\cite{blackwell}.

The second exception is that of two pairs of qubit density matrices $\{\rho_1,\rho_2\}$ and $\{\sigma_1,\sigma_2 \}$: even if these do not commute, again, two-dimensional commuting states $\omega_i$ suffice~\cite{AU1980}.

\subsection{Thermo-majorization}

In the completely classical case, in addition to the $\omega_i$s, also the set $\{\rho_{i}^{B}\}$ consists of diagonal matrices. Denoting 
\be
\rho_{i}^{B}\equiv \sum_{y=1}^{d_B}s_{y|i}|y\lr y|
\ee
we get that 
\be
\Omega^{AB}=\sum_{x=1}^{d_A} p_{xy} |x\lr x| \otimes |y\lr y|\;\;;\;\;
p_{xy}\equiv\sum_{i=1}^{n}q_i r_{x|i}s_{y|i}\;.
\ee
Now, in this case, the conditional min-entropy is given by
\begin{align}
&2^{-H_{\min}(A|B)_\Omega}\nonumber\\
&=\min_{\tau}\left\{\tr[\tau]:\;I^A\otimes\tau\geq \sum_{x,y} p_{xy} |x\lr x| \otimes |y\lr y|\right\}\nonumber\\
& = \sum_{y}\max_{x} p_{xy}= \sum_{y}\max_{x}\; \r_x\cdot \s_y
\end{align}
where for each $x$ and $y$, $\r_x$ is the $n$-dimensional vector whose components are $\{q_ir_{x|i}\}_{i=1}^{n}$, and $\s_y$ is the $n$-dimensional probability vector whose components are $\{s_{y|i}\}_{i=1}^{n}$. Similarly, denoting by 
\be
\sigma_{i}^{C}\equiv \sum_{z=1}^{d_B}t_{z|i}|z\lr z|\;,
\ee
we conclude that 
\be
2^{-H_{\min}(A|C)_\Omega}=\sum_{z}\max_{x}\; \r_x\cdot \t_z\;,
\ee
 where $\t_z$ is the probability vector whose components are $t_{z|i}$. Therefore, in the classical case, the condition in~\eqref{hmin2} is equivalent to
\be\label{fmain}
\sum_{y}f(\s_y)\geq \sum_{z}f(\t_z)
\ee
for any sub-linear functional $f$ of the form $f(\s)=\max_{x}\; \r_x\cdot \s$. Note that $\sum_y\s_y=\sum_z\t_z=(1,1,...,1)^T$.

Finally, to obtain themo-majorization, we consider the case $n=2$. That is, we have two input states $\rho_1$ and $\rho_2$, and two output states $\sigma_1$ and $\sigma_2$.  We can think of $\rho_2$ and $\sigma_2$ as Gibbs states.
Note that all the vectors $\r_x$, $\s_y$, and $\t_z$ are two-dimensional since $n=2$. Therefore, in this case, it is sufficient to consider in~\eqref{fmain} only sub-linear functionals with two elements; that is, of the form $f(\s)=\max\{ \r_1\cdot \s,\; \r_2\cdot \s\}$ (see~\cite{Dahl1999} for more details). We therefore conclude that  the condition in~\eqref{hmin2} is equivalent to
\be
\sum_{y}\max\{ \r_1\cdot \s_y,\; \r_2\cdot \s_y\}\geq \sum_z \max\{ \r_1\cdot \t_z,\; \r_2\cdot \t_z\}
\ee
for all $\r_1,\r_2\in\mbb{R}^{2}_{+}$.
Using the relation $\max\{a,b\}=\frac{a+b}{2}+\frac{|a-b|}{2}$ for any two real numbers $a$ and $b$, the equation above becomes equivalent to
\be
\sum_{y}| (\r_1-\r_2)\cdot \s_y|\geq \sum_z | (\r_1-\r_2)\cdot \t_z|
\ee
where we used the fact that $\sum_y\s_z=\sum_z\t_z=(1,1,...,1)^T$. Denoting by $\r_1-\r_2\equiv\begin{pmatrix} a \\ b\end{pmatrix}\in\mbb{R}^{2}$, the above equation becomes
\be
\sum_{y}|a s_{y|1}+bs_{y|2}|\geq \sum_z | a t_{z|1} +bt_{z|2}|
\ee
Dividing by $a$ and denoting $r\equiv -b/a$ we conclude that our condition in~\eqref{hmin2} reduces in the classical case to the thermo-majorization condition:
\be
\sum_{y}|s_{y|1}-rs_{y|2}|\geq \sum_z |  t_{z|1} -rt_{z|2}|\quad\forall r\geq 0\;,
\ee
Note that there is an equality above if $r<0$ so we assume w.l.o.g. that $r\geq 0$.

\subsection{Proof of Corollary \ref{thermo-major}}

The proof of Corollary \ref{thermo-major} can now be established.
Suppose we are interested in the conversion of $\rho^A$ into $\sigma^{A'}$ under TPs. Moreover suppose that $[\rho^A, \H^A] = 0$, as explained in the main text one may restrict without loss of generality to $\eta_1$ and $\eta_2$ being incoherent in energy. Therefore the state $\Omega^{RA}$ is a classical state. Since TPs are covariant, and $\rho^A$ is incoherent in energy it implies that the states accessible under this class must also be incoherent in energy and so $[\sigma^{A'}, H^{A'}]=0$ is a necessary condition. Since both input and output states are incoherent the problem reduces to the interconversion of the distributions over energy under stochastic maps that preserve the Gibbs state. This coincides with the conditions for thermo-majorization as stated above.

On the other hand, suppose $[\sigma^{A'}, H^{A'}]= 0$. Now if there exists a a TP map $\E$ such that $\E(\rho^A) = \sigma^{A'}$ it is readily seen that $U'(t) \E(\rho^A)U'(t)^\dagger = \E(U'(t) \rho^AU'(t)^\dagger )=\sigma^{A'}$ for any $t$. Averaging over $t$ gives that $\E(\<\rho^A\>) = \sigma^{A'}$. Therefore $\rho^A \longrightarrow \sigma^{A'}$ under TPs if and only if $\<\rho^A\> \longrightarrow \sigma^{A'}$ under TPs. Therefore such an interconversion is possible if and only if the distribution over energy of $\<\rho^A\>$ thermo-majorizes the distribution over energy of $\sigma^{A'}$.

\section{$G$-Covariant maps}

Theorem~\ref{mainthm} can also be specialized to $G$-covariant maps. In what follows, we consider three unitary representations $g\to U_g$ of the same compact group $G$ on systems $A$, $B$, and $C$. We use the following notations: $\mU_g(x)=U_gxU_g^\dag$, $\overline{\mU}_g(x)=U_g^*xU_g^T$, $\mU^T_g(x)=U_g^TxU_g^*$, and $\mU^\dagger_g(x)=U_g^\dag x U_g$, with obvious meaning of symbols. We also introduce the bipartite twirling operation
\[
\G(x)=\int_Gdg\;\overline{\mU}_g\otimes\mU_g(x)
\]

\subsection{$G$-covariant version of Lemma~\ref{mainlemma}}

\begin{lemma}\label{cov-lemma}
	Let $\{\rho_i^B\}_{i=1}^{n}$ and $\{\sigma_{i}^C\}_{i=1}^{n}$ be two sets of $n$ density matrices in $\B(\mH_B)$ and $\B(\mH_C)$, respectively. Let $\{q_i\}_{i=1}^{n}$ be some arbitrary but fixed probability distribution with $q_i>0$.
	For any set of $n$ density matrices $\{\omega_{i}^A\}_{i=1}^{n}$ in $\B(\mH_{A})$ (with $d_A=d_C$) define the following tripartite separable matrix:
	\[
	\Omega^{ABC}\equiv\sum_{i=1}^{n} q_i\;\omega_{i}^A\otimes\rho_{i}^B\otimes \sigma_{i}^C\;,
	\]
	and its twirled version
	\[
	\widetilde{\Omega}^{ABC}=\int_Gdg\sum_{i=1}^{n} q_i\;\overline{\mU}_g(\omega_{i}^A)\otimes\mU_g(\rho_{i}^B)\otimes \mU_g(\sigma_{i}^C)\;.
	\]
	Then, the following are equivalent:
	\begin{enumerate}
		\item There exists a covariant CPTP map $\mE:\; \B(\mH_B)\to\B(\mH_C)$ such that
		\begin{align*}
		\mE(\rho_i^B)=\sigma_i^C\quad\forall\;i=1,...,n\;.
		\end{align*}
		\item For any $\omega_1^A,...,\omega_n^A\in\B(\mH_{A})$:
		\begin{align*}
		2^{-H_{\min}(A|B)_{\widetilde{\Omega}}}\ge d_C\la\phi_+^{AC}|\widetilde{\Omega}^{AC}|\phi_+^{AC}\ra\;.
		\end{align*}
		\item For any $\omega_1^A,...,\omega_n^A\in\B(\mH_{A})$:
\begin{align*}
		H_{\min}(A|B)_{\widetilde{\Omega}}\leq H_{\min}(A|C)_{\widetilde{\Omega}}\;.
\end{align*}
	\end{enumerate}
\end{lemma}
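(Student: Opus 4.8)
The plan is to derive Lemma~\ref{cov-lemma} from Lemma~\ref{mainlemma} by passing to the $G$-orbits of the two families and then symmetrising the auxiliary states; the conjugate twirl $\overline{\mU}_g$ on system $A$ that appears in the statement is precisely the choice that makes the symmetrisation simultaneously compatible with the maximally entangled state and with data processing for $H_{\min}$. First I would observe that a $G$-covariant CPTP map $\mE$ with $\mE(\rho_i^B)=\sigma_i^C$ for all $i$ exists if and only if an \emph{ordinary} CPTP map $\mE'$ with $\mE'\big(\mU_g(\rho_i^B)\big)=\mU_g(\sigma_i^C)$ exists for all $i=1,\dots,n$ and all $g\in G$: the forward direction is immediate from covariance, and for the converse one twirls, setting $\widetilde{\mE'}:=\int_G dg\ \mU_{g^{-1}}^C\circ\mE'\circ\mU_g^B$, which is CPTP as a Haar average of CPTP maps, is $G$-covariant by translation invariance of the Haar measure, and satisfies $\widetilde{\mE'}(\rho_i^B)=\int_G dg\ \mU_{g^{-1}}^C\big(\mU_g(\sigma_i^C)\big)=\sigma_i^C$ using the hypothesis on $\mE'$.

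Next I would apply Lemma~\ref{mainlemma} to the families $\{\mU_g(\rho_i^B)\}$ and $\{\mU_g(\sigma_i^C)\}$, indexed now by the compact set $G\times\{1,\dots,n\}$ with the measure $dg\,q_i$ (the proof of Lemma~\ref{mainlemma} only uses that the index set is compact, so that the convex set of achievable vectors stays compact and the separation argument still applies; alternatively one first reduces the existence question to finite subfamilies by a compactness argument on the set of channels). This makes the existence of $\mE'$ as above equivalent to the statement that, for \emph{every} family $\{\omega_{g,i}^A\}$ of density matrices,
\[
2^{-H_{\min}(A|B)_{\Omega'}}\ \ge\ d_C\,\langle\phi_+^{AC}|\,(\Omega')^{AC}\,|\phi_+^{AC}\rangle,\qquad (\Omega')^{ABC}:=\int_G dg\sum_{i}q_i\,\omega_{g,i}^A\otimes\mU_g(\rho_i^B)\otimes\mU_g(\sigma_i^C),
\]
and likewise to $H_{\min}(A|B)_{\Omega'}\le H_{\min}(A|C)_{\Omega'}$ for every such family.

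The remaining point, and the real content beyond Lemma~\ref{mainlemma}, is that it suffices to restrict to $\omega_{g,i}^A$ of the form $\overline{\mU}_g(\omega_i^A)$, which turns $\Omega'$ into exactly the twirled operator $\widetilde{\Omega}$ of the statement. One inclusion is trivial; for the other, given an arbitrary family I would apply the unital CPTP twirl $\int_G dh\ \overline{\mU}_h^A\otimes\mU_h^B\otimes\mU_h^C$ to $(\Omega')^{ABC}$ and reindex $g\mapsto hg$, which by invariance of the Haar measure produces $(\widehat{\Omega}')^{ABC}=\int_G dg\sum_i q_i\,\overline{\mU}_g(\omega_i^\star)\otimes\mU_g(\rho_i^B)\otimes\mU_g(\sigma_i^C)$ for the single base family $\omega_i^\star:=\int_G dh\,\overline{\mU}_{h^{-1}}(\omega_{h,i}^A)$, i.e.\ $\widehat{\Omega}'$ is just $\widetilde{\Omega}$ for $\{\omega_i^\star\}$. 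Because $(\overline{U}_h\otimes U_h)|\phi_+^{AC}\rangle=|\phi_+^{AC}\rangle$ the right-hand side is unchanged, $\langle\phi_+^{AC}|(\widehat{\Omega}')^{AC}|\phi_+^{AC}\rangle=\langle\phi_+^{AC}|(\Omega')^{AC}|\phi_+^{AC}\rangle$, and because the twirl on $A$ is unital the argument behind Eq.~(\ref{b5}) gives $2^{-H_{\min}(A|B)_{\Omega'}}\ge 2^{-H_{\min}(A|B)_{\widehat{\Omega}'}}$; chaining these with the condition applied to $\{\omega_i^\star\}$ recovers the inequality for the arbitrary family, so conditions (1) and (2) of Lemma~\ref{cov-lemma} are equivalent. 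Finally (3)$\Rightarrow$(2) holds because $\id$ is admissible in the Choi formula for $2^{-H_{\min}(A|C)_{\widetilde{\Omega}}}$, while (1)$\Rightarrow$(3) follows from data processing applied to the identity $(\id^A\otimes\mE)\big(\widetilde{\Omega}^{AB}\big)=\widetilde{\Omega}^{AC}$, which itself uses covariance of $\mE$; together with (2)$\Rightarrow$(1) this closes all three equivalences.

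I expect the last step, together with the continuum bookkeeping in the application of Lemma~\ref{mainlemma}, to be the main obstacle. The delicate feature is that the twirl on the reference system $A$ must be taken with the \emph{conjugate} representation $\overline{\mU}_g$: this is exactly the choice that is simultaneously compatible with the maximally entangled state (invariance of $|\phi_+^{AC}\rangle$ under $\overline{U}\otimes U$) and with monotonicity of $H_{\min}(A|B)$ (unitality of the twirl), and it is the reindexing $g\mapsto hg$ against the Haar measure that collapses an arbitrary auxiliary family $\{\omega_{g,i}^A\}$ down to the single base family $\{\omega_i^\star\}$.
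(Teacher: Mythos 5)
Your architecture is sound and genuinely different from the paper's: you decompress covariance into the statement that an \emph{unrestricted} channel $\mE'$ must map the whole $G$-orbit $\{\mU_g(\rho_i^B)\}$ onto $\{\mU_g(\sigma_i^C)\}$, and your twirl-and-reindex computation (apply $\int dh\,\overline{\mU}_h\otimes\mU_h\otimes\mU_h$, substitute $g\mapsto hg$, and collapse an arbitrary family $\{\omega_{g,i}^A\}$ to $\overline{\mU}_g(\omega_i^\star)$ with $\omega_i^\star=\int dh\,\overline{\mU}_{h^{-1}}(\omega_{h,i}^A)$) is correct, as are the implications (1)$\Rightarrow$(3)$\Rightarrow$(2). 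The gap sits in (2)$\Rightarrow$(1), precisely where you invoke Lemma~\ref{mainlemma} for families indexed by $G\times\{1,\dots,n\}$. The separation argument does \emph{not} go through unchanged: the achievable vectors now live in an infinite-dimensional function space, so the separating functionals supplied by Hahn--Banach are arbitrary signed \emph{measures} on $G$, and the massaging steps of Lemma~\ref{mainlemma} (shift by a multiple of $\I^C$, rescale, absorb into the fixed weights) only convert such a functional into your form $dg\,q_i\,\omega_{g,i}^A$ when it is absolutely continuous with bounded density against Haar measure --- a point mass cannot be so written. To rescue this route you would need to show that absolutely continuous functionals suffice to separate, e.g.\ via equicontinuity of $g\mapsto\tr[\mE'(\mU_g(\rho_i^B))X_j^C]$ uniformly over the compact set of channels, so that weak-$*$ approximation of the separating measure preserves the strict gap. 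Your alternative patch (finite subfamilies plus compactness of the channel set) does settle the existence question, but the resulting conditions are those of Lemma~\ref{mainlemma} for finite subfamilies $\{(g_j,i_j)\}$ with their own weights, and these are \emph{not} instances of your continuum conditions, because the fixed measure $dg\,q_i$ cannot concentrate on finitely many group elements; you would have to re-run the twirl-and-reindex argument separately for each finite subfamily, choosing weights $p_{g,i}$ with $\sum_{g}p_{g,i}=q_i$ so that the averaged operators $q_i^{-1}\sum_{g}p_{g,i}\,\overline{\mU}_{g^{-1}}(\omega_{g,i}^A)$ are again density matrices. All of this is fixable, but none of it is in your writeup.

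For contrast, the paper never leaves finite dimensions: it re-runs the separation argument of Lemma~\ref{mainlemma} directly over the compact convex set of \emph{covariant} CPTP maps (still only $n$ states, so $\mathcal{S}\subset\mathbb{R}^{nd_C^2}$), and the entire additional content is the single identity $\max_{\mE\,\mathrm{covar.}}\la\phi_+^{AC}|(\id\otimes\mE)(\widetilde{\Omega}^{AB})|\phi_+^{AC}\ra=\max_{\mE\,\mathrm{CPTP}}\la\phi_+^{AC}|(\id\otimes\mE)(\widetilde{\Omega}^{AB})|\phi_+^{AC}\ra=\tfrac{1}{d_A}2^{-H_{\min}(A|B)_{\widetilde{\Omega}}}$, which follows from the invariance of $\widetilde{\Omega}^{AB}$ and $|\phi_+^{AC}\ra$ under $\overline{\mU}_g\otimes\mU_g$. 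Your conjugate-representation bookkeeping is exactly theirs; the difference is where the covariance constraint is absorbed --- into the channel for the paper, into the state family for you --- and the paper's choice is what keeps the separation theorem finite-dimensional and the proof short.
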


\begin{proof}
	The proof of Lemma~\ref{mainlemma} goes through unchanged, with the only difference being that we want to find a CPTP map $\mE$ that is covariant, i.e., that satisfies the following property:
	\begin{align}\label{eq:cov-prop}
	\U_g^C [\mE(\rho^B )]=\mE(\U_g^B[\rho^B])\quad\forall g\in G\;.
	\end{align}
	Hence, we can start from Eq.~(\ref{eq:final-point}), which in the covariant case becomes	
	\begin{align*}
	&\forall\textrm{ states } \{\omega_i^A\},\\
	&\max_{\mE:\textrm{ covar. CPTP}}\sum_{i}q_i\Trr{\left\{\omega_i^A\otimes\mE(\rho_i^B)\right\}\ \phi_+^{AC}}\\
	&\ge \sum_{i}q_i\Trr{\left\{\omega_i^A\otimes\sigma_i^{C}\right\}\ \phi_+^{AC}}.
	\end{align*}
	
	Using the covariance of the channel Eq.~(\ref{eq:cov-prop}), and the so-called ``ricochet property'' of the maximally entangled state, that is, $(\I^A\otimes X_C)|\phi^{AC}_+\>=(X^T_A\otimes \I^C)|\phi^{AC}_+\>)$, we can rewrite the left-hand side of the above inequality as follows:
	\begin{align*}
	&\sum_{i}q_i\Trr{\left\{\omega_i^A\otimes\mE(\rho_i^B)\right\}\ \phi_+^{AC}}\\
	&=\sum_{i}q_i\int_Gdg\Trr{(\omega_i^A\otimes\mE(\rho_i^B))\ (\mU_g^T\otimes\mU_g^\dagger)(\phi_+^{AC})}\\
	&=\sum_{i}q_i\int_Gdg\Trr{(\overline{\mU}_g\otimes\mU_g)(\omega_i^A\otimes\mE(\rho_i^B))\ \phi_+^{AC}}\\
	&=\<\phi_+^{AC}|\;(\id_A\otimes\mE_B)(\widetilde{\Omega}^{AB})\;|\phi_+^{AC}\ra\;,
	\end{align*}
	where, we recall, the channel $\mE$ is assumed to be covariant.
	
	Le us now consider the quantity
	\[
	\max_{\mE:\textrm{CPTP}}\<\phi_+^{AC}|(\id_A\otimes\mE_B)(\widetilde{\Omega}^{AB})|\phi_+^{AC}\ra\;,
	\]
	where the maximization now is allowed to run over all possible CPTP maps, not only covariant ones. However, since both $\widetilde{\Omega}^{AB}$ and $\phi_+^{AC}$ are invariant for the action $\overline{\mU}_g\otimes \mU_g$, we immediately have that
	\begin{align*}
	&\max_{\mE:\textrm{CPTP}}\<\phi_+^{AC}|(\id_A\otimes\mE_B)(\widetilde{\Omega}^{AB})|\phi_+^{AC}\ra\\
	&=\max_{\mE:\textrm{CPTP}}\Trr{(\id_A\otimes\mE_B)(\widetilde{\Omega}^{AB})\ \phi_+^{AC}}\\
	&=\int_Gdg\Trr{(\id_A\otimes\mE_B)\circ(\overline{\mU}_g\otimes \mU_g)(\widetilde{\Omega}^{AB})\ (\overline{\mU}_g\otimes \mU_g)(\phi_+^{AC})}\\
	&=\int_Gdg\Trr{(\id_A\otimes\mU_g^\dagger\circ\mE_B\circ\mU_g)(\widetilde{\Omega}^{AB})\ \phi_+^{AC}}\\
	&=\max_{\mE:\textrm{ covar. CPTP}}\<\phi_+^{AC}|(\id_A\otimes\mE_B)(\widetilde{\Omega}^{AB})|\phi_+^{AC}\ra\;,
	\end{align*} 
	and hence, using the conditional min-entropy,
	\begin{align*}
	&\max_{\mE:\textrm{ covar. CPTP}}\<\phi_+^{AC}|(\id_A\otimes\mE_B)(\widetilde{\Omega}^{AB})|\phi_+^{AC}\ra\\
	&=\frac{1}{d_A}2^{-H_{\min}(A|B)_{\widetilde{\Omega}}}\;.
	\end{align*}
	
	Hence, statement~(1) is equivalent to
	\begin{align*}
	&2^{-H_{\min}(A|B)_{\widetilde{\Omega}}}\\
	&\ge d_A\sum_{i}q_i\Trr{\left\{\omega_i^A\otimes\sigma_i^{C}\right\}\ \phi_+^{AC}}\\
	&=d_A\sum_{i}q_i\Trr{\mG\left\{\omega_i^A\otimes\sigma_i^{C}\right\}\ \phi_+^{AC}}\\
	&=d_A\<\phi_+^{AC}|\widetilde{\Omega}^{AC}|\phi_+^{AC}\ra\;.
	\end{align*}
	(Remember that $d_A=d_C$.) Following the same arguments used in the proof of Lemma~\ref{mainlemma}, we also obtain the equivalence between statement~(1) and statement~(3).	
\end{proof}

\subsection{$G$-covariant version of Theorem~\ref{mainthm}}

As before we used Lemma~\ref{mainlemma} to prove Theorem~\ref{mainthm}, here we use Lemma~\ref{cov-lemma} to prove Theorem~\ref{mainthm2}.

\begin{theorem}\label{mainthm2}
	Let $\rho^{AB}\in \B(\H_A \otimes \H_B) $ and $\sigma^{AC}\in \B(\H_A \otimes \H_C)$ be two compatible bipartite quantum states. Denote the dimension of any system $X$ as $d_X\in \mathbb{N}$. The following are equivalent:
\begin{enumerate}
\item There exists a $G$-covariant CPTP map $\mE:\B(\mH_{A})\to\B(\mH_{B})$ such that
\be\label{bdef2}
\sigma^{AC}=\id\otimes\mE\left(\rho^{AB}\right)
\ee
\item For any quantum process (CPTP linear map) $\Phi: \B(\H_A) \to\B(\H_{A'})$, with $d_{A'}=d_C$, 
\begin{align}\label{bhmin-group}
&H_{\min}(A'|B)_{\mG\left[\Phi\otimes\id\left(\rho^{AB}\right)\right]}\\
&\leq H_{\min}(A'|C)_{\mG\left[\Phi\otimes\id\left(\sigma^{AC}\right)\right]}\;.\nonumber
\end{align}
\item Eq.~\eqref{bhmin-group} holds for any measurement-prepare quantum channel 
$\Phi:\B(\mH_{A})\to\B(\mH_{A'})$ of the form:
\be\label{formphi}
\Phi\left(\gamma^A\right)=\sum_{j=1}^{d_{A}^{2}}\tr\left[M_{j}^A\gamma^A\right]\omega_{j}^{A'}\;,
\ee
where $\{M_{j}^A\}$ is an arbitrary, but fixed, informationally complete POVM on system $A$, while the states $\{\omega_{j}^{A'}\}$ can freely vary.
\item For any $\Phi:\B(\mH_{A})\to\B(\mH_{A'})$ of the form~\eqref{formphi} the following holds:
\begin{align*}
&2^{-H_{\min}(A|B)_{\mG\left[\Phi\otimes\id\left(\rho^{AB}\right)\right]}}\\
&\geq d_A\;\la\phi^+|\mG\left[\Phi\otimes\id\left(\sigma^{AC}\right)\right]|\phi^+\ra\;,
\end{align*}
where $|\phi_+^{A'C}\>$ is the maximally entangled state between systems $A'$ and $C$.
\end{enumerate}
\end{theorem}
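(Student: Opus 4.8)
The plan is to deduce this theorem from the $G$-covariant Lemma~\ref{cov-lemma} in exactly the way Theorem~\ref{mainthm} was deduced from Lemma~\ref{mainlemma}. First I would fix the informationally complete POVM $\{M_j^A\}_{j=1}^{d_A^2}$ appearing in~\eqref{formphi}, let $\{Q_k^A\}$ be its dual basis ($\tr[M_j^AQ_k^A]=\delta_{jk}$), and expand
\[
\rho^{AB}=\sum_{k}Q_k^A\otimes\widetilde\rho_k^B\ ,\qquad \sigma^{AC}=\sum_{k}Q_k^A\otimes\widetilde\sigma_k^C\ ,
\]
where $\widetilde\rho_j^B=\tr_A[(M_j^A\otimes\I^B)\rho^{AB}]$ and $\widetilde\sigma_j^C=\tr_A[(M_j^A\otimes\I^C)\sigma^{AC}]$ are sub-normalized states. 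Compatibility $\rho^A=\sigma^A$ forces $\tr[\widetilde\rho_j^B]=\tr[\widetilde\sigma_j^C]=:p_j$, and if some $p_j$ vanishes one first replaces $M_j^A\to M_j^A+\delta\I^A$, which keeps the operators spanning $\B(\H_A)$ while making every $p_j>0$.

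Next I would note that, since the channel $\mE$ acts on $B$ while the expansion above is carried on $A$, applying $\tr_A[(M_j^A\otimes\I)\,\cdot\,]$ to both sides of $\sigma^{AC}=\id\otimes\mE(\rho^{AB})$ — and, conversely, resumming over the basis $\{Q_k^A\}$ — shows that~\eqref{bdef2} holds for a given $G$-covariant CPTP map $\mE$ if and only if the \emph{same} $G$-covariant $\mE$ satisfies $\mE(\rho_j^B)=\sigma_j^C$ for every $j$, with $\rho_j^B:=\widetilde\rho_j^B/p_j$ and $\sigma_j^C:=\widetilde\sigma_j^C/p_j$. This is precisely the hypothesis format of Lemma~\ref{cov-lemma}.

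I would then invoke Lemma~\ref{cov-lemma} with $n=d_A^2$, with the fixed distribution $q_j=p_j$, and with auxiliary system $A'$ ($d_{A'}=d_C$) carrying the conjugate representation. For $\Phi$ of the form~\eqref{formphi} a direct computation gives
\[
\Phi\otimes\id(\rho^{AB})=\sum_j p_j\,\omega_j^{A'}\otimes\rho_j^B\ ,\qquad \Phi\otimes\id(\sigma^{AC})=\sum_j p_j\,\omega_j^{A'}\otimes\sigma_j^C\ ,
\]
so the marginals of the twirled state $\widetilde\Omega$ of Lemma~\ref{cov-lemma} are identified with $\mG[\Phi\otimes\id(\rho^{AB})]$ and $\mG[\Phi\otimes\id(\sigma^{AC})]$ (the fixed POVM is used in the decomposition, and the free states $\{\omega_j^{A'}\}$ of~\eqref{formphi} play the role of the free states of the lemma). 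Under this dictionary the three equivalent statements of Lemma~\ref{cov-lemma} become statements (1), (3) and (4) of the theorem, establishing their mutual equivalence; statement (2) is folded into the cycle via the trivial implication (2)$\Rightarrow$(3) and the data-processing implication (1)$\Rightarrow$(2), the latter holding because for a $G$-covariant $\mE$ the bipartite twirl commutes with $\id\otimes\mE$, so that $\mG[\Phi\otimes\id(\sigma^{AC})]=\id\otimes\mE(\mG[\Phi\otimes\id(\rho^{AB})])$, and then data processing for the conditional min-entropy yields~\eqref{bhmin-group}.

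The main obstacle I anticipate is purely bookkeeping: one must verify that the bipartite $G$-twirl $\mG$ on $A'B$ (resp.\ $A'C$), built from $\overline{U}_g\otimes U_g$, coincides with the twirl $\widetilde\Omega$ constructed in Lemma~\ref{cov-lemma}, and confirm that the ricochet-property and covariance manipulations have already been discharged inside the proof of that lemma, so that nothing further needs checking at the level of the original bipartite states $\rho^{AB}$ and $\sigma^{AC}$. Everything else is a transcription of the non-covariant argument.
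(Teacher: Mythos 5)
Your proposal is correct and follows essentially the same route as the paper's own proof: the dual-basis decomposition of $\rho^{AB}$ and $\sigma^{AC}$ via the informationally complete POVM, the reduction to $\mE(\rho_j^B)=\sigma_j^C$ for the same covariant $\mE$, the application of Lemma~\ref{cov-lemma} with $q_j=p_j$ and the identification $\widetilde\Omega^{A'B}=\mG[\Phi\otimes\id(\rho^{AB})]$, $\widetilde\Omega^{A'C}=\mG[\Phi\otimes\id(\sigma^{AC})]$, including the perturbation $M_j^A\to M_j^A+\delta\I^A$ when some $p_j=0$. Your explicit treatment of how statement (2) enters the cycle — via the commutation of the bipartite twirl with $\id\otimes\mE$ for covariant $\mE$ followed by data processing — is a correct filling-in of a step the paper leaves implicit.
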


We are now ready to prove the main theorem.

\textbf{Proof of Theorem~\ref{mainthm2}:}
Let $\{Q_k^A\}_{k=1}^{d_{A}^{2}}$ be the dual basis of $\{M_j^A\}$ in $\B(\mH_{A})$,  
that is, $\tr\left[M_j^AQ_k^A\right]=\delta_{jk}$.
Then, since $\{Q_k^A\}$ is itself a basis, we can write
\be
\rho^{AB}=\sum_{k=1}^{d_{A}^{2}}Q_k^A\otimes\widetilde{\rho}^{B}_{k}\quad\text{and}\quad
\sigma^{AC}=\sum_{k=1}^{d_{A}^{2}}Q_k^A\otimes\widetilde{\sigma}^{C}_{k}
\ee
where
\begin{align}
& \widetilde{\rho}^{B}_{j}\equiv\tr_{A}\left[\left(M_{j}^A\otimes \I^B\right)\rho^{AB}\right]\nonumber\\
& \widetilde{\sigma}^{C}_{j}\equiv\tr_{A}\left[\left(M_{j}^A\otimes \I^C\right)\sigma^{AC}\right]
\end{align}
are sub-normalized quantum states (i.e. positive semi-definite matrices). Moreover, since $\rho^A=\sigma^A$
we have $\tr\left[\widetilde{\rho}^{B}_{j}\right]=\tr\left[\widetilde{\sigma}^{C}_{j}\right]\equiv p_j$. 
We therefore conclude that that there exists a covariant CPTP map $\mE$ that satisfies~\eqref{bdef2} if and only if 
there exists a covariant CPTP map $\mE$ that satisfies
\be
\sigma^{C}_{j}=\mE\left(\rho^{B}_{j}\right)
\ee
where $\rho^{B}_{j}\equiv\widetilde{\rho}^{B}_{j}/p_{j}$ and $\sigma^{C}_{j}\equiv\widetilde{\sigma}^{C}_{j}/p_{j}$.
To apply Lemma~\ref{cov-lemma}, we introduce a system $A'$ with $d_{A'}=d_C$, we fix an arbitrary probability distribution $q_i>0$, and define
\begin{align*}
&\Omega^{A'BC}\equiv\sum_{j=1}^{d_{A}^{2}} q_j\; \omega_{j}^{A'}\otimes\rho_{j}^{B}\otimes\sigma_{j}^{C}=\sum_{j=1}^{d_{A}^{2}} \frac{q_j}{p_{j}^{2}}\times\nonumber\\
&\omega_{j}^{A'}\otimes\tr_{A}\left[\left(M_{j}^A\otimes I^B\right)\rho^{AB}\right]\otimes\tr_{A}\left[\left(M_{j}^A\otimes I^C\right)\sigma^{AC}\right]\;,
\end{align*}
where the states $\omega_i^{A'}$ can vary. The corresponding twirled state is
\[
\widetilde{\Omega}^{A'BC}\equiv\int_Gdg(\overline{\mU}^A_g\otimes\mU^B_g\otimes\mU^C_g)(\Omega^{ABC})\;.
\]

Then, taking $q_j=p_j$, we conclude that
\begin{align}\label{eq:from-lemma2-to-theo2}
&\widetilde{\Omega}^{A'B}=\mG[\Phi\otimes\id\left(\rho^{AB}\right)]\;,
\nonumber\\
&\widetilde{\Omega}^{A'C}=\mG[\Phi\otimes\id\left(\sigma^{AC}\right)]\;.
\end{align}
Notice that, in case some $p_i=0$, we can redefine the measurement operators $M^A_j\to M^A_j+\delta\I^A$, in such a way that they still span the set $\B(\H_A)$ but have non-zero probability everywhere. With Eq.~(\ref{eq:from-lemma2-to-theo2}) at hand, the proof of Theorem~\ref{mainthm2} follows now from Lemma~\ref{cov-lemma}.\qed

\subsection{Covariant Stinespring dilation}

Given systems $A$ and $A'$, with Hilbert spaces $\H_A$ and $\H_{A'}$, we assume that each carry a unitary representation of a compact group $G$ given by $U: G \rightarrow \B(H_A)$ and $U': G \rightarrow \B(\H_{A'})$ respectively. A quantum process $\E : \B(A_A) \rightarrow \B(\H_{A'})$ from $A$ into $A'$ is said to be \emph{covariant} or \emph{symmetric} if $\E \circ \U_g = \U'_g \circ \E$ for all $g\in G$. The following lemma was proved in~\cite{gour2008resource}, and we provide the proof here for convenience.

\begin{lemma} \cite{gour2008resource} Given a covariant quantum process $\E :\B(\H_A) \rightarrow \B(\H_{A'})$ there exists a Kraus decomposition
\begin{equation}
\E(\rho^A) = \sum_{\lambda,m,k} K_{\lambda, m, k} \rho^A K^\dagger_{\lambda, m, k},
\end{equation}
with Kraus operators $K_{\lambda,m,k} : \H_A \rightarrow \H_{A'}$ that transform as 
\begin{equation}\label{ITO-Kraus}
U^{A'}(g) K_{\lambda,m,k} U^A(g)^\dagger= \sum_k v^\lambda(g)_{jk} K_{\lambda,m,j}
\end{equation}
where $(v^\lambda(g)_{jk})$ are the matrix elements of the $\lambda$-irrep of $G$ and $m$ is a multiplicity label.
\end{lemma}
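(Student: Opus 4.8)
The plan is to start from an arbitrary Kraus representation of $\E$, use covariance to extract a $g$-dependent unitary reshuffling of the Kraus operators, and then apply the representation theory of the compact group $G$ to bring that reshuffling into irrep-adapted form.

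First I would fix any Kraus decomposition $\E(\rho^A)=\sum_{a=1}^{n}L_a\,\rho^A\,L_a^\dagger$ with $L_a:\H_A\to\H_{A'}$, and by discarding linear dependencies assume it is \emph{minimal}, so the $\{L_a\}$ are linearly independent (this will matter for the continuity/uniqueness step). Covariance $\E\circ\U_g=\U'_g\circ\E$ then says that for every $\rho^A$ one has $\sum_a (L_aU^A(g))\,\rho^A\,(L_aU^A(g))^\dagger=\sum_a (U^{A'}(g)L_a)\,\rho^A\,(U^{A'}(g)L_a)^\dagger$, so $\{L_aU^A(g)\}_a$ and $\{U^{A'}(g)L_a\}_a$ are two Kraus representations of the same channel. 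Hence there is an $n\times n$ unitary matrix $W(g)$ with $U^{A'}(g)L_a=\sum_b W(g)_{ab}\,L_b\,U^A(g)$, i.e.
\[
U^{A'}(g)\,L_a\,U^A(g)^\dagger=\sum_b W(g)_{ab}\,L_b .
\]

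Because the $L_b$ are linearly independent, $W(g)$ is \emph{uniquely} fixed by this identity; this makes $g\mapsto W(g)$ continuous (it is built from the continuous representations $U^A,U^{A'}$), and substituting the identity into itself for a product shows $W(g_1g_2)=W(g_2)W(g_1)$. Thus $g\mapsto W(g^{-1})$ is a genuine finite-dimensional unitary representation of $G$ on the index space $\mathbb{C}^n$. By complete reducibility (Peter--Weyl) there is a unitary $S$ on $\mathbb{C}^n$ with $S\,W(g)\,S^\dagger=\bigoplus_\lambda v^\lambda(g)\otimes\I_{m_\lambda}$, where $\lambda$ labels the irreps that occur, $v^\lambda(g)$ are their matrix representatives, and $m_\lambda$ is the multiplicity. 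Defining $K_{\lambda,m,k}:=\sum_a S_{(\lambda m k),a}\,L_a$ gives, since $S$ is unitary, another Kraus decomposition of $\E$, and rewriting the displayed identity in this new basis yields precisely the stated transformation law~\eqref{ITO-Kraus}, with $m$ the multiplicity label. If one also wants the Stinespring picture, one reads off the dilating isometry directly from these irrep-adapted Kraus operators, with the labels $(\lambda,m,k)$ forming an orthonormal basis of the environment.

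The step I expect to be the main obstacle is precisely the passage from ``for each $g$ there is a unitary $W(g)$'' to ``$W$ is a genuine, continuous representation.'' Passing to a minimal Kraus decomposition is what makes this work cleanly: it removes the ambiguity in $W(g)$ (hence gives continuity/measurability for free) and it forces the a priori projective multiplier to be trivial, so no cocycle-lifting is needed. The remaining ingredients --- complete reducibility over a compact group and the unitary change of basis on the multiplicity index --- are routine. One minor point of bookkeeping is that $\E$ may change the Hilbert space ($\H_A\neq\H_{A'}$); this causes no trouble, since $W(g)$ acts only on the Kraus index and the two compared families both consist of maps $\H_A\to\H_{A'}$ of the same cardinality $n$.
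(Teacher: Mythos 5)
Your proposal is correct and follows essentially the same route as the paper: start from a linearly independent (minimal) Kraus family, use covariance plus the unitary-mixing freedom of Kraus representations to extract a mixing matrix $W(g)$, invoke linear independence to get uniqueness and hence a genuine (non-projective) representation, and then block-diagonalize into irreps by a unitary change of basis on the Kraus index. Your additional care about continuity and about the homomorphism-versus-antihomomorphism ordering of $W(g_1g_2)$ is a minor refinement the paper glosses over, not a different argument.
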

\begin{proof} Let $\{K_i\}$ be a set of linearly independent Kraus operators for $\E$. Since $\E$ is covariant we have that $\U^{A'}_g \circ \E \circ (\U^A_g)^\dagger = \E$ for any $g\in G$, and so it follows that $\{U^{A'}(g)K_i U^A(g)^\dagger\}_i$ forms another set of Kraus operators for $\E$ for any fixed $g\in G$. Since the Kraus representation is unique up to unitary mixing this implies that $U^{A'}(g)K_i U^A(g)^\dagger  = \sum_j V(g)_{ij} K_j$. Moreover, since the Kraus operators are linearly independent it follows that this unitary $V(g)$ is unique for any fixed $g$ and so the matrices $V(g)$ form a non-projective unitary representation of $G$. Using the unitary freedom to choose the basis $\{K_i\}$ we can choose a basis for which $V(g)$ is block diagonal in terms of a sum of unitary irreps of $G$.  We denote this basis $\{K_{\lambda,m, k}\}$, with $\{K_{\lambda,m, k}\}$ transforming as a $\lambda$ irrep under $G$ for each multiplicity $m$ as in Equation (\ref{ITO-Kraus}), and $k$ labels the basis vector of the irrep. This completes the proof.
\end{proof}
Such Kraus operators are said to transform irreducibly under the group action, and are irreducible tensor operators.  
\begin{theorem}[Covariant Stinespring~\cite{marvianthesis}] For any covariant quantum process $\E :\B(\H_A) \rightarrow \B(\H_{A'})$ there exists a Stinespring dilation
\begin{equation}
\E(\rho^A) = \tr_C V \rho^A \otimes |\sigma \>\< \sigma|^B V^\dagger
\end{equation}
where $|\sigma\>^B\in \H_B$ is a symmetric state under the unitary representation $U^B$ of $G$ on system $B$, system $C$ carries a unitary representation $U^C$ of $G$, and $V: \H_A \otimes \H_B \rightarrow \H_{A'} \otimes \H_C$ is an isometry such that
\begin{equation}\label{cov-V}
V (U^A(g) \otimes U^B(g) )= (U^{A'}(g) \otimes U^C(g)) V,
\end{equation}
for all $g \in G$.
\end{theorem}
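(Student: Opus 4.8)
The plan is to assemble the dilation directly out of the covariant Kraus decomposition provided by the preceding lemma, and then to read off the representation $U^C$ that $V$ forces on the environment. First I would invoke the lemma to fix, for $\E$, a family of Kraus operators $\{K_{\lambda,m,k}\}$ obeying the irreducible-tensor-operator relation~(\ref{ITO-Kraus}); since $\E$ is trace preserving these satisfy $\sum_{\lambda,m,k}K_{\lambda,m,k}^\dagger K_{\lambda,m,k}=\I^A$. I would take $\H_B=\mathbb{C}$ carrying the trivial representation of $G$, with $|\sigma\>^B$ its unit vector (trivially $U^B(g)$-invariant), and let $\H_C$ be the Hilbert space with orthonormal basis $\{|\lambda,m,k\>\}$ indexed by the Kraus labels. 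Define $V:\H_A\otimes\H_B\to\H_{A'}\otimes\H_C$ by $V(|\psi\>^A\otimes|\sigma\>^B)=\sum_{\lambda,m,k}(K_{\lambda,m,k}|\psi\>)\otimes|\lambda,m,k\>$. Trace preservation of $\E$ makes $V$ an isometry, and $\tr_C V(\rho^A\otimes|\sigma\>\<\sigma|^B)V^\dagger=\sum_{\lambda,m,k}K_{\lambda,m,k}\rho^A K_{\lambda,m,k}^\dagger=\E(\rho^A)$, so the dilation has the claimed form.

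It remains to produce the representation $U^C$ making $V$ covariant; since $U^B$ is trivial, (\ref{cov-V}) reduces to $VU^A(g)=(U^{A'}(g)\otimes U^C(g))V$ after identifying $\H_A\otimes\H_B\cong\H_A$. Rearranging~(\ref{ITO-Kraus}) (replacing $g$ by $g^{-1}$ and using unitarity of $U^A$, $U^{A'}$ and of the irrep matrices $v^\lambda$) yields $K_{\lambda,m,k}\,U^A(g)=U^{A'}(g)\sum_j\overline{v^\lambda(g)_{kj}}\,K_{\lambda,m,j}$. Substituting this into $VU^A(g)|\psi\>$ and regrouping the sum shows that $V$ intertwines the $G$-action precisely when $U^C(g)$ acts on each $(\lambda,m)$ block by the complex-conjugate matrix $\overline{v^\lambda(g)}$ and trivially on the multiplicity index $m$, i.e. $U^C=\bigoplus_{\lambda,m}\overline{v^\lambda}$. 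Since the complex conjugate of a finite-dimensional (strongly continuous) unitary representation is again one, $U^C$ is a legitimate unitary representation of $G$ on $\H_C$, which finishes the proof.

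The substance of the argument lies entirely in the preceding lemma: a generic Stinespring dilation of a covariant channel need not be covariant, and it is only because the Kraus operators can be organised into irreducible tensor operators that the environment label $(\lambda,m,k)$ carries a clean group action. In the present step the only point needing care is the bookkeeping of the conjugation --- one must verify that it is $\overline{v^\lambda}$ (and not $v^\lambda$, $(v^\lambda)^{T}$, or $(v^\lambda)^{-1}$) that appears, so that $U^C$ is a homomorphism rather than an anti-homomorphism; everything else is routine. If an environment $B$ of dimension larger than one is desired (as in the thermodynamic applications), one simply appends to $B$ any auxiliary system carrying a unitary representation of $G$ in a $G$-invariant pure state, enlarging $C$ by a matching trivial factor, without affecting covariance.
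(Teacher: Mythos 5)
Your proposal is correct and is essentially the paper's own proof: both take a one-dimensional symmetric environment $B$, build $\H_C$ from the Kraus labels $(\lambda,m,k)$ supplied by the irreducible-tensor-operator lemma, define $V=\sum_{\lambda,m,k}K_{\lambda,m,k}\otimes|\lambda^{*},m,k\rangle\langle\sigma|$, and equip $C$ with the conjugate representation $\bigoplus_{\lambda,m}\overline{v^{\lambda}}$ (which is exactly what the paper's $(v^{\lambda}(g)_{jk})^{*}$ matrix elements encode). Your careful check that the conjugate, rather than $v^{\lambda}$ or its transpose, is what makes $U^{C}$ a genuine homomorphism is the same bookkeeping the paper performs implicitly via the $\lambda^{*}$ labelling.
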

\begin{remark}
This theorem was proved in~\cite{marvianthesis} for the case $\mH^A=\mH^{A'}$. The proof of the general case is essentially identical, and we provide the proof below for convenience.
\end{remark}
\begin{proof} 
From the previous lemma, a covariant quantum process $\E:\B(\H_A) \rightarrow \B(\H_{A'}) $ always has a Kraus decomposition $\{K_{\lambda,m,k}\}$ such that
\begin{equation}
U^{A'}(g) K_{\lambda,m,k} (U^{A})^\dagger(g)= \sum_k v^\lambda(g)_{jk} K_{\lambda,m,j},
\end{equation}
where $\lambda$ labels an irrep of $G$, $m$ is a multiplicity label and $k$ is the basis vector label of the irrep.

Let $B$ be a system with Hilbert space $\H_B = \rm{span}\{|\sigma\>\}$, with the state $|\sigma\>$ being symmetric under the action of $G$. For any pair $(\lambda,m)$ appearing in the Kraus decomposition of $\E$, let $\mathcal{W}_{(\lambda^*,m)}$ be a Hilbert space isomorphic to the $\lambda^*$-irrep of $G$ and for which we choose a basis $\{|\lambda^*,m,k\>\}_k$. We define $\H_C :=\H_B \bigoplus_{(\lambda,m)} \mathcal{W}_{(\lambda^*,m)}$, where the direct sum ranges over all $(\lambda,m)$ occurring in the Kraus decomposition of $\E$. The space $\H_C$ carries the unitary group action
\begin{equation}
U^C(g) = |\sigma\>\<\sigma| \bigoplus_{(\lambda, m)} \sum_{j,k}(v^{\lambda}(g)_{jk})^* |\lambda^*,m,j\>\<\lambda^*,m,k|,
\end{equation}
where $(v^\lambda(g)_{jk})$ are the unitary matrix components of the irrep $\lambda$ of $G$.

We define the operator $V: \H_A \otimes \H_B \rightarrow \H_{A'} \otimes \H_C$ as
\begin{equation}
V := \sum_{\lambda, m, k} K_{\lambda,m,k} \otimes |\lambda^*, m,k\>\<\sigma|.
\end{equation}
Using that the $\{K_{\lambda,m,k}\}_k$ transform irreducibly under the action of $G$, together with the fact that $(v^\lambda(g)_{jk})$ is a unitary matrix, it is readily verified that Equation (\ref{cov-V}) holds for all $g \in G$, and so $V$ is covariant under the action of $G$. Moreover since $\sum_{\lambda,m,k} K_{\lambda,m,k}^\dagger K_{\lambda,m,k} = \I^A$, and $\{|\lambda^*,m,k\>\}_{\lambda,m,k}$ is an orthonormal set of states we have that $V^\dagger V = \I^A \otimes |\sigma\>\<\sigma|^B$ and so $V$ is an isometry from $\H_A \otimes \H_B$ into $\H_{A'}\otimes \H_C$ . Finally, we have that
\begin{equation}
\E(\rho^A) = \tr_C V (\rho^A \otimes |\sigma \>\< \sigma|^B) V^\dagger,
\end{equation}
and so have constructed the required dilation for the covariant quantum process $\E$.
\end{proof}
The following lemma clarifies that any mixed symmetric state can always be purified to a pure quantum state that is also symmetric under the group action.
\begin{lemma}\label{pure-symmetric} Consider a quantum system $A$, carrying a unitary representation $U^A:G \rightarrow \B(\H^A)$, and a mixed quantum state $\sigma^A$ for which $\U^A_g(\sigma^A) = \sigma^A$ for all $g\in G$. Then, there exists a purification $|\psi^{AB}\>$ of $\sigma^A$ onto a composite system $AB$, and a unitary representation $V^B:G \rightarrow \B(\H^B)$
such that $U^{A}_{g} \otimes V^{B}_{g} |\psi^{AB}\> = |\psi^{AB}\>$ for all $g\in G$.
\end{lemma}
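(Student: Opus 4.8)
The plan is to exploit the canonical purification together with the observation that the unnormalised maximally entangled vector is a fixed point of $U\otimes\overline{U}$. First I would take $B$ to be a copy of $A$, fix an orthonormal basis $\{\ket{i}\}$ of $\H^A$ and the corresponding basis of $\H^B$, let $\ket{\Phi^{AB}}=\sum_i\ket{i}^A\otimes\ket{i}^B$ be the unnormalised maximally entangled vector in this basis, and propose as purification
\[
\ket{\psi^{AB}}:=\left(\sqrt{\sigma^A}\otimes\I^B\right)\ket{\Phi^{AB}}.
\]
Using $\tr_B\ket{\Phi^{AB}}\bra{\Phi^{AB}}=\I^A$ one checks immediately that $\langle\psi^{AB}|\psi^{AB}\rangle=\tr\sigma^A=1$ and $\tr_B\ket{\psi^{AB}}\bra{\psi^{AB}}=\sqrt{\sigma^A}\,\I^A\,\sqrt{\sigma^A}=\sigma^A$, so $\ket{\psi^{AB}}$ is a genuine purification. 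For the representation on $B$ I would take the complex-conjugate representation with respect to the fixed basis, $V^B_g:=\overline{U^A_g}$; since conjugation in a fixed basis is a basis-preserving anti-linear involution, $g\mapsto\overline{U^A_g}$ is again a unitary representation of $G$, and it is strongly continuous whenever $U^A$ is.

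The only computation is then to verify the invariance $(U^A_g\otimes V^B_g)\ket{\psi^{AB}}=\ket{\psi^{AB}}$. I would use the transpose (``ricochet'') identity $(\I^A\otimes Y)\ket{\Phi^{AB}}=(Y^T\otimes\I^B)\ket{\Phi^{AB}}$, which for $Y=\overline{U^A_g}$ gives $(\I^A\otimes\overline{U^A_g})\ket{\Phi^{AB}}=((U^A_g)^\dagger\otimes\I^B)\ket{\Phi^{AB}}$, together with the fact that the $G$-invariance $U^A_g\sigma^A(U^A_g)^\dagger=\sigma^A$ passes to the positive square root, $U^A_g\sqrt{\sigma^A}(U^A_g)^\dagger=\sqrt{U^A_g\sigma^A(U^A_g)^\dagger}=\sqrt{\sigma^A}$. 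Then
\[
(U^A_g\otimes\overline{U^A_g})\ket{\psi^{AB}}=\left(U^A_g\sqrt{\sigma^A}(U^A_g)^\dagger\otimes\I^B\right)\ket{\Phi^{AB}}=\left(\sqrt{\sigma^A}\otimes\I^B\right)\ket{\Phi^{AB}}=\ket{\psi^{AB}},
\]
which is exactly the claim, with $B$ the required ancilla and $V^B$ the required representation.

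A conceptually cleaner but less explicit alternative would be to take $\dim\H^B=\operatorname{rank}(\sigma^A)$ and any purification $\ket{\psi^{AB}}$ of maximal Schmidt rank: since $(U^A_g\otimes\I^B)\ket{\psi^{AB}}$ still has reduced state $\sigma^A$ on $A$, it is again a purification of $\sigma^A$, so $(U^A_g\otimes\I^B)\ket{\psi^{AB}}=(\I^A\otimes W_g)\ket{\psi^{AB}}$ for a \emph{unique} unitary $W_g$ on $B$; uniqueness forces $g\mapsto W_g$ to be an anti-homomorphism (and continuous), whence $V^B_g:=W_g^\dagger$ is the desired representation and the invariance is automatic. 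I would nonetheless present the explicit construction as the main proof, since it sidesteps the uniqueness-of-purifications argument and the attendant continuity bookkeeping. There is no genuine obstacle here: the only points needing a moment of care are the index bookkeeping in the transpose identity, and the routine facts that the conjugate of a (continuous) unitary representation is again one and that $U\sqrt{\sigma}\,U^\dagger=\sqrt{U\sigma U^\dagger}$ for unitary $U$.
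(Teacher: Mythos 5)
Your proof is correct and follows essentially the same route as the paper's: both use the canonical purification $(\sqrt{\sigma^A}\otimes\I^B)\ket{\Phi^{AB}}$ with the conjugate representation on the ancilla, the only difference being that you observe directly that $U^A_g\sqrt{\sigma^A}(U^A_g)^\dagger=\sqrt{\sigma^A}$ and apply the ricochet identity, whereas the paper reaches the same conclusion via a polar decomposition of $U_g\sigma^{1/2}$ (and works on the support of $\sigma^A$ rather than a full copy of $A$, which is immaterial). Your version is, if anything, slightly more streamlined.
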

\begin{proof} 
Let $\{|j\ra^A\}_{j=1}^{r}$ be an orthonormal basis of the support subspace of $\sigma^A$, where $r$ is the rank of $\sigma^A$.
Let
\be
|\psi\ra^{AB}=\sigma^{1/2}\otimes I^B|\phi_{+}\ra^{AB}\quad;\quad|\phi_{+}\ra^{AB}\equiv\sum_{j=1}^{r}|j\ra^A|j\ra^B
\ee
be a purification of $\sigma^{A}$. For any $g\in G$, the complex matrix $A_g\equiv U_{g}\sigma^{1/2}$ has a polar decomposition of the form $A_g=P_gV_{g}$, where $P_g=\sqrt{A_gA_{g}^{\dag}}\geq 0$ and $V_{g}^{T}$ is an $r\times r$ unitary matrix. However, since we assume that $\sigma$ is symmetric, we get that for all $g\in G$
\be
P_g=\sqrt{A_gA_{g}^{\dag}}=\sqrt{U_g\sigma U_{g}^{\dag}}=\sigma^{1/2}\;.
\ee
That is, for all $g$
\be
U_{g}\sigma^{1/2}=\sigma^{1/2}V_{g}\;.
\ee
The above relation implies that $V_g$ is a group representation. To see it, note that
\be
U_{gh}\sigma^{1/2}=U_gU_{h}\sigma^{1/2}=U_g\sigma^{1/2}V_{h}=\sigma^{1/2}V_gV_{h}
\ee
and on the other hand, by definition of $V$,
\be
U_{gh}\sigma^{1/2}=\sigma^{1/2}V_{gh}\;.
\ee
Therefore, $V$ is a group representation. Finally, note that
\begin{align}
U_g\otimes\bar{V}_g|\psi\ra^{AB}&=U_g\sigma^{1/2}\otimes\bar{V}_g|\phi_{+}\ra^{AB}\nonumber\\
&=U_g\sigma^{1/2}V_{g}^{\dag}\otimes I^B|\phi_{+}\ra^{AB}\nonumber\\
&=\sigma^{1/2}V_gV_{g}^{\dag}\otimes I^B|\phi_{+}\ra^{AB}=|\psi\ra^{AB}\;,
\end{align}
where we used the property that $I\otimes X|\phi_{+}\ra=X^{T}\otimes I|\phi_{+}\ra$ for any complex matrix $X$. 
Since $V$ is a unitary representation of $G$, so is $\bar{V}$.
This completes the proof.
\end{proof}

\section{Generalized Thermal Processes}

We prove the general result in the presence of thermodynamic observables $\{H^A, X_1^A, \dots, X_n^A\}$, which may have non-trivial commutation relations between them. The case on the Hamiltonian being the only thermodynamic observable follows as a special case of this result.

Assumptions (A1) and (A2), together with the requirement that the resource theory be non-trivial in these observables implies that the free state must take the form of the generalized Gibbs ensemble $\gamma^A$,
\begin{equation}
\gamma^A = \frac{1}{\mathcal{Z}} e^{-\beta (H^A - \sum_k \mu_k X_k^A)}
\end{equation}
for constants $\beta, \mu_1, \dots, \mu_n$. This is picked out by a complete passivity in which one has additional access to an ordered macroscopic `bath' for each observable that can give or take arbitrary amounts of that observable. Given an unbounded number of copies of the free state one wishes to know if one can trivialise the theory in terms of providing an arbitrary displacement for any of these observables. However in the presence of thermodynamic constraints, these are coupled in such a way that one must only consider an ``effective" energy bath with Hamiltonian $\tilde{H} = H - \sum_k \mu_k X_k$. Complete passivity with respect to this observable implies the above generalized Gibbs state through standard arguments.

We now give a precise statement of assumption (A3) in the context of thermodynamic observables $\{H^S, X^S_1, \dots, X^S_n\}$ for any quantum system $S$. We first note there are two components to any TP process $\E$ at the microscopic level: the particular interactions between $A$ and an auxiliary system $B$, and the state $\sigma^B$ of the auxiliary system. Under assumption (A1) there are no couplings present between eigenspaces of different eigenvalues of the additively conserved observables, however this does not mean that coherence cannot be injected into $A$. Assumption (A3) places a minimal constraint on the use of coherence sources outside of $A$. 

\textbf{(A3)  (Incoherence)} If the thermodynamically free process $\E:\B(\H_A) \rightarrow \B(\H_{A'})$ is realised microscopically as
\begin{equation}\label{int-1}
\E(\rho^A) = \tr_C V(\rho^A \otimes \sigma^B) V^\dagger,
\end{equation}
with $V$ obeying Equation (\ref{conservation-law}) then $\E$ is also achievable if we replace $\sigma^B$ with $\G(\sigma^B)$ where
\begin{equation}
\G(\sigma^B) := \int dg U_B(g) \sigma^B U_B(g)^\dagger,
\end{equation}
where $U_B$ is the group representation on $B$ generated by the observables $\{H^B, X_1^B, \dots ,X_n^B\}$, and we interact this state with $A$ through some potentially different isometry $W$ that also obeys (\ref{conservation-law}).
\begin{equation}\label{int-2}
\E(\rho^A) = \tr_C W(\rho^A \otimes \G(\sigma^B)) W^\dagger.
\end{equation}

\begin{lemma} Given a set of thermodynamic observables $\{H^S, X^S_1, \dots, X^S_n\}$ for any quantum system $S$, the set TP of quantum processes from $A$ into $A'$ defined by (A1-A3) coincides with the set GPC of Gibbs-preserving processes on $A$ that are covariant under the group $G$ generated by the thermodynamic observables on $A$ and $A'$.\end{lemma}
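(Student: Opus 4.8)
The plan is to prove the two inclusions $\mathrm{TP}\subseteq\mathrm{GPC}$ and $\mathrm{GPC}\subseteq\mathrm{TP}$ separately. Throughout I will take the symmetry group $G$ generated by $\{H^S,X^S_1,\dots,X^S_n\}$ (for the input, output, and environment systems) to be compact; for a finite-dimensional system with a generic, incommensurate spectrum this is not literally true, but as discussed in the main text one replaces the one-parameter time-translation group by a large finite cyclic group $\mathbb{Z}_N$ with a common time-step $\epsilon$ and replaces (A3) by its slightly weaker variant (A3$'$), after which everything below goes through verbatim. The two external inputs I will use are: (i) the covariant Stinespring theorem proved above, which dilates any $G$-covariant channel using a \emph{pure} $G$-invariant environment state and a $G$-covariant isometry; and (ii) the fact that an isometry $V$ satisfies the microscopic conservation laws~(\ref{conservation-law}) if and only if it intertwines the representations, $V(U^A(g)\otimes U^B(g))=(U^{A'}(g)\otimes U^C(g))V$ for all $g\in G$ --- one direction by differentiating at the identity, the other by exponentiating the Lie-algebra relations along a generating set (for $G=\mathbb{Z}_N$ this is immediate, since conservation of the single generator already yields the whole group).

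For $\mathrm{TP}\subseteq\mathrm{GPC}$: let $\E$ be defined by (A1)--(A3). By (A2), together with the identification of the free state with the generalized Gibbs ensemble $\gamma^S$ forced by (A1)--(A2) and non-triviality, $\E$ maps the free state of $A$ to the free state of $A'$, i.e.\ $\E(\gamma^A)=\gamma^{A'}$, so $\E$ is Gibbs-preserving. Next, (A1) supplies a dilation $\E(\rho^A)=\tr_C V(\rho^A\otimes\sigma^B)V^\dagger$ with $V$ conservation-obeying, and (A3) upgrades it to $\E(\rho^A)=\tr_C W(\rho^A\otimes\G(\sigma^B))W^\dagger$ with $W$ conservation-obeying and the environment state $\G(\sigma^B)$ now $G$-invariant. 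Invoking (ii), $W$ intertwines the representations; then conjugating the dilation by $U^A(g)$, pulling $U^A(g)\otimes U^B(g)$ through $W$, using $U^B(g)\G(\sigma^B)U^B(g)^\dagger=\G(\sigma^B)$, and discarding $U^C(g)$ under $\tr_C$ by cyclicity, yields $\E\circ\U^A_g=\U^{A'}_g\circ\E$ for all $g$. Hence $\E\in\mathrm{GPC}$.

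For $\mathrm{GPC}\subseteq\mathrm{TP}$: let $\E$ be $G$-covariant and Gibbs-preserving. I will apply the covariant Stinespring theorem to write $\E(\rho^A)=\tr_C V(\rho^A\otimes|\sigma\>\<\sigma|^B)V^\dagger$ with $|\sigma\>^B$ symmetric ($G$-invariant) and $V$ a $G$-covariant isometry. By (ii), $V$ obeys~(\ref{conservation-law}), so (A1) holds. Since $|\sigma\>^B$ is symmetric, $\G(|\sigma\>\<\sigma|^B)=|\sigma\>\<\sigma|^B$, so the same dilation (with $W=V$) already has the form required by (A3); and should (A3) be read as a constraint on \emph{all} conservation-obeying realisations, one checks by twirling over $G$ that $G$-covariance of $\E$ forces $\tr_C V'(\rho^A\otimes\sigma'^B)V'^\dagger=\tr_C V'(\rho^A\otimes\G(\sigma'^B))V'^\dagger$ for every such $V'$, so (A3) holds regardless. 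Finally $\E(\gamma^A)=\gamma^{A'}$ is exactly the equilibrium-stability demanded by (A2). Thus $\E$ satisfies (A1)--(A3), i.e.\ $\E\in\mathrm{TP}$.

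The hard part is the inclusion $\mathrm{GPC}\subseteq\mathrm{TP}$, and more precisely the requirement that an arbitrary $G$-covariant Gibbs-preserving channel be realisable with an environment that is a \emph{pure} state invariant under the \emph{full} symmetry group; this is exactly the content of the covariant Stinespring theorem (which itself rests on the purification-of-symmetric-states lemma), so the genuine work has already been localised there. The remaining delicate point is the compactness assumption on $G$: when $H^A$ and $H^{A'}$ have incommensurate spectra there is no common one-parameter group, and one must pass to the $\mathbb{Z}_N$-discretised dynamics and to (A3$'$) --- I would carry out this reduction explicitly at the outset so that the two inclusions above become purely algebraic. The direction $\mathrm{TP}\subseteq\mathrm{GPC}$ is then routine given (ii).
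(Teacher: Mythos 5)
Your proof is correct and follows essentially the same route as the paper's: the inclusion $\mathrm{TP}\subseteq\mathrm{GPC}$ via the equivalence between the conservation laws~(\ref{conservation-law}) and the intertwining property of the dilating isometry together with the $G$-invariant environment supplied by (A3), and the inclusion $\mathrm{GPC}\subseteq\mathrm{TP}$ via the covariant Stinespring theorem with a pure symmetric environment state. Your two added elaborations --- the twirling argument showing (A3) holds for \emph{every} conservation-obeying realisation, and the reduction to a compact ($\mathbb{Z}_N$) group under (A3$'$) --- are both sound and correspond to material the paper places in the adjacent supplementary lemma and the finite-precision section rather than in the proof of this lemma itself.
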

\begin{proof} We first show that $TP \subset GPC$. Assumption (A2) ensures that the image of the Gibbs state $\gamma^A$ under $TP$ is the fixed point $\gamma^{A'}$, so it suffices to establish covariance. For any system $S$ we define $X^S_0:=H^S$ so as to make notation compact. Given a process $\E \in TP$, assumption (A1) implies that 
\begin{equation}\label{proof-stines}
\E(\rho^A) = \tr_C V( \rho^A \otimes \sigma^B)V^\dagger,
\end{equation}
for some $V$ that obeys the conservation laws given by Equation (\ref{conservation-law}). In particular, this implies that 
\begin{align}
V\exp \left [ i \sum_{k=0}^n \theta_k X_k^{AB} \right ] &=  \exp \left [ i \sum_{k=0}^n \theta_k X_k^{A'C} \right ] V \\
X_k^{AB}&:=X_k^A \otimes\I^B + \I^A\otimes X_k^B \\
X_k^{A'C}&:=X_k^{A'} \otimes\I^C + \I^{A'}\otimes X_k^C 
\end{align}
for all $k=0, \dots d$ and for all $\theta_k \in \mathbb{R}$. Therefore the observables $\{X_k\}$ generate a representation of a group $G$, with elements $g$ indexed by $(\theta_0, \dots ,\theta_d)$, and $U^{A'C}(g) V = V U^{AB}(g)$ for all $g\in G$. Therefore the process sending any $\chi^{AB} \rightarrow V\chi^{AB} V^\dagger$ is $G$-covariant. As discussed, assumption (A3) says that the above $\sigma^B$ can be taken to be symmetric under this group action: $\U^B_g(\sigma^B) = \sigma^B$. Since discarding systems is $G$-covariant, and also composing of $G$-covariant processes results in a $G$-covariant process, we see that $\rho^A \rightarrow \rho^A\otimes \sigma^B \rightarrow V(\rho^A\otimes \sigma^B)V^\dagger \rightarrow  \tr_C V( \rho^A \otimes \sigma^B)V^\dagger = \E(\rho^A)$ is a $G$-covariant process for any $\E$ of the form (\ref{proof-stines}). Therefore $TP \subset GPC$.

Conversely, let $\E \in GPC$. Since $\E(\gamma^A) = \gamma^{A'}$, assumption (A2) holds automatically. Since $\E$ is $G$-covariant with respect to the group generated by $\{X_k^A\}$ as shown there exists a Stinespring dilation of the process $\E$ of the form
\begin{equation}
\E(\rho^A) = \tr_C V(\rho^A \otimes |\psi\>\<\psi|^B) V^\dagger,
\end{equation}
where $V$ is a $G$-invariant isometry and $|\psi\>^B$ is invariant under the group action on $B$. The invariance of $V$ implies that assumption (A1) holds, while the symmetry of $|\psi\>$ implies that there are no coherences between eigenspaces of the distinguished observables and so (A3) holds. Therefore $\E \in TP$, and so the two sets of processes coincide as claimed.
\end{proof}

\begin{remark}Note that Lemma \ref{pure-symmetric} shows that replacing any auxiliary $\sigma^B$ with its dephased version $\G(\sigma^B)$ as discussed in the main text is consistent with the existence of a Stinespring form in which the auxiliary system is taken to be in a pure symmetric quantum state. Also note that that we implicitly assume that the group $G$ generated by the thermodynamic observables on the input system coincides with the group generated by those on the output system, which is a basic physical requirement.
\end{remark}

If both (A1) and (A3) hold then one can establish the following.
\begin{lemma} If both  Equations (\ref{int-1}) and (\ref{int-2}) hold with $V$ and $W$ respecting the conservation law (\ref{conservation-law}), then (\ref{int-2}) also holds with $W$ replaced with $V$.
\end{lemma}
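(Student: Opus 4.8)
The plan is to leverage the fact that the conservation law~(\ref{conservation-law}) makes the isometry $V$ covariant under the action of the group $G$ generated by the thermodynamic observables: intertwining at the level of the generators exponentiates to
\[
V\,\big(U^A(g)\otimes U^B(g)\big)=\big(U^{A'}(g)\otimes U^C(g)\big)\,V\quad\forall\,g\in G .
\]
Before using this I would first record that $\E$ itself is $G$-covariant, i.e. $\E\circ\U^A_g=\U^{A'}_g\circ\E$. This is the one place where the hypothesis~(\ref{int-2}) (equivalently assumption (A3)) really enters: there $W$ obeys~(\ref{conservation-law}) and is hence covariant, the state $\G(\sigma^B)$ is by construction invariant under $U^B$, and partial trace over $C$ is covariant, so the composite process defining $\E$ in~(\ref{int-2}) is covariant.

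The core step is then a direct manipulation of~(\ref{int-1}). Fix $g\in G$ and replace $\sigma^B$ by $U^B(g)\,\sigma^B\,U^B(g)^\dagger$. Writing $\I^A\otimes U^B(g)=\big(U^A(g)\otimes U^B(g)\big)\big(U^A(g)^\dagger\otimes\I^B\big)$ and applying covariance of $V$ gives $V\big(\I^A\otimes U^B(g)\big)=\big(U^{A'}(g)\otimes U^C(g)\big)\,V\,\big(U^A(g)^\dagger\otimes\I^B\big)$; using this identity and its adjoint, and then tracing out $C$ (under which the $U^C(g)$ factors cancel by cyclicity of the partial trace), yields
\begin{align*}
\tr_C V\big(\rho^A\otimes U^B(g)\sigma^B U^B(g)^\dagger\big)V^\dagger
&=U^{A'}(g)\,\E\!\big(U^A(g)^\dagger\rho^A U^A(g)\big)\,U^{A'}(g)^\dagger\\
&=\E(\rho^A),
\end{align*}
the last equality being the covariance of $\E$ established above. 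Integrating this identity over $g$ against the normalized Haar measure and recalling $\G(\sigma^B)=\int dg\,U^B(g)\sigma^B U^B(g)^\dagger$ then gives $\tr_C V\big(\rho^A\otimes\G(\sigma^B)\big)V^\dagger=\int dg\,\E(\rho^A)=\E(\rho^A)$, which is exactly~(\ref{int-2}) with $W$ replaced by $V$.

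I do not expect a genuine obstacle here; the argument is short. The points requiring care are (i) being honest that covariance of $\E$ really does require the hypothesis~(\ref{int-2}), since the auxiliary state $\sigma^B$ appearing in the bare dilation~(\ref{int-1}) need not by itself be $U^B$-invariant, and (ii) the routine bookkeeping of which representation acts on which tensor factor when pushing $U^B(g)$ through $V$ and through $V^\dagger$. One should also note at the outset that the generator-level identities~(\ref{conservation-law}) upgrade to intertwining of the full (possibly non-abelian) group $G$, which is standard.
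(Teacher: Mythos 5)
Your proposal is correct and follows essentially the same route as the paper's proof: both use the intertwining of $V$ (exponentiated from the conservation law) to push the group action from $A,A'$ onto $\sigma^B$, invoke the covariance of $\E$ (which, as you rightly stress, comes from the $W$-dilation with the invariant state $\G(\sigma^B)$), and then average over $G$ to replace $\sigma^B$ by $\G(\sigma^B)$ inside the $V$-dilation. The only cosmetic difference is that the paper conjugates $\sigma^B$ by $U^B(g)^\dagger$ rather than $U^B(g)$, which is immaterial after the Haar integration.
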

\begin{proof} We have that
\begin{align}
\tr_C V(\rho^A \otimes \sigma^B) V^\dagger&=\tr_C W(\rho^A \otimes \G(\sigma^B)) W^\dagger
\end{align}
for all $\rho^A$ on $A$. Therefore, for any $g\in G$ we have
\begin{align}
&\hspace{-0.6in}\U_g^\dagger (\tr_C V(\U_g(\rho^A) \otimes \sigma^B) V^\dagger)\nonumber\\
&=\U_g^\dagger(\tr_C W(\U_g (\rho^A) \otimes \G(\sigma^B))) W^\dagger \nonumber \\
&= \E(\rho^A).
\end{align}
but $U_{A'}(g)^\dagger \otimes \I_C V =\I_{A'}\otimes U_C(g) V U_A(g)^\dagger \otimes U^\dagger_B(g)$. And therefore we see that
\begin{align}
\E(\rho^A) &=  (\tr_C V(\rho^A \otimes \U^\dagger_g(\sigma^B)) V^\dagger)
\end{align}
for any $g \in G$. Integrating over all $g\in G$ provides the desired result.
\end{proof}

To summarize, the state interconversion under TPs is equivalent to the following requirement: 
\begin{align}
\E(\rho^A) &= \sigma^{A'} \\
\E(\gamma^A) &= \gamma^{A'}.
\end{align}
where $\E$ is required to be a $G$-covariant process. 
 
\section{Necessary and sufficient conditions for generalized thermal processes} 
 
Using our main result for $G$-covariant interconversion, with $\{\rho_i\} = \{\rho^A, \gamma^A\}$, $\{\omega_i \} = \{\eta^R_1, \eta_2^R\}$ and $\{\sigma_i \} = \{\sigma^A, \gamma^{A'}\}$ the thermodynamic result follows immediately from the general statement, with
\begin{equation}
\Omega^{RA} = \int_G dg U(g) (q \eta^R_1 \otimes \rho^A + (1-q) \eta_1^R\otimes \gamma^A)U(g)^\dagger
\end{equation}
being the relevant bipartite state, and $U(g)$ the group generated by the observables. We also note that the $G$-twirl is defined such that $\int_G dg = 1$, and in the case of time-translations the integral $\<X\>:=\int_G dg X(g)$ is given as $\lim_{T\rightarrow \infty} \frac{1}{T} \int_{-\frac{1}{2} T}^{\frac{1}{2} T} dt X(t) $ for the time-average of any $X(t)$.

 \section{Finite precision and approximate energy incoherence}
 We can replace assumption (A3) with a slightly weaker version that takes into account thatwe only ever experimentally probe to some finite level of precision. The reason this is useful is that it avoids two technicalities: firstly that the time-translation group action is in general non-compact group $\mathbb{R}$, and secondly even if time-translation is the compact $U(1)$ Lie group no finite dimensional representations will exist in which one can encode all group elements into perfectly distinguishable quantum states. We can circumvent both of these technicalities with the following finite precision assumptions.
 
Firstly, we can always approximate any quantum system with one having finite dimension $d <\infty$, for $d$ sufficiently large. Given this finite dimension, any spectrum $\{E_1, E_2, \dots , E_d\}$ for the system's Hamiltonian $H^A$ can be approximated to an arbitrary precision by a set of rational numbers, $\{ \tilde{E}_1 = \frac{a_1}{b_1}, \dots , \tilde{E}_d = \frac{a_d}{b_d}\}$ with $ a_k, b_k \in \mathbb{Z}$ for each $k$ and $\tilde{E}_k$ arbitrarily close to $E_k$.  Thus, for simplicity we assume the Hamiltonian has a spectrum of rational numbers and so the resultant unitary dynamics $U^A(t) = \exp [ -it H^A]$ is periodic for some finite period $\tau< \infty$.
 
The mapping $t \mapsto U^A(t)$ is therefore a unitary representation of the continuous $U(1)$ group on the system $A$. We may further assume that we only ever resolve time intervals $[t_1, t_2]$ with $t_2 - t_1 \ge \epsilon $ for some small yet finite level of precision $\epsilon >0$. More formally this means that we can replace the $U(1)$ group with the discrete $\mathbb{Z}_N$ action, where $N\epsilon =\tau$ and 
\begin{equation}
n \mapsto U^A(n\epsilon) = e^{-i n\epsilon H^A},
\end{equation}
with $n = 0, 1, \dots N-1$. Therefore the dynamics of any single quantum system can always be approximated by such a discrete, finite action for some $N \in \mathbb{N}$ and sufficiently large. 

In the case that we have multiple systems $A_1, A_2, \dots ,A_M$ with periods $\tau_1, \tau_2, \dots, \tau_M$ respectively, we may choose $\tau = \prod_{k=1}^M\tau_k$ as the time-scale for the composite system. Therefore for multiple systems, there will always exist an $N \in \mathbb{N}$, sufficiently large so that the mapping $n \mapsto \exp [ - i n \epsilon H^{A_k}]$ is a unitary representation of $\mathbb{Z}_N$ on each $\H_{A_k}$, and which approximates the unitary dynamics of each $A_k$ under its Hamiltonian to the specified level of precision. Given this, condition (A3) for incoherence of thermal processes can be replaced with the following.

\textbf{($\mathbf{A3'}$) Approximate incoherence}. Consider the case of the Hamiltonian being the only thermodynamic observable, and assume the finite precision approximations described above. If the thermodynamically free process $\E:\B(\H_A) \rightarrow \B(\H_{A'})$ is realised microscopically as
\begin{equation}
\E(\rho^A) = \tr_C V(\rho^A \otimes \sigma^B) V^\dagger,
\end{equation}
with $V$ obeying Equation (\ref{conservation-law}) then we also have
\begin{equation}
\E(\rho^A) = \tr_C W(\rho^A \otimes \G_\epsilon(\sigma^B)) W^\dagger.
\end{equation}
with $\G_\epsilon(\sigma^B)$ being the group average over $\mathbb{Z}_N$ of the state $\sigma^B$ given by
\begin{equation}
\G_\epsilon(\sigma^B) :=\frac{1}{N}\sum_{n=0}^{N-1} U^B_\epsilon(n) \sigma^B U^B_\epsilon(n)^\dagger,
\end{equation}
with $U^B_\epsilon(n) := \exp[ -i n \epsilon H^B]$ is the finite precision time evolution on $B$, and we interact this state with $A$ through some potentially different isometry $W$ that also obeys (\ref{conservation-law}).

This implies that the constraint of time-translation covariance is replaced with $\mathbb{Z}_N$-covariance to this level of precision. Given this, the analysis for state interconversion may be repeated under ($A3'$) and results in the replacement of $\frac{1}{\tau}\int_0^\tau dt (\cdot) $ with $\frac{1}{N} \sum_{k=0}^{N-1} (\cdot)$ and $U^R(t)\otimes U^A(t)$ by the discrete approximation $U^R_\epsilon(n) \otimes U^A_\epsilon (n)$.
 
\section{Clock times and guessing probabilities}
As in the previous section, we may restrict our attention to a fully discrete setting with quantum systems of finite dimension and finite level of precision $\epsilon$ for time resolution. Covariance of the dynamics is now described with respect to the discrete group $\mathbb{Z}_N$ for some sufficiently large $N\in \mathbb{N}$.

For $q\rightarrow1$ we obtain the $\mathbb{Z}_N$ covariance constraint alone, and the corresponding state $\Omega^{RA}$ takes the form
\begin{equation}\label{discrete-omega}
\Omega^{RA} = \frac{1}{N} \sum_{k=0}^{N-1} U^R_\epsilon(n)\eta^R_1 (U^R_\epsilon(n))^\dagger \otimes U^A_\epsilon(n) \rho^A (U^A_\epsilon(n))^\dagger.
\end{equation}
For a sufficiently large reference frame $R$ there exists a Hamiltonian $H^R$ such that $R$ allows a perfect encoding of the group elements of $G=\mathbb{Z}_N$. In particular for $\rm{dim}(\H_R) = N$ with orthonormal basis $\{|E_k\>^R\}$, we can choose
\begin{equation}
U^R_\epsilon(1) = \sum_{k=0}^{N-1} \omega^k | E_k\>\<E_k|^R,
\end{equation}
where $\omega := e^{\frac{2\pi i}{N}}$ is an $N^{\rm th}$ root of unity. We then have that $(U^R_\epsilon(1))^n = U^R_\epsilon(n)$ for any $n=1, 2, \dots$ and $U^R_\epsilon( N) = U^R_\epsilon(0) = \I^R$ as required. 

Defining $|k\>^R := F|E_k\>^R$, with $F$ being the discrete Fourier transform operator
\begin{equation}
F = \frac{1}{\sqrt{N}}\sum_{i,j=0}^{N-1} \omega^{ij} |E_i\>\<E_j|^R ,
\end{equation}
 it is readily seen that 
\begin{equation}
U^R_\epsilon(n) |0\>^R = | n\>^R,
\end{equation}
and $\<n | m\>^R = 0$ for $n \ne m$ and equal to $1$ for $n=m$. Therefore the reference system $R$ provides a perfect classical encoding of the group elements of $\mathbb{Z}_N$ in the pure states $\{|k\>^R\}$.

Setting $\eta_1^R = |0\>\<0|^R$ in equation \ref{discrete-omega} gives the classical-quantum state
\begin{equation}
\Omega^{RA} =  \frac{1}{N} \sum_{k=0}^{N-1} |k\>\<k|^R \otimes  \rho^A (n).
\end{equation}
where we define $\rho^A(n) := U^A_\epsilon(n) \rho^A (U^A_\epsilon(n))^\dagger$ for the state of $A$ at time $t= n \epsilon$. These states fully encode the set of \emph{clock times} $t=0, \epsilon, \dots ,n\epsilon, \dots, (N-1)\epsilon$ for the joint system.

Since $\Omega^{RA}$ is a classical-quantum state, we have that~\cite{Kon09}
\begin{equation}
H_{\rm min} (R|A)_\Omega = -\log p_{\rm guess},
\end{equation}
where $p_{\rm guess}$ is the optimal Helstrom guessing probability for the ensemble of states $\{(\frac{1}{N}, \rho^A(n))\}_{n=0}^{N-1}$ on $A$. This implies that $2^{-H_{\rm min}(R|A)_\Omega}$ is the optimal guessing probability of the clock time $t=n\epsilon$ for the joint system, given the single copy of $\rho^A$. Monotonicity of $H_{\rm min}(R|A)_\Omega$ under the thermal processes implies monotonicity of the clock time guessing probability for the system.

\section{Reformulation of $H_{\rm min}(R|A)_\Omega$ in terms of the $L_2$-norm}
The clock time guessing probability condition applies for the case where the reference $R$ permits a perfect classical encoding of the clock times, and where we choose $\eta_1^R$ to be one of the clock states. It is natural to ask if this interpretation applies if one perturbs around these assumptions. 

To provide partial insight into this, we can exploit the fact that $\Omega^{RA}$ is separable, and reformulate $H_{\rm min}(R|A)_\Omega$ in terms of a minimization involving a norm distance between the orbit of $\rho^A$ under its Hamiltonian and a reference orbit obtained from $\eta_1^R$. Informally, this can be viewed as a synchronisation task. Note that similar norm expressions arise in the theory of equilibration and so tools from that area may be of use for future analysis.

 Indexing $(\rho_1, \rho_2) =(\rho^A, \gamma^A$) and $q_1 =q$, $q_2 = 1-q$, we can express the entropy as
\begin{align}
&2^{-H_{\rm min}(R|A)_\Omega}\\ 
&= d_{A'}\max_{\mE:\textrm{ covar. CPTP}} \tr [ id_R \otimes \E (\Omega^{RA}) \phi^+] \nonumber\\
&= \max_{\mE:\textrm{ covar. CPTP}}\sum_i q_i \int dg \tr [ (\U^*_g( \eta_i))^T\E(\U_g (\rho_i)] \nonumber \\
&= \max_{\mE:\textrm{ covar. CPTP}}\sum_i q_i \int dg \tr [ (\U_g( \eta_i^T))\E(\U_g (\rho_i)] \nonumber \\
&= \max_{\mE:\textrm{ covar. CPTP}}\sum_i q_i \int dg \tr [ ( \eta_i(g))\E( (\rho_i(g))] \nonumber \\
\end{align}
where we define $\eta_i(g) := \U_g(\eta_i^T)$ and $\rho_i(g) := \U_g(\rho_i)$. Note that $\eta_i^T$ is a quantum state if and only if $\eta_i$ is a quantum state, and so we can simply replace $\eta_i^T \rightarrow \eta_i$ without affecting the result. 

We next use  $\|X - Y\|_2^2= \tr((X-Y)^2) = \tr X^2 + \tr Y^2 - 2 \tr XY$ for any Hermitian operators $X,Y$. And noting that $\tr (\eta_i(g)^2) = \tr \U_g(\eta_i^2) = \tr ( \eta_i^2)$, this implies that
\begin{align}
&H_{\rm min}(R|A)_\Omega =1- \log \left [ \sum_i \tr q_i\eta_i^2 +\right .\nonumber\\
&\left . \hspace{-0.4cm}+\hspace{-0.6cm}\max_{\mE:\textrm{ covar. CPTP}} [ \sum_i q_i \tr (\E(\rho_i)^2) - \int dg \sum_i q_i ||\eta_i(g) - \E(\rho_i(g))||_2^2] \right ]
\end{align}
Defining $\mathcal{P}(\boldsymbol{\eta}) := \sum_i q_i \tr (\eta_i^2)$ for the average purity of an ensemble of states  $\{(q_i,\eta_i)\}$ we have
\begin{align}
&H_{\rm min}(R|A)_\Omega =  \nonumber\\
&1 -\log \Big[ \mathcal{P}(\boldsymbol{\eta}) -\hspace{-0.6cm}\min_{\mE:\textrm{ covar. CPTP}}\Big[\nonumber \\
&\ \ \ \ \ \ \ \ \int_G dg \sum_i q_i \|\eta_i(g) - \E(\rho_i(g))\|_2^2-\int_G dg\mathcal{P}(\boldsymbol{\E(\rho(g))})\Big]\Big].
\end{align}
Since the optimal $\E$ is covariant we have that $\int_G dg  \mathcal{P}(\boldsymbol{\E(\rho(g))}) = \mathcal{P}(\boldsymbol{\E(\rho)})$. In the case of only time-translation we obtain
\begin{align}
&H_{\rm min}(R|A)_\Omega = 1-\log [\mathcal{P}(\boldsymbol{\eta}) \nonumber\\
&-\hspace{-0.6cm}\min_{\mE:\textrm{ covar. CPTP}}[\sum_i q_i \<||\eta_i(t) - \E(\rho_i(t))||_2^2\> -\mathcal{P}(\boldsymbol{\E(\rho)})]].
\end{align}
which expresses the conditional entropy in terms of a minimization of ensemble square distance in the 2-norm with an added purity constraint on the output states. However the covariance of $\E$ and the unitary invariance of the norm also imply that $ ||\eta_i(t) - \E(\rho_i(t))||_2 = ||\eta_i - \E(\rho_i)||_2  = \<||\eta_i (t) - \E(\rho_i(t))||_2\>$ and so $H_{\rm min}(R|A)_\Omega$ can be obtained by computing at any fixed external time $t$, which is consistent with the global state $\Omega^{RA}$ being invariant under time-translation.

\end{document}